\DeclareMathOperator*{\argmin}{\arg\min}
\theoremstyle{definition}
\newtheorem{defn}{Definition}[section]
\theoremstyle{plain}
\newtheorem{thm}[defn]{Theorem}
\newtheorem{cor}[defn]{Corollary}
\newtheorem{lem}[defn]{Lemma}
\theoremstyle{remark}
\newtheorem{rem}[defn]{Remark}
\newtheorem{exs}[defn]{Examples}
\numberwithin{equation}{section}
\newcommand\me{\mathrm{e}}
\newcommand\mb{\mathbb}
\newcommand\mE{\mathbb{E}}
\newcommand\mP{\mathbb{P}}
\newcommand\mF{\mathcal{F}}
\begin{document}



\title{\textbf{High-dimensional Simultaneous Inference on Non-Gaussian VAR Model via De-biased Estimator}}
\author{Linbo Liu and Danna Zhang\\Department of Mathematics, UC San Diego}
\date{October 26, 2021}
\maketitle
\begin{abstract}
	Simultaneous inference for high-dimensional non-Gaussian time series is always considered to be a challenging problem. Such tasks require not only robust estimation of the coefficients in the random process, but also deriving limiting distribution for a sum of dependent variables. In this paper, we propose a multiplier bootstrap procedure to conduct simultaneous inference for the transition coefficients in high-dimensional non-Gaussian vector autoregressive (VAR) models. This bootstrap-assisted procedure allows the dimension of the time series to grow exponentially fast in the number of observations.  As a test statistic, a de-biased estimator is constructed for simultaneous inference. Unlike the traditional de-biased/de-sparsifying Lasso estimator, robust convex loss function and normalizing weight function are exploited to avoid any unfavorable behavior at the tail of the distribution. 
	We develop Gaussian approximation theory for VAR model to derive the asymptotic distribution of the de-biased estimator and propose a multiplier bootstrap-assisted procedure to obtain critical values under very mild moment conditions on the innovations. 
	As an important tool in the convergence analysis of various estimators,  we establish a Bernstein-type probabilistic concentration inequality for bounded VAR models. Numerical experiments verify the validity and efficiency of the proposed method.
\end{abstract}

\section{Introduction}
High-dimensional statistics become increasingly important due to the rapid development of information technology in the past decade. In this paper, we are primarily interested in conducting simultaneous inference via de-biased $M$-estimator on the transition matrices in a high-dimensional vector autoregressive model with non-Gaussian innovations. An extensive body of work has been proposed on estimation and inference on the coefficient vector in linear regression setting and we refer readers to \cite{buhlmann2011statistics} for an overview of recent development in high-dimensional statistical techniques. $M$-estimator is one of the most popular tools among them, which has been proved a success in signal estimation (\cite{negahban2012unified}), support recovery (\cite{loh2017support}), variable selection (\cite{zou2006adaptive}) and robust estimation with heavy-tailed noises using nonconvex loss functions (\cite{loh2017statistical}). As a penalized $M$-estimator, Lasso (\cite{tibshirani1996regression}) also plays an important role in estimating transition coefficients in high-dimensional 
VAR models beyond linear regression; see for example \cite{hsu2008subset}, \cite{nardi2011autoregressive}, \cite{basu2015regularized} among others. Another line of work is to achieve such estimation tasks by Dantzig selector (\cite{candes2007dantzig}). \cite{han2015direct} proposed a new approach to estimating the transition matrix via Dantzig-type estimator and solved a linear programming problem. They remarked that this estimation procedure enjoys many advantages including computational efficiency and weaker assumptions on the transition matrix. However, the aforementioned literature mainly discussed the scenario where Gaussian or sub-Gaussian noises are in presence. 

To deal with the heavy-tailed errors, regularized robust methods have been widely studied. For instance, \cite{li20081} proposed an $\ell_1$-regularized quantile regression method in low dimensional setting and devised an algorithm to efficiently solve the proposed optimization problem. \cite{wu2009variable} studied penalized quantile regression from the perspective of variable selection. However, quantile regression and least absolute deviation regression can be significantly different from the mean function, especially when the distribution of noise is asymmetric. To overcome this issue, \cite{fan2017estimation} developed robust approximation Lasso (RA-Lasso) estimator based on penalized Huber loss and proved the feasibility of RA-Lasso in estimation of high-dimensional mean regression. Apart from linear regression setting, \cite{zhang2019robust} also used Huber loss to obtain a consistent estimate of mean vector and covariance matrix for high-dimensional time series. Also, robust estimation of the transition coefficients was studied in \cite{liu2021robust} via two types of approaches: Lasso-based and Dantzig-based estimator.

Besides estimation, recent research effort also turned to high-dimensional statistical inference, such as performing multiple hypothesis testing and constructing simultaneous confidence intervals, both for regression coefficients and mean vectors of random processes. To tackle the high dimensionality, the idea of low dimensional projection was exploited by numerous popular literature. For instance, \cite{javanmard2014confidence}, \cite{van2014asymptotically}, \cite{zhang2014confidence} constructed de-sparsifying Lasso by inverting the Karush-Kuhn-Tucker (KKT) condition and derived asymptotic distribution for the projection of high-dimensional parameters onto fixed-dimensional space. As an extension of the previous techniques, \cite{loh2018scale} proposed the asymptotic theory of one-step estimator, allowing the presence of non-Gaussian noises. Employing Gaussian approximation theory (\cite{chernozhukov2013gaussian}), \cite{zhang2017simultaneous} proposed a bootstrap-assisted procedure to conduct simultaneous statistical inference, which allowed the number of testing to greatly surpass the number of observations as a significant improvement. Although a huge body of work has been completed for the inference of regression coefficients, there have been limited research on the generalization of these theoretical properties to time series, perhaps due to the technical difficulty when generalizing Gaussian approximation results to dependent random variables. \cite{zhang2017gaussian} adopted the framework of functional dependence measures (\cite{wu2005nonlinear}) to account for temporal dependency and provided Gaussian approximation results for general time series. They also showed, as an application, how to construct simultaneous confidence intervals for mean vectors of high-dimensional random processes with asymptotically correct coverage probabilities. 

In this paper, we consider simultaneous inference of transition coefficients in possibly non-Gaussian vector autoregressive (VAR) models with lag $d$:
\begin{equation*}
	X_i = A_1X_{i-1}+A_2X_{i-2}+\dots+A_dX_{i-d}+\varepsilon_i,\quad i=1,\dots,n,
\end{equation*}
where $X_i\in\mb{R}^p$ is the time series, $A_i\in\mb{R}^{p\times p},\,i=1,\dots,d$ are the transition matrices, and $\varepsilon_i\in\mb{R}^p$ are the innovation vectors. We allow the dimension $p$ to exceed the number of observations $n$, or even $\log p=o(n^b)$ for some $b>0$, as is commonly assumed in high-dimensional regime. Different from many other work, we do not impose Gaussianity or sub-Gaussianity assumptions on the noise terms $\varepsilon_i$. 

We are particularly interested in the following simultaneous testing problem:
$$H_0:A_i=A_i^0,\quad \text{for all }i=1,\dots, d$$
versus the alternative hypothesis 
$$H_1:A_i\neq A_i^0,\quad\text{for some }i=1,\dots,d.$$
It's worth mentioning that the above problems still have $p^2$ null hypotheses to verify even if the lag $d=1$. We propose to build a de-biased estimator $\check\beta$ from some consistent pilot estimator $\widehat\beta$ (for example, the one provided in \cite{liu2021robust}). There are a few challenges when we prove the feasibility of de-biased estimator as well as its theoretical guarantees: (i) VAR models display temporal dependency across observations, which makes the majority of probabilistic tools such as classic Bernstein inequality and Gaussian approximation inapplicable. (ii) Fat-tailed innovations $\varepsilon_i$ imply fat-tailed $x_i$ in VAR model, while robust methods regarding linear regression can assume $\varepsilon_i$ to have heavy-tail but $x_i$ remains sub-Gaussian (\cite{fan2017estimation} and \cite{zhang2017simultaneous}). (iii) We hope our simultaneous inference procedure to work in ultra-high dimensional regime, where $p$ can grow exponentially fast in $n$. As a result, these challenges inspire us to establish a new Bernstein-type inequality (section \ref{estimation}) and Gaussian approximation  results (section \ref{GA}) under the framework of VAR model. Also, we will adopt the definition of spectral decay index to capture the dependency among time series data, as in \cite{liu2021robust}.

The paper is organized as follows. In section \ref{main}, we first present more details and some preparatory definitions of VAR models and propose the test statistics for simultaneous inference via de-biased estimator, which is constructed through a robust loss function and a weight function on $x_i$. The main result delivering critical values for such test statistics by multiplier bootstrap is given in section \ref{thm}. In section \ref{estimation}, we complete the estimation of multiple statistics by establishing a Bernstein inequality. A thorough discussion of Gaussian approximation and its derivation under VAR model are presented in section \ref{GA}. Some numerical experiments are conducted in section \ref{num} to assess the empirical performance of the multiplier bootstrap procedure.

Finally, we introduce some notation. For a vector $\beta=(\beta_1,\dots,\beta_p)^\top$, let $|\beta|_1=\sum_{i}|\beta_i|$, $|\beta|_2=(\,{\sum_{i}\beta_i^2}\,)^{1/2}$ and  $|\beta|_\infty=\max_{i}|\beta_i|$ be its $\ell_1,\ell_2,\ell_\infty$ norm respectively. For a matrix $A=(a_{ij})_{1\leq i,j\leq p}$
, let $\lambda_i,\,i=1,\dots,p$, be its eigenvalues and $\lambda_{\max}(A),\,\lambda_{\min}(A)$ be its maximum and minimum eigenvalues respectively. Also let $\rho(A)=\max_{i}|\lambda_i|$ be the spectral radius. Denote $\|A\|_1 = \max_j \sum_i|a_{ij}|$, $\|A\|_\infty = \max_i\sum_{j} |a_{ij}|$, and spectral norm $\|A\| = \|A\|_2 = \sup_{|x|_2 \neq 0}|Ax|_2/|x|_2$. Moreover, let $\|A\|_{\max}=\max_{i,j}|a_{ij}|$ be the entry-wise maximum norm.  For a random variable $X$ and $q>0$, define $\|X\|_q=(\mE[X^q])^{1/q}$. For two real numbers $x,y$, set $x\lor y=\max(x,y)$. For two sequences of positive numbers $\{a_n\}$ and $\{b_n\}$, we write $a_n\lesssim b_n$ if there exists some constant $C>0$, such that $a_n/b_n\leq C$ as $n\to\infty$, and also write $a_n\asymp b_n$ if $a_n\lesssim b_n$ and $b_n\lesssim a_n$. We use $c_0,c_1,\dots$ and $C_0,C_1,\dots$ to denote some universal positive constants whose values may vary in different context. Throughout the paper, we consider the high-dimensional regime, allowing the dimension $p$ to grow with the sample size $n$, that is, we assume $p=p_n\to\infty$ as $n\to\infty$.


\section{Main Results}\label{main}
\subsection{Vector autoregressive model}
Consider a VAR(d) model:
\begin{equation}
	X_i = A_1X_{i-1}+A_2X_{i-2}+\dots+A_dX_{i-d}+\varepsilon_i,\quad i=1,\dots,n,
\end{equation}
where $X_i=(X_{i1}, X_{i2}, \dots, X_{ip})\in\mb{R}^p$ is the random process of interests, $A_i\in\mb{R}^{p\times p}$, $i=1,\dots,d$, are the transition matrices and $\varepsilon_i$, $i\in\mb{Z}$, are i.i.d. innovation vectors with zero mean and symmetric distribution, i.e. $\varepsilon_i=-\varepsilon_i$ in distribution, for all $i\in\mb{Z}$. By a rearrangement of variables, VAR(d) models can be formulated as VAR(1) models (see \cite{liu2021robust}). Therefore, without loss of generality, we shall work with VAR(1) models:
\begin{equation}\label{var1}
	X_i = AX_{i-1}+\varepsilon_i,\quad i=1,\dots,n.
\end{equation}
This type of random process has a wide range of application, such as finance development (\cite{shan2005does}), economy (\cite{juselius2006cointegrated}) and exchange rate dynamics (\cite{wu2010var}).

To ensure model stationarity, we assume that the spectral radius $\rho(A)<1$ throughout the paper, which is also the sufficient and necessary condition for a VAR(1) model to be stationary. However, a more restrictive condition that $\|A\|<1$ is always assumed in most of the earlier work. See for example, \cite{han2015direct}, \cite{loh2012high} and \cite{negahban2011estimation}. For a non-symmetric matrix $A$, it could happen that $\|A\|\geq1$ while $\rho(A)<1$. To fill the gap between $\rho(A)$ and $\|A\|$, \cite{basu2015regularized} proposed stability measures for high-dimensional time series to capture temporal and cross-section dependence via the spectral density function 
$$f_X(\theta)=\frac1{2\pi}\sum_{\ell=-\infty}^\infty\Gamma_X(\ell)\me^{-i\ell\theta},\quad\theta\in[-\pi,\pi],$$
where $\Gamma_X(\ell)=\text{Cov}(X_i, X_{i+\ell}),~i,\ell\in\mb{Z}$ is the autocovariance function of the process $\{X_i\}$. In a more recent work, \cite{liu2021robust} defined spectral decay index to connect $\rho(A)$ with $\|A\|$ from a different point of view. In this paper, we will adopt the framework of spectral decay index in \cite{liu2021robust}.
\begin{defn}\label{defn2.1}
	For any matrix $A\in\mb{R}^{p\times p}$ such that $\rho(A)<1$, define the \emph{spectral decay index} as 
	\begin{equation}\label{tau}
	\tau=\min\{t\in\mb{Z}^+: \|A^t\|_{\infty}<\rho\}
	\end{equation}
	 for some constant $0<\rho<1$.
\end{defn}
\begin{rem}
	Note that in \eqref{tau}, we use $L_\infty$ norm, while spectral norm is considered in \cite{liu2021robust}. However, the spectral decay index shares many properties even if defined in different matrix norms. Some of them are summarized as follows. For any matrix $A$ with $\rho(A)<1$, finite spectral decay index $\tau$ exists. 
	In general, $\tau$ may not be of constant order when the dimension $p$ increases. Technically speaking, we need to explicitly write $\tau=\tau_p$ to capture the dependence on $p$. However, in the rest of the paper, we simply write $\tau$ for ease of notation. For more analysis of spectral decay index, see section 2 of \cite{liu2021robust}.
\end{rem}

Next, we are interested in building some estimators of $A$ for which we could establish asymptotic distribution theory. This allows one to conduct statistical inference, such as finding simultaneous confidence interval. There have been some work on the robust estimation only. \cite{liu2021robust} provides both a Lasso-type estimator and a Dantzig-type estimator to consistently estimate the transition coefficient $A$ given $\{X_i\}$, under very mild moment condition on $X_i$ and $\epsilon_i$. It turns out that both Lasso-type and Dantzig-type estimators are not unbiased for estimating the transition matrix, thus insufficient for tasks like statistical inference. Therefore, one needs to develop more refined method to establish results in terms of asymptotic distributional theory. In the following sections, we will construct a de-biased estimator based on the existing one and derive the limiting distribution for the de-biased estimator.

Note that unlike many other existing work (\cite{han2015direct}, \cite{basu2015regularized}, etc.), we do not assume $\varepsilon_i$ to be Gaussian or sub-Gaussian. Instead, it could happen that the innovations $\varepsilon_i$ only have some finite moments, which makes the standard techniques for estimation and inference invalid.

\subsection{De-biased estimator}
In this section, we construct a de-biased estimator using the techniques introduced in \cite{bickel1975one}. To fix the idea, let $a_j^\top$ be the $j$-th row of $A$ and $\beta^*=\text{Vec}(A)=(a_1^\top,a_2^\top,\dots,a_p^\top)^\top\in\mb{R}^{p^2}$. Suppose we are given a consistent, possibly biased, estimator $\widehat\beta$ of $\beta^*$, i.e. $|\widehat\beta-\beta^*|=o(1)$ (for example, Lasso-type or Dantzig-type estimators in \cite{liu2021robust}).
 Define a loss function $L:\mb{R}^{p^2}\to\mb{R}$ as
\begin{equation}\label{eq2.3}
	L_n(\beta)=\frac1{n}\sum_{i=1}^{n}\sum_{k=1}^p\ell(X_{ik}-X_{i-1}^\top\beta_k)w(X_{i-1}),
\end{equation}
where $\beta=(\beta_1^\top,\dots,\beta_p^\top)^\top$ with $\beta_k\in\mb{R}^p$ for $1\leq k\leq p$, the weight function
$$w(x)=\min\bigg\{1,\frac{T^3}{|x|_\infty^3}\bigg\}$$ 
for some threshold $T>0$ to be determined later, and the robust loss function $\ell(x)$ satisfies:
	\begin{itemize}
		\item[(i)]  $\ell(x)$ is a thrice differentiable convex and even function.
		\item[(ii)] For some constant $C>0$, $|\ell'|,|\ell''|,|\ell^{(3)}|\leq C$.
	\end{itemize}
We give two examples of such loss functions from \cite{pan2021iteratively} that satisfy the above conditions.
\begin{exs}[Smoothed huber loss I]
\begin{equation*}
	\ell(x)=\begin{cases}
		x^2/2-|x|^3/6\quad &\text{if }|x|\leq1,\\
		|x|/2 - 1/6\quad &\text{if }|x|>1.
	\end{cases}
\end{equation*}
\end{exs}

\begin{exs}[Smoothed huber loss II]
	\begin{equation*}
		\ell(x)=\begin{cases}
			x^2/2-x^4/24,\quad &\text{if }|x|\leq\sqrt2,\\
			\big(2\sqrt2/3\big)|x|-1/2,\quad&\text{if }|x|>\sqrt2.
		\end{cases}
	\end{equation*}
\end{exs}
Direct calculation shows that everywhere twice differentiable and almost everywhere thrice differentiable. Also, the derivative of first three orders are bounded in magnitude. We mention that generalization to other loss functions that does not satisfy the differentiability conditions (for example, huber loss) may be derived under more refined arguments, but will be omitted in this paper. 

Denote by $\psi(x)=\ell'(x)$ the derivative of $\ell(x)$, then $\psi(x)$ is twice differentiable by condition (i) and $|\psi(x)|\leq C$ for all $x\in\mb{R}$ by condition (ii). Let $\mu=(\mu_1,\dots,\mu_p)^\top\in\mb{R}^p$ with $\mu_k=\mE[\psi'(\varepsilon_{ik})]$ and $\mu^{-1}=(\mu_1^{-1},\dots,\mu_p^{-1})^\top$. 
Let $\widehat\mu=(\widehat\mu_1,\dots,\widehat\mu_p)$ be the estimate of $\mu$ with $\widehat\mu_k=\frac1n\sum_{i=1}^n\psi'(\widehat\varepsilon_{ik})$, where $\widehat\varepsilon_{ik}=X_{ik}-X_{i-1}^\top\widehat\beta_k$. 
Let $\Sigma_x=\mE[X_{i}X_{i}^\top w(X_{i})]\in\mb{R}^{p\times p}$ be the weighted covariance matrix and $\Omega_x=\Sigma_x^{-1}\in\mb{R}^{p\times p}$ be the weighted precision matrix. Denote by $\widehat\Sigma_x=n^{-1}\sum_{i=1}^nX_{i-1}X_{i-1}^\top w(X_{i-1})$ the weighted sample covariance. Furthermore, suppose that $\widehat\Omega_x$ is a suitable approximation of the weighted precision matrix $\Omega_x$ (e.g., CLIME estimator introduced by \cite{cai2011constrained}), as will be discussed in section 2.3. To ensure the validity of such estimator, the sparsity of each row of $\Omega_x$ is always assumed due to high dimensionality.
Now we introduce a few more notations:
	\begin{align}\label{eq2.5}
		\Sigma&=\text{diag}(\mu)\otimes\Sigma_x=\begin{bmatrix}
			\mu_1\Sigma_x & 0 & 0 & \dots & 0\\
			0 & \mu_2\Sigma_x & 0 & \dots & 0\\
			\vdots & \vdots & \ddots & \dots & 0\\
			0 & 0& 0 & 0 & \mu_p\Sigma_x
		\end{bmatrix}\in\mb{R}^{p^2\times p^2},
	\end{align}
and analogously, 
\begin{align}
	\Omega&=\Sigma^{-1}=\text{diag}(\mu^{-1})\otimes\Omega_x;\notag\\
	\widehat\Sigma&=\text{diag}(\widehat\mu)\otimes\widehat\Sigma_x,\quad \widehat\Omega=\text{diag}(\widehat\mu^{-1})\otimes\widehat\Omega_x.
\end{align}
 Following the one-step estimator in \cite{bickel1975one}, we de-bias $\widehat\beta$ by adding an additional term involving the gradient of the loss function $L$:
\begin{equation}\label{eq2.6}
	\check\beta=\widehat\beta+\widehat\Omega\,\nabla L_n(\widehat\beta).
\end{equation}
To briefly explain the presence of $\widehat\Omega$, consider Taylor expansion of $\nabla L_n(\widehat\beta)$ around $\nabla L_n(\beta^*)$. Write
\begin{align}\label{eq2.7}
	\sqrt n(\check\beta-\beta^*)&=\sqrt n(\widehat\beta-\beta^*)+\sqrt n\,\widehat\Omega\,\nabla L_n(\beta^*)-\sqrt n\,\widehat\Omega(\nabla L_n(\widehat\beta)-\nabla L_n(\beta^*))\notag\\
	&=\sqrt n\,\widehat\Omega\,\nabla L_n(\beta^*)+\sqrt n\left[(\widehat\beta-\beta^*)-\widehat\Omega\,\nabla^2 L_n(\beta^*)(\widehat\beta-\beta^*)+ R\right]\notag\\
	&=\underbrace{\sqrt n \,\widehat\Omega\,\nabla L_n(\beta^*)}_A+\underbrace{\sqrt n\left[\Big(I_{p^2}-\widehat\Omega\,\nabla^2L_n(\beta^*)(\widehat\beta-\beta^*)\right]}_\Delta+\sqrt nR,
\end{align}
where the remainder term $\sqrt nR=o(1)$ under certain conditions. Moreover, we also hope $\Delta$ to be negligible. As will be shown in the following sections,
\begin{equation}\label{eq2.8}
	\Delta\leq\sqrt n\Big(\|\Omega-\widehat\Omega\|_{1}\|\Sigma\|_{\max}+\|\nabla^2L_n(\beta^*)-\Sigma\|_{\max}\|\widehat\Omega\|_1\Big)|\widehat\beta-\beta^*|_1,
\end{equation}

To this end, $\widehat\Omega$ needs to be a good approximation of the precision matrix $\Omega$, which inspires the construction of such $\widehat\Omega$. More rigorous arguments will be presented in the subsequent sections.

Note that the estimator $\check\beta$ is closely related to the de-sparsifying Lasso estimator (\cite{van2014asymptotically} and \cite{zhang2014confidence}), which is employed to conduct simultaneous inference for linear regression models in \cite{zhang2017simultaneous}. $\check\beta$ will reduce to de-sparsifying Lasso estimator if the loss $\ell(x)$ in \eqref{eq2.3} is squared error loss and the weight $w(x)\equiv1$. Moreover, \cite{loh2018scale} uses this one-step estimator to build the limiting distribution of high-dimensional vector restricted to a fixed number of coordinates, and delivers a result that agrees with \cite{bickel1975one} for low-dimensional robust M-estimators. Different from that, we will derive such conclusions simultaneously for all $p^2$ coordinates of $\beta^*$.

In the subsequent sections, we aim at obtaining a limiting distribution for $\check\beta$.

\subsection{Estimation of the precision matrix}
In this section, we mainly discuss the validity of having $\widehat\Omega$ as an approximation of $\Omega$. By the structure of $\Omega$, we need to first find a suitable estimator of the weighted precision $\Omega_x$. 

The estimation of the sparse inverse covariance matrix based on a collection of observations $\{X_i\}$ plays a crucial role in establishing the asymptotic distribution. In high-dimensional regime, one cannot obtain a suitable estimator for the precision matrix by simply inverting the sample covariance, as the sample covariance is not invertible when the number of features exceeds the number of observations. Depending on the purposes, various methodology have been proposed to solve problem of estimating the precision. See for example, graphical Lasso (\cite{yuan2007model} and \cite{friedman2008sparse}) and nodewise regression (\cite{meinshausen2006high}). From a different perspective, \cite{cai2011constrained} proposed a CLIME approach to sparse precision estimation, which shall be applied in this paper. For completeness, we reproduce the CLIME estimator in the following.

Suppose that the sparsity of each row of $\Omega_x$ is at most $s$, i.e., $s=\max_{1\leq i\leq p}|\{j:\Omega_{x,ij}\neq0\}|.$
We first obtain $\widehat\Theta$ by solving
	$$\widehat\Theta=\argmin_{\Theta}\sum_{i,j}\big|\Theta_{ij}\big|\quad\text{subject to: }\quad \|\widehat\Sigma_x\Theta-I_p\|_{\max}\leq \lambda_n,$$
for some regularization parameter $\lambda_n>0$. Note that the solution $\Theta$ may not symmetric. To account for symmetry, the CLIME estimator $\widehat\Omega_x$ is defined as
\begin{equation}\label{clime}\widehat\Omega_x=(\widehat\omega_{ij}),\quad\text{where }\widehat\omega_{ij}=\widehat\omega_{ji}=\widehat\Theta_{ij}\mathbb{I}\{|\widehat\Theta_{ij}|\leq|\widehat\Theta_{ji}|\}+\widehat\Theta_{ji}\mathbb{I}\{|\widehat\Theta_{ij}|>|\widehat\Theta_{ji}|\}.\end{equation}
For more analysis of CLIME estimator, see \cite{cai2011constrained}. Next, we present the convergence theorem for CLIME estimator.

\begin{thm}\label{thm4.1}
	Let $\tau$ be defined in definition \ref{defn2.1} and $\gamma=\max_{t=0,1\dots,\tau-1}\|A^t\|$.
	Choose $\lambda_n\asymp\|\Omega_x\|_1\gamma \tau^2 T^2(\log p)^{3/2}n^{-1/2}$, then with probability at least $1-4p^{-c_0}$ for some constant $c_0>0$,
	$$\|\widehat\Omega_x-\Omega_x\|_{\max}\lesssim\|\Omega_x\|_1\lambda_n\quad\text{and}\quad \|\widehat\Omega_x-\Omega_x\|_{1}\lesssim \|\Omega_x\|_1s\lambda_n.$$
\end{thm}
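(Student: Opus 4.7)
The plan is to follow the classical CLIME analysis of \cite{cai2011constrained}, but with the sample-covariance concentration replaced by a VAR-specific Bernstein bound. The key structural observation is that the weight $w$ makes each summand of $\widehat\Sigma_x$ bounded entry-wise: if $|X_{i-1}|_\infty > T$ then $w(X_{i-1}) = T^3/|X_{i-1}|_\infty^3$ and any coordinate product is dominated by $|X_{i-1}|_\infty^2 \cdot T^3/|X_{i-1}|_\infty^3 \leq T^2$, while if $|X_{i-1}|_\infty \leq T$ the same bound is trivial. Thus $X_{i-1}X_{i-1}^\top w(X_{i-1})$ is an entry-wise bounded function of a VAR(1) trajectory, which is exactly the setting where the Bernstein-type inequality for bounded VAR models from section \ref{estimation} applies.

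First I would apply that inequality entry-wise together with a union bound over the $p^2$ coordinates, yielding
$$\|\widehat\Sigma_x - \Sigma_x\|_{\max} \lesssim \gamma\tau^2 T^2(\log p)^{3/2} n^{-1/2}$$
with probability at least $1-2p^{-c_0}$, where $\gamma\tau^2$ encodes the effective mixing horizon through the spectral decay index. Next I would verify that $\Omega_x$ is feasible for the CLIME program: since $\widehat\Sigma_x\Omega_x - I_p = (\widehat\Sigma_x-\Sigma_x)\Omega_x$,
$$\|\widehat\Sigma_x\Omega_x - I_p\|_{\max} \leq \|\widehat\Sigma_x-\Sigma_x\|_{\max}\,\|\Omega_x\|_1 \leq \lambda_n$$
for the stated choice of $\lambda_n$. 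Because the CLIME program decouples column by column, feasibility gives $|\widehat\Theta_{\cdot,j}|_1 \leq |\Omega_{x,\cdot,j}|_1$ and $|\widehat\Sigma_x(\widehat\Theta-\Omega_x)|_{\max} \leq 2\lambda_n$. Multiplying the latter on the left by $\Omega_x$ and using $\Omega_x\widehat\Sigma_x = I_p + \Omega_x(\widehat\Sigma_x-\Sigma_x)$ together with $\|MN\|_{\max}\leq\|M\|_1\|N\|_{\max}$ yields the entry-wise bound $\|\widehat\Theta-\Omega_x\|_{\max} \lesssim \|\Omega_x\|_1\lambda_n$. The symmetrization rule in \eqref{clime} picks the smaller-magnitude entry among $\widehat\Theta_{ij}, \widehat\Theta_{ji}$, and since $\Omega_x$ is symmetric this gives $\|\widehat\Omega_x-\Omega_x\|_{\max} \leq \|\widehat\Theta-\Omega_x\|_{\max}$, proving the first assertion. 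For the $\|\cdot\|_1$ bound, for each column $j$ I split indices by whether $\Omega_{x,ij}=0$: the at-most-$s$ nonzero indices contribute at most $s\|\widehat\Omega_x-\Omega_x\|_{\max}$, while for the zero indices the column-wise feasibility combined with a reverse triangle inequality on the support forces $\sum_{i:\Omega_{x,ij}=0}|\widehat\Theta_{ij}| \leq s\|\widehat\Theta-\Omega_x\|_{\max}$, giving the stated $s\|\Omega_x\|_1\lambda_n$ rate after symmetrization.

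The hard part will be the concentration inequality in the first step. Independent-sample Bernstein does not apply because successive $X_{i-1}$ are coupled through the VAR(1) recursion, and the dependence need not decay geometrically when $\|A\|$ exceeds one. The right framework is the spectral decay index from \cite{liu2021robust}: one couples $X_i$ with a $\tau$-step-truncated surrogate that is nearly independent of the far past, then controls the coupling error via $\gamma = \max_{t<\tau}\|A^t\|$, and applies a blocked Bernstein bound to the surrogate using the $T^2$ boundedness supplied by $w$. Producing the sharp $\tau^2 T^2(\log p)^{3/2}n^{-1/2}$ rate, rather than a loose $\tau^3$ or exponential-in-$\tau$ rate, is the delicate technical matter; once that inequality is in hand, the rest of the proof is essentially deterministic and parallels \cite{cai2011constrained}.
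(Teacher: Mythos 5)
Your proposal is correct and follows the same structure as the paper's proof: bound $\|\widehat\Sigma_x-\Sigma_x\|_{\max}$ via the VAR Bernstein inequality applied to the bounded, Lipschitz functions $G_{jk}(x)=x_jx_kw(x)$ (this is exactly the paper's Theorem \ref{thm3.3}), then feed that concentration bound into the CLIME error analysis. The only difference is cosmetic — the paper simply invokes Theorem 6 of \cite{cai2011constrained} for the deterministic CLIME bounds, whereas you re-derive them from feasibility of $\Omega_x$, the $\ell_1$-optimality of $\widehat\Theta$, and the sparsity split; your re-derivation is accurate and matches what that cited theorem delivers.
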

\begin{rem}
Theorem \ref{thm4.1} is a direct application of Theorem 6 of \cite{cai2011constrained}. Note that if we assume the eigenvalue condition on $\Sigma_x$ that $0\leq c\leq\lambda_{\min}(\Sigma_x)\leq\lambda_{\max}(\Sigma_x)\leq C$, then $\|\Omega_x\|_2\leq1/\lambda_{\min}(\Sigma_x)=O(1).$ Therefore, by the sparsity condition on $\Omega_x$, we immediately have that $\|\Omega_x\|_1=O(\sqrt s).$ Suppose the scaling condition holds that $s\gamma\tau^2 T^2(\log p)^{3/2}n^{-1/2}=o(1)$, then the CLIME estimator $\widehat\Omega_x$ defined in \eqref{clime} is consistent in estimating the weighted precision matrix of the VAR(1) model \eqref{var1}. 
\end{rem}
The following theorem shows that $\|\Omega-\widehat\Omega\|$ enjoys the same convergence rate as in the previous theorem.
\begin{thm}\label{thm4.2}
	Let $\widehat\Omega_x$ be the CLIME estimator defined above. Assume that $\mu_k>c_1>0$ for all $1\leq k\leq p$, then with probability at least $1-6p^{-c}$,
	$$\|\Omega-\widehat\Omega\|_{\max}\lesssim \|\Omega_x\|_{1}\lambda_n\quad\text{and}\quad\|\Omega-\widehat\Omega\|_{1}\lesssim \|\Omega_x\|_{1}s\lambda_n.$$
\end{thm}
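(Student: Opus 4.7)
The plan is to reduce the statement to Theorem \ref{thm4.1} together with a uniform concentration bound for the diagonal weights $\widehat\mu_k$, by exploiting the Kronecker-product structure of $\Omega$ and $\widehat\Omega$.

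First I would use the algebraic identity
$$\Omega-\widehat\Omega=\mathrm{diag}(\mu^{-1}-\widehat\mu^{-1})\otimes\Omega_x+\mathrm{diag}(\widehat\mu^{-1})\otimes(\Omega_x-\widehat\Omega_x).$$
The Kronecker-product identities $\|A\otimes B\|_{\max}=\|A\|_{\max}\|B\|_{\max}$ and $\|A\otimes B\|_1=\|A\|_1\|B\|_1$, combined with the fact that a diagonal matrix $D$ satisfies $\|D\|_1=\|D\|_{\max}=|d|_\infty$ (where $d$ is its diagonal vector) and the triangle inequality, yield
$$\|\Omega-\widehat\Omega\|_{\max}\leq|\mu^{-1}-\widehat\mu^{-1}|_\infty\|\Omega_x\|_{\max}+|\widehat\mu^{-1}|_\infty\|\Omega_x-\widehat\Omega_x\|_{\max},$$
$$\|\Omega-\widehat\Omega\|_{1}\leq|\mu^{-1}-\widehat\mu^{-1}|_\infty\|\Omega_x\|_{1}+|\widehat\mu^{-1}|_\infty\|\Omega_x-\widehat\Omega_x\|_{1}.$$
On the event of Theorem \ref{thm4.1} (probability at least $1-4p^{-c_0}$), and using $\|\Omega_x\|_{\max}\leq\|\Omega_x\|_1$, the second term on each right-hand side is already $\lesssim\|\Omega_x\|_1\lambda_n$ and $\lesssim\|\Omega_x\|_1 s\lambda_n$ respectively, provided $|\widehat\mu^{-1}|_\infty=O(1)$. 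Hence the problem reduces to showing $|\mu^{-1}-\widehat\mu^{-1}|_\infty\lesssim\lambda_n$ and $|\widehat\mu^{-1}|_\infty=O(1)$ with high probability.

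Because $\mu_k\geq c_1$, once $\max_k|\widehat\mu_k-\mu_k|\leq c_1/2$ one has $\widehat\mu_k\geq c_1/2$ and $|\mu_k^{-1}-\widehat\mu_k^{-1}|\leq(2/c_1^2)|\widehat\mu_k-\mu_k|$, so both reciprocal-weight bounds reduce to $\max_k|\widehat\mu_k-\mu_k|\lesssim\lambda_n$. I would split
$$\widehat\mu_k-\mu_k=\frac1n\sum_{i=1}^n\bigl[\psi'(\widehat\varepsilon_{ik})-\psi'(\varepsilon_{ik})\bigr]+\frac1n\sum_{i=1}^n\bigl[\psi'(\varepsilon_{ik})-\mE\psi'(\varepsilon_{ik})\bigr].$$
The second piece is an i.i.d.\ average of bounded variables ($|\psi'|\leq C$), so Hoeffding's inequality plus a union bound over $k=1,\ldots,p$ yields a maximum of order $\sqrt{\log p/n}$ with probability $1-2p^{-c}$, which is $O(\lambda_n)$. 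The first piece is controlled through the Lipschitz property $|\psi'(u)-\psi'(v)|\leq C|u-v|$ together with $\widehat\varepsilon_{ik}-\varepsilon_{ik}=X_{i-1}^\top(\beta_k^*-\widehat\beta_k)$; it is bounded by $C\bigl(\max_i|X_{i-1}|_\infty\bigr)\max_k|\widehat\beta_k-\beta_k^*|_1$, and a tail bound on $\max_i|X_{i-1}|_\infty$ together with the $\ell_1$-consistency of the pilot estimator $\widehat\beta$ from \cite{liu2021robust} keeps this below $\lambda_n$.

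Intersecting the three events (Theorem \ref{thm4.1}; Hoeffding concentration for $\widehat\mu$; the pilot-plus-tail event) gives a set of probability at least $1-6p^{-c}$ on which both claimed bounds hold. The main technical obstacle is uniform control of the plug-in error $\psi'(\widehat\varepsilon_{ik})-\psi'(\varepsilon_{ik})$: the $\ell_1$-rate of $\widehat\beta$ and a tail bound on $\max_i|X_{i-1}|_\infty$ valid in the heavy-tailed, temporally dependent setting must combine to stay at rate $\sqrt{\log p/n}$. This is precisely where the Bernstein-type inequality for bounded VAR models promised in Section \ref{estimation} comes into play, making the step routine once that inequality is available.
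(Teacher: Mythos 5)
Your reduction—exploiting the Kronecker structure to split $\|\Omega-\widehat\Omega\|$ into a weight-error piece and a precision-error piece, controlling the latter by Theorem \ref{thm4.1}, and reducing the former to a uniform bound on $|\widehat\mu_k-\mu_k|$—matches the paper's proof exactly. The gap is in the control of $|\widehat\mu_k-\mu_k|$ (the paper handles this separately in Lemma \ref{lem1.1}), where your decomposition differs from the paper's and, as written, does not close.

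You split $\widehat\mu_k-\mu_k$ into the plug-in error $\frac1n\sum_i[\psi'(\widehat\varepsilon_{ik})-\psi'(\varepsilon_{ik})]$ and the oracle fluctuation $\frac1n\sum_i[\psi'(\varepsilon_{ik})-\mE\psi'(\varepsilon_{ik})]$. The Hoeffding bound on the second piece is fine and in fact cleaner than the paper's route. The problem is the first piece: bounding it by $C\bigl(\max_i|X_{i-1}|_\infty\bigr)\max_k|\widehat\beta_k-\beta_k^*|_1$ is too crude in this setting. The standing assumption is only $\mE[X_{ik}^2]\leq C$, so by Markov and a union bound $\max_i|X_{i-1}|_\infty$ is only $O(\sqrt{np})$ with high probability; there is no sub-Gaussian or even finite-exponential-moment control, so no tail bound keeps $\max_i|X_{i-1}|_\infty$ at a polylogarithmic rate. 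Multiplied by the $\ell_1$-error of the pilot estimator, this does not stay below $\lambda_n$. The Bernstein inequality of Section \ref{estimation} does not rescue this route either—it is stated for bounded functionals of $X_{i-1}$, which $\max_i|X_{i-1}|_\infty$ is not.

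The paper avoids the maximum entirely by decomposing instead as $|\widehat\mu_k-\mu_k|\leq\bigl|\frac1n\sum_i\psi'(\widehat\varepsilon_{ik})-\mE\psi'(\widehat\varepsilon_{ik})\bigr|+\bigl|\mE\psi'(\widehat\varepsilon_{ik})-\mE\psi'(\varepsilon_{ik})\bigr|$. The first term concentrates via the Bernstein inequality for bounded Lipschitz functionals (using that $|\psi'|$ is bounded and $\psi'(X_{ik}-X_{i-1}^\top\widehat\beta_k)$ is Lipschitz when $\widehat\beta_k$ ranges over a bounded $\ell_1$ ball). The second term is a difference of \emph{expectations}, bounded by $\mE|X_{i-1}^\top(\widehat\beta_k-\beta_k^*)|\lesssim|\widehat\beta-\beta^*|_1$ using only the bounded second moment of $X_{i-1}$—no pointwise maximum needed. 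That is the step your argument is missing: you need to move the expectation inside before invoking the Lipschitz bound, rather than taking a uniform bound over realizations.
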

The above theorem is built upon two facts: $\widehat\Omega_x$ approximates $\Omega_x$ and $\widehat\mu$ approximates $\mu$. The result regarding the latter approximation will be given in Lemma \ref{lem1.1}.

\subsection{Simultaneous inference}\label{thm}
In this section, consider the following hypothesis testing problem:
$$H_0:A_{ij}=A^0_{ij},\quad\text{for all }i, j=1,\dots, p\quad(\text{or equivalently, }\beta^*_j=\beta^0_j,\quad\text{for all }j=1,\dots, p)$$
versus the alternative hypothesis $H_1:A_{ij}\neq A^0_{ij}$ for some $i, j=1,\dots,p$. Instead of projecting the explanatory variables onto a subspace of fixed dimension (\cite{javanmard2014confidence}, \cite{zhang2014confidence}, \cite{van2014asymptotically} and \cite{loh2018scale}), we allow the number of testings to grow as fast as an exponential order of the sample size $n$. \cite{zhang2017simultaneous} presented a more related work, where it's also allowed that the testing size to grow as a function of $p$. However, they conducted such simultaneous inference procedure under linear regression setting with independent random variables.

Employing the de-biased estimator $\check\beta$ defined in \eqref{eq2.6}, we propose to use the test statistics 
\begin{equation}\label{eq2.9}
	\sqrt n|\check\beta-\beta^0|_{\infty},
\end{equation}
where $\check\beta$ is defined in \eqref{eq2.6}. In the next several theorems, we elaborate a multiplier bootstrap method to obtain the critical value of the test statistics, which requires a few scaling and moment assumptions. Recall definition \ref{defn2.1} for $\tau$ and theorem \ref{thm4.1} for the definition of $\gamma$. Also recall that $s=\max_{1\leq i\leq p}|\{j:\Omega_{x,ij}\neq0\}|.$ 

\textbf{Assumptions}
\begin{itemize}
	\item[(A1)] $\sqrt nT^3|\widehat\beta-\beta^*|_1^2=o(1)$.
	\item[(A2)] $\|\Omega_x\|_1^2s\gamma^2\tau^4T^4(\log p)^3/\sqrt n=o(1)$.
	\item[(A3)] $s\gamma\tau^2T^2(\log p)^{3/2}|\widehat\beta-\beta^*|_1=o(1)$.
	\item[(A4)] $sT^2(\log(pn))^7/n\lesssim n^{-c}$.
	\item[(A5)] $(\log p)^{3/2}(\log n)^{1/2}T\sqrt{s\tau}\gamma/n^{1/4}=o(1).$
\end{itemize}
Additionally, throughout the paper we assume that for some constant $C>0$, $\mE[X_{ik}^2]\leq C$ and $\mE[\varepsilon_{ik}^2]\leq C$ for all $1\leq k\leq p$. We also suppose that $\|\Sigma_x\|_{\max}=O(1)$ and $0<c\leq\lambda_{\min}(\Sigma_x)\leq\lambda_{\max}(\Sigma_x)\leq C$. Thus, $\|\Omega_x\|_2\leq1/\lambda_{\min}(\Sigma_x)=O(1)$ and $\|\Omega_x\|_1=O(\sqrt s)$, where the row sparsity $s=\max_{1\leq i\leq p}|\{j:\Omega_{x,ij}\neq0\}|$.
\begin{thm}\label{thm2.1}
	Let $\zeta_1=\gamma\tau^2T^2(\log p)^{3/2}|\widehat\beta-\beta^*|_1+\sqrt nT^3|\widehat\beta-\beta^*|_1^2+s\gamma^2\tau^4T^4(\log p)^3/\sqrt n.$
	Suppose assumptions (A1) --- (A3) hold. Additionally assume that $\zeta_1\sqrt{1\lor\log(p/\zeta_1)}=o(1)$, then  we have that 
	$$\mP\bigg(|\sqrt n(\check\beta-\beta^*)-\sqrt n\Omega\nabla L_n(\beta^*)|_\infty>\zeta_1\bigg)<\zeta_2,$$
	where $\zeta_1\sqrt{1\lor\log(p/\zeta_1)}=o(1)$ and $\zeta_2=o(1)$.
\end{thm}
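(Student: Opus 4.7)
The plan is to start from the Taylor-expansion identity~\eqref{eq2.7} and uniformly bound, on a single high-probability event, the three error terms that remain once $\widehat\Omega$ is replaced by $\Omega$ in the leading score. Writing
\[
\sqrt n(\check\beta-\beta^*)-\sqrt n\,\Omega\,\nabla L_n(\beta^*)
= \underbrace{\sqrt n(\widehat\Omega-\Omega)\,\nabla L_n(\beta^*)}_{\mathrm{I}}
+ \underbrace{\sqrt n\bigl[I_{p^2}-\widehat\Omega\,\nabla^2 L_n(\beta^*)\bigr](\widehat\beta-\beta^*)}_{\mathrm{II}}
+ \underbrace{\sqrt n\,\widehat\Omega\,R}_{\mathrm{III}},
\]
where $R$ is the integral quadratic remainder of the Taylor expansion of $\nabla L_n$ around $\beta^*$, I will match $\mathrm{I},\mathrm{II},\mathrm{III}$ with the three summands defining $\zeta_1$.

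For term $\mathrm{I}$, since $\widehat\Omega-\Omega$ is symmetric I use $|(\widehat\Omega-\Omega)v|_\infty\le\|\widehat\Omega-\Omega\|_1\,|v|_\infty$ with $v=\sqrt n\,\nabla L_n(\beta^*)$: Theorem~\ref{thm4.2} handles the first factor, and the Bernstein-type inequality for bounded VAR models to be established in Section~\ref{estimation} handles the second, since each coordinate $-\tfrac1{\sqrt n}\sum_i\psi(\varepsilon_{ik})X_{i-1,j}w(X_{i-1})$ is mean zero (by symmetry of $\varepsilon_i$ together with oddness of $\psi=\ell'$) with summands bounded by $CT$ because $|X_{i-1,j}|\,w(X_{i-1})\le T$ by the design of $w$; a union bound over the $p^2$ coordinates then gives the third summand of $\zeta_1$. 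For term $\mathrm{II}$, I expand $I_{p^2}-\widehat\Omega\,\nabla^2 L_n(\beta^*)=(\Omega-\widehat\Omega)\Sigma+\widehat\Omega\bigl(\Sigma-\nabla^2 L_n(\beta^*)\bigr)$ and apply $\|AB\|_{\max}\le\|A\|_\infty\|B\|_{\max}$, recovering~\eqref{eq2.8}. Theorem~\ref{thm4.2} bounds $\|\Omega-\widehat\Omega\|_1$, the triangle inequality gives $\|\widehat\Omega\|_1\lesssim\|\Omega_x\|_1$, the standing assumption gives $\|\Sigma\|_{\max}=O(1)$, and the same Bernstein inequality applied to the centered bounded entries $\psi'(\varepsilon_{ik})X_{i-1,j}X_{i-1,j'}w(X_{i-1})-\mu_k\Sigma_{x,jj'}$ controls $\|\nabla^2 L_n(\beta^*)-\Sigma\|_{\max}$, producing the first summand of $\zeta_1$.

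Term $\mathrm{III}$ is handled deterministically: the defining property of $w$ gives $|x_j x_{j'} x_{j''}|\,w(x)\le T^3$, hence $\|\nabla^3 L_n(\tilde\beta)\|_{\max}\le|\ell^{(3)}|_\infty T^3$ uniformly in $\tilde\beta$, so the Taylor remainder satisfies $|R|_\infty\le \frac{1}{2}CT^3|\widehat\beta-\beta^*|_1^2$ and $|\mathrm{III}|_\infty\le\sqrt n\,\|\widehat\Omega\|_1|R|_\infty$ yields the middle summand of $\zeta_1$. A union bound over the bad events in Theorem~\ref{thm4.2} and in the two Bernstein steps gives the conclusion with $\zeta_2=O(p^{-c})=o(1)$, while assumptions (A1)--(A3) are exactly what force the three summands of $\zeta_1$ to vanish. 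The main obstacle is calibrating the Bernstein-type inequality of Section~\ref{estimation} under VAR dependence so that the $\gamma,\tau,T$ prefactors it produces line up with $\zeta_1$; the most delicate verification is the bound on $\|\nabla^2 L_n(\beta^*)-\Sigma\|_{\max}$, whose entries are quadratic in $X_{i-1}$ and must attain the $\gamma\tau^2T^2(\log p)^{3/2}/\sqrt n$ rate implicit in Theorem~\ref{thm4.1} in order for (A2) to be tight.
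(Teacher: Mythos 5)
Your proof follows the paper's argument exactly: the same three-term decomposition after Taylor expansion, with term $\mathrm{I}$ controlled by Theorem~\ref{thm4.2} together with the Bernstein bound on $|\nabla L_n(\beta^*)|_\infty$ (Lemma~\ref{lem1.3}), term $\mathrm{II}$ via the splitting $(\Omega-\widehat\Omega)\Sigma+\widehat\Omega(\Sigma-\nabla^2 L_n(\beta^*))$ using Theorems~\ref{thm4.2} and~\ref{thm3.4}, and term $\mathrm{III}$ deterministically from the boundedness of $\ell^{(3)}$ and the weight $w$. One small remark: you are right that $\mathrm{III}=\sqrt n\,\widehat\Omega R$ carries the factor $\widehat\Omega$ (the paper's \eqref{eq2.7} and its proof treat the remainder as $\sqrt n\,R$ without $\widehat\Omega$), but your subsequent claim that $\sqrt n\,\|\widehat\Omega\|_1|R|_\infty$ gives precisely the middle summand of $\zeta_1$ loses track of the $\|\widehat\Omega\|_1=O(\sqrt s)$ prefactor; the paper's own write-up has the same imprecision in the $s$-powers (its $\zeta_1$ in the theorem statement and in the final line of the proof do not match), so this is a shared cosmetic issue rather than a gap in your reasoning.
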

Theorem \ref{thm2.1} rigorously verifies that $\sqrt nR=o(1)$ and $\Delta=o(1)$ in  \eqref{eq2.7} by the proposed construction of $\widehat\Omega$ and suggests us to perform further analysis on $\sqrt n\Omega\nabla L_n(\beta^*)$. To derive the limiting distribution, we shall use Gaussian approximation technique, since the classic central limit theorem fails in high-dimensional setting. 

Gaussian approximation was initially invented for high-dimensional independent random variables in \cite{chernozhukov2013gaussian} and further generalized to high-dimensional time series in 
	\cite{zhang2017gaussian}. \cite{zhang2017simultaneous} and \cite{loh2018scale} applied the GA technique in \cite{chernozhukov2013gaussian} to the derivation of asymptotic distribution in linear regression setting. However, data generated from VAR model suffers temporal dependence, which makes the aforementioned techniques unavailable. Although \cite{zhang2017gaussian} established such GA results for general time series using dependence adjusted norm, direct application of their theorems does not yield desirable conclusion in ultra-high dimensional setting. This leads us to derive a new GA theorem with better convergence rate, which is achievable thanks to the structure of VAR model.

The next theorem establishes a Gaussian approximation(GA) result for the term $\sqrt n\Omega\nabla L_n(\beta^*)$. For a more detailed description of Gaussian approximation procedure, see section XXX.
\begin{thm}\label{thm2.2}
	Denote $D=(D_{jk})_{1\leq j,k\leq p}\in\mb{R}^{p^2\times p^2}$ with 
	$$D_{jk}=\frac{\Omega_x\mE[\psi(\varepsilon_{ij})\psi(\varepsilon_{ik})]\mE[X_iX_i^\top w^2(X_i)]\Omega_x^\top}{\mu_j\mu_k}\in\mb{R}^{p\times p}.$$ 
	Under Assumption (A4) and (A5), we have the following Gaussian Approximation result that
	$$\sup_{t\in\mb{R}}\bigg|\mP\bigg(|\sqrt n\Omega\nabla L_n(\beta^*)|_\infty\leq t\bigg)-\mP\bigg(\big|\sum_{i=1}^nz_i/\sqrt n\big|_\infty\leq t\bigg)\bigg|=o(1),$$
where $z_i=(z_{i1},\dots, z_{ip^2})^\top$ is a sequence of mean zero independent Gaussian vectors with each $\mE z_iz_i^\top=D$.
\end{thm}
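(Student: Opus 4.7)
The plan is to recognize $\sqrt n\,\Omega\,\nabla L_n(\beta^*)$ as a normalized sum of a stationary martingale difference sequence and to establish the Gaussian approximation by combining (i) a truncation of the MA$(\infty)$ representation of $X_{i-1}$ that renders the summands $m$-dependent with $m\asymp\tau\log n$, (ii) a big-block/small-block decomposition that converts the $m$-dependent partial sum into a sum of independent vectors, and (iii) the high-dimensional Gaussian approximation of \cite{chernozhukov2013gaussian} for bounded independent vectors, applied at ambient dimension $p^2$.

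First I would differentiate the loss to obtain $\nabla_{\beta_k}L_n(\beta^*)=-n^{-1}\sum_{i=1}^n\psi(\varepsilon_{ik})X_{i-1}w(X_{i-1})$, so that
\[
\sqrt n\,\Omega\,\nabla L_n(\beta^*)=-\frac{1}{\sqrt n}\sum_{i=1}^n V_i,
\]
where $V_i\in\mb{R}^{p^2}$ has $k$-th block $\mu_k^{-1}\Omega_x\,\psi(\varepsilon_{ik})\,X_{i-1}\,w(X_{i-1})$. Because $\varepsilon_i$ is symmetric and independent of $\mF_{i-1}$, and $\psi$ is odd, $\mE[V_i\mid\mF_{i-1}]=0$; a direct computation gives $\mE V_iV_i^\top=D$. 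The weight function forces $|X_{i-1}w(X_{i-1})|_\infty\le T$, so combined with $|\psi|\le C$, $\mu_k\ge c_1$, and $\|\Omega_x\|_\infty=\|\Omega_x\|_1=O(\sqrt s)$, every coordinate of $V_i$ is bounded by $O(\sqrt s\,T)$, which puts us squarely in the CCK bounded-summands framework.

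Next I would truncate the MA expansion $X_{i-1}=\sum_{j\ge0}A^j\varepsilon_{i-1-j}$ at lag $\ell\asymp\tau\log n$, producing $\tilde V_i$. By the spectral decay bound $\|A^{t}\|_\infty\lesssim\gamma\rho^{\lfloor t/\tau\rfloor}$ together with the near-Lipschitzness of the coordinate map $x\mapsto x\,w(x)$, the sup-norm truncation error is uniformly negligible compared with $1/\sqrt n$. The sequence $\{\tilde V_i\}$ is $\ell$-dependent, so I would partition $\{1,\dots,n\}$ into alternating big blocks of length $q$ and small blocks of length $\ell$, making the big-block partial sums mutually independent. Invoking \cite{chernozhukov2013gaussian} on this independent sequence in $\mb{R}^{p^2}$, with coordinate bound scaling like $\sqrt{q}\,\sqrt s\,T$ and variance of order one, yields a GA error that, after choosing $q$ optimally, is of the order $\{sT^2(\log(pn))^7/n\}^{1/6}$, which is $o(1)$ under (A4). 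The small-block residual is controlled in sup norm by a union bound combined with the Bernstein-type inequality of section \ref{estimation}, giving $o_p(1)$ under (A5). A final sup-norm Gaussian comparison (Nazarov/CCK type) replaces the independent-block Gaussian by the i.i.d.\ vectors $z_i\sim N(0,D)$, using that the block covariance converges to $D$ by stationarity.

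The main obstacle will be orchestrating the three interlocking errors — MA truncation, small-block residual, and CCK approximation — so that each is $o(1)$ under (A4)--(A5) simultaneously. In particular, because $X_{i-1}$ itself is heavy tailed and only $X_{i-1}w(X_{i-1})$ is bounded, the legitimacy of the $m$-dependent reduction rests on checking that the truncation of $X_{i-1}$ at lag $\ell$ yields a close approximation after passing through the weight function; this is precisely where the spectral decay index enters. A secondary delicate point is that the effective $\ell_\infty$ bound on $V_i$ scales like $\sqrt s\,T$ through $\|\Omega_x\|_\infty=O(\sqrt s)$, which is what creates the factor $sT^2$ in (A4); any looser accounting of the weighted precision matrix norm would fail to close the rate.
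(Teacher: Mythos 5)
Your proposal follows essentially the same route as the paper: read $\sqrt n\,\Omega\nabla L_n(\beta^*)$ as a normalized sum of martingale differences, pass to an $m$-approximation of $X_{i-1}$ via its MA$(\infty)$ expansion, do a big-block/small-block decomposition so the big-block sums are independent, invoke the bounded-summand Gaussian approximation of \cite{chernozhukov2013gaussian} on the block sums in $\mb R^{p^2}$, and close with an anti-concentration/Gaussian comparison bound. This is exactly what the paper does through Lemmas \ref{lem4.2}, \ref{lem4.1}, \ref{lem4.3} and Theorem \ref{thm5.4}.

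One parameter choice in your sketch would actually fail in the regime the theorem is designed for. You set the truncation lag $\ell\asymp\tau\log n$, but the truncation error bound has the form $\mP(|T_Y-T_{Y,m}|_\infty/\sqrt n>\eta)\lesssim s p^3\gamma^2\rho^{m/\tau}/\eta^2$ (Lemma \ref{lem4.2}), and with $\rho^{m/\tau}\asymp n^{-\log(1/\rho)}$ the prefactor $sp^3$ is not killed once $p$ grows exponentially in $n$, which is the whole point of the ultra-high-dimensional claim. The paper instead takes $m=c\tau\log p$ for $c$ large enough that $p^{-c\log(1/\rho)}$ dominates the $p^3$ union-bound factor; replacing your $\tau\log n$ by $\tau\log p$ makes your argument go through. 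The remaining pieces — the $O(\sqrt s\,T)$ per-coordinate bound coming from $\|\Omega_x\|_\infty=O(\sqrt s)$ and the weight, the small-block control via a martingale Bernstein bound, and the comparison step via \cite{chernozhukov2015comparison} — all align with the paper's proof.
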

\begin{rem}
	The above GA results allows the ultra-high dimensional regime, wehere $p$ grows as fast as $O(\me^{n^b})$ for some $0<b<1$. 
\end{rem}
Since the covariance matrix $D$ of the Gaussian analogue $z_i$ is not accessible from the observation $\{X_i\}$, we need to give a suitable estimation of $D$ before further performing multiplier bootstrap. The next theorem delivers a consistent estimator for our purpose.
\begin{thm}\label{cov estimation}
		\begin{equation}\label{eqhatD}
		\widehat D_{jk}=\frac{\widehat\Omega_x\Big(\frac1n\sum_{i=1}^n\psi(\widehat\varepsilon_{ij})\psi(\widehat\varepsilon_{ik})\Big)\Big(\frac1n\sum_{i=1}^nX_iX_i^\top w^2(X_i)\Big)\widehat\Omega_x^\top}{\widehat\mu_j\widehat\mu_k}\in\mb{R}^{p\times p},\end{equation}
		where $\widehat\Omega_x$ is the CLIME estimator of $\Omega_x$. Under assumptions (A1)--(A5) and additionally assume that $\|\Omega_x\|_1=O(\sqrt s)$ and that for all $1\leq k\leq p$, $\mu_k>C>0$ for some constant $C$, we have 
	with probability at least $1-12p^{-c}$, we have
	$$\|\widehat D-D\|_{\max}\lesssim s\gamma\tau^2T^2(\log p)^{3/2}n^{-1/2}+|\widehat\beta-\beta^*|_1.$$
\end{thm}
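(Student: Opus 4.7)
The plan is to control $\|\widehat D - D\|_{\max}$ by a telescoping decomposition that isolates the contributions from each estimated piece: the scalar $(\mu_j\mu_k)^{-1}$, the scalar $\Psi_{jk}:=\mE[\psi(\varepsilon_{ij})\psi(\varepsilon_{ik})]$, the weighted covariance $\Sigma_x^{(2)}:=\mE[X_iX_i^\top w^2(X_i)]$, and the precision $\Omega_x$ (appearing twice). Writing $c_{jk}=\Psi_{jk}/(\mu_j\mu_k)$ and $C=\Omega_x\Sigma_x^{(2)}\Omega_x^\top$ (and $\hat c_{jk},\hat C$ analogously), the starting point is $\widehat D_{jk}-D_{jk}=(\hat c_{jk}-c_{jk})\hat C+c_{jk}(\hat C-C)$.

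For the scalar factor, I would first obtain $\max_k|\widehat\mu_k-\mu_k|\lesssim|\widehat\beta-\beta^*|_1+\sqrt{\log p/n}$ from Lemma \ref{lem1.1}. For $|\widehat\Psi_{jk}-\Psi_{jk}|$, split into the i.i.d.\ concentration error $|n^{-1}\sum_i\psi(\varepsilon_{ij})\psi(\varepsilon_{ik})-\Psi_{jk}|\lesssim\sqrt{\log p/n}$ (by classical Bernstein, since $|\psi|\le C$) and the plug-in error, bounded via $|\psi(\widehat\varepsilon)-\psi(\varepsilon)|\le C|\widehat\varepsilon-\varepsilon|$ by $2C^{2}\bigl(\max_l n^{-1}\sum_i|X_{i-1,l}|\bigr)|\widehat\beta-\beta^*|_1$. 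The outer factor $\max_l n^{-1}\sum_i|X_{i-1,l}|$ is $O(1)$ w.h.p.\ by truncating $|X_{i-1,l}|$ at level $T$ and invoking the bounded-VAR Bernstein inequality of Section \ref{estimation} (paired with $\mE X_{ik}^2\le C$ to control the tail). Combined with the lower bound $\mu_k\ge c_1>0$, this gives $\max_{j,k}|\hat c_{jk}-c_{jk}|\lesssim|\widehat\beta-\beta^*|_1+\sqrt{\log p/n}$. The accompanying factor $\|\hat C\|_{\max}$ is $O(1)$: indeed $\|\hat C\|_{\max}\le\|\hat C\|_2\le\|\widehat\Omega_x\|_2^{2}\|\widehat\Sigma_x^{(2)}\|_2$, and both operator norms are $O(1)$ since $\|\Omega_x\|_2=O(1)$ by assumption and $\Sigma_x^{(2)}\preceq\Sigma_x$ (because $0\le w\le 1$ implies $w^2\le w$) gives $\|\Sigma_x^{(2)}\|_2\le\|\Sigma_x\|_2=O(1)$.

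For the matrix factor $\hat C-C$, I would further telescope into three pieces by swapping $\widehat\Omega_x$ (left), $\widehat\Sigma_x^{(2)}$, and $\widehat\Omega_x^\top$ (right) one at a time. The middle piece $\Omega_x(\widehat\Sigma_x^{(2)}-\Sigma_x^{(2)})\widehat\Omega_x^\top$ is bounded via $\|\cdot\|_{\max}\le\|\Omega_x\|_\infty\|\widehat\Omega_x\|_\infty\|\widehat\Sigma_x^{(2)}-\Sigma_x^{(2)}\|_{\max}=O(s)\|\widehat\Sigma_x^{(2)}-\Sigma_x^{(2)}\|_{\max}$. Since $w(x)|x|_\infty^2\le T^2$ implies that every entry of $X_iX_i^\top w^2(X_i)$ is bounded by $T^2$, the Bernstein inequality of Section \ref{estimation} applied to this bounded VAR process yields $\|\widehat\Sigma_x^{(2)}-\Sigma_x^{(2)}\|_{\max}\lesssim T^{2}\gamma\tau(\log p)^{1/2}/\sqrt n$, which multiplied by $s$ fits within the target rate. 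The outer pieces, such as $(\widehat\Omega_x-\Omega_x)\widehat\Sigma_x^{(2)}\widehat\Omega_x^\top$, are rewritten using the identity $\Sigma_x\Omega_x=I$: $\Sigma_x^{(2)}\Omega_x=I+(\Sigma_x^{(2)}-\Sigma_x)\Omega_x$, so the leading contribution collapses to $\widehat\Omega_x-\Omega_x$, whose max norm by Theorem \ref{thm4.1} is $O(\|\Omega_x\|_1\lambda_n)=O(s\gamma\tau^2T^2(\log p)^{3/2}/\sqrt n)$. Higher-order cross-terms (products of two deviations) are negligible under the scaling in (A1)--(A5). A final union bound over the $O(1)$ concentration events delivers the probability $1-12p^{-c}$.

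The main obstacle will be arranging matrix-norm manipulations so that $\widehat\Omega_x-\Omega_x$ enters only through its entrywise max norm rather than its $\ell_1$-norm: Theorem \ref{thm4.1} gives $\|\widehat\Omega_x-\Omega_x\|_{\max}=O(\sqrt s\,\lambda_n)$ but only $\|\widehat\Omega_x-\Omega_x\|_1=O(s^{3/2}\lambda_n)$, so the naive inequality $\|ABC\|_{\max}\le\|A\|_\infty\|B\|_{\max}\|C\|_1$ would overshoot the target by a factor of $s$. The algebraic cancellation $\Omega_x\Sigma_x=I$ is the crucial device for absorbing this loss, reducing the leading error contribution to the single factor $\widehat\Omega_x-\Omega_x$ itself and allowing the clean rate stated in the theorem.
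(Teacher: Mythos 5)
Your overall scaffolding matches the paper's: a term-by-term decomposition of $\widehat D_{jk}-D_{jk}$ driven by four high-probability sub-bounds, namely $|\widehat\mu-\mu|_\infty$ (Lemma~\ref{lem1.1}), $\max_{j,k}|\widehat\Psi_{jk}-\Psi_{jk}|$, $\|\widehat\Sigma_x^{(2)}-\Sigma_x^{(2)}\|_{\max}$ (both from Section~\ref{estimation}), and the CLIME error $\|\widehat\Omega_x-\Omega_x\|_{\max}$ (Theorem~\ref{thm4.1}). The paper then recombines these by repeatedly invoking Lemma~\ref{lem1.2}, the triangle-inequality bound $\|AB-\widehat A\widehat B\|_{\max}\le\|B\|_1\|A-\widehat A\|_{\max}+\|\widehat A\|_1\|B-\widehat B\|_{\max}$, and simply absorbs the resulting $\ell_1$-norm multipliers (each at most $\|\Omega_x\|_1=O(\sqrt s)$) into the stated $s$ factor — there is no cancellation device, and the paper does not attempt to verify the $\|A\|_1,\|B\|_1=O(1)$ hypotheses of Lemma~\ref{lem1.2} exactly. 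So your instinct that some bookkeeping of the $\ell_1$-norms is required is valid, but the paper's route is considerably blunter than yours: it lets the $\sqrt s$ multipliers pile up into the headline $s$, rather than engineering a cancellation.

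The genuine gap in your proposal is the step that invokes $\Sigma_x\Omega_x=I$. You write $\Sigma_x^{(2)}\Omega_x = I + (\Sigma_x^{(2)}-\Sigma_x)\Omega_x$ and then discard $(\widehat\Omega_x-\Omega_x)(\Sigma_x^{(2)}-\Sigma_x)\Omega_x$ as a ``higher-order cross-term,'' but $\Sigma_x^{(2)}-\Sigma_x = \mE[X_iX_i^\top w(X_i)(w(X_i)-1)]$ is a \emph{fixed population-level} truncation residual, not a sampling fluctuation: it does not shrink with $n$, and in general is of constant order (it vanishes only if $T\to\infty$, which is not assumed). So $(\widehat\Omega_x-\Omega_x)(\Sigma_x^{(2)}-\Sigma_x)\Omega_x$ is a product of one small factor with two $O(1)$ factors, and its max-norm is controlled by exactly the same $\|\widehat\Omega_x-\Omega_x\|_{\max}\cdot\|(\Sigma_x^{(2)}-\Sigma_x)\Omega_x\|_1$ type product you were trying to avoid. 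The cancellation therefore does not absorb the $\ell_1$-loss you were worried about, and the argument is back where it started. If you want a cleaner treatment than the paper's, you would need to bound $\|\Sigma_x^{(2)}\Omega_x^\top\|_1$ directly (e.g.\ using the $s$-sparsity of the rows of $\Omega_x$ together with $\|\Sigma_x^{(2)}\|_{\max}=O(1)$ to get $\|\Sigma_x^{(2)}\Omega_x^\top\|_1\lesssim\sqrt s$, which already matches the stated $s$ rate once combined with $\|\widehat\Omega_x-\Omega_x\|_{\max}=O(\sqrt s\,\lambda_n)$), rather than trying to make the residual disappear.

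Two smaller mismatches: (i) your rate $T^2\gamma\tau(\log p)^{1/2}/\sqrt n$ for $\|\widehat\Sigma_x^{(2)}-\Sigma_x^{(2)}\|_{\max}$ drops a $\tau(\log p)$ factor relative to what Lemma~\ref{lem3.1} actually delivers (the paper's bound is $\gamma\tau^2T^2(\log p)^{3/2}/\sqrt n$, as in Theorem~\ref{thm3.3}); (ii) for the plug-in error in $\widehat\Psi_{jk}$, the paper routes through $\mE|X_{i-1}^\top(\widehat\beta_k-\beta_k^*)|\lesssim|\widehat\beta-\beta^*|_1$ using $\mE X_{ij}^2\le C$ (as in the proof of Lemma~\ref{lem1.1}) rather than a high-probability bound on $n^{-1}\sum_i|X_{i-1,l}|$ — your version is not wrong, but it adds an unnecessary concentration event.
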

Indeed, under the scaling assumptions, $\|\widehat D-D\|_{\max}=o(1)$. With these preparatory results, we are ready to present the main theorem of this paper, which describes a procedure to find the critical value of $\sqrt n|\check\beta-\beta^*|_{\infty}$ using bootstrap.
\begin{thm}\label{thm2.10}
	Denote
	$$W=|\widehat D^{1/2}\eta|_\infty,$$
	where $\eta\sim N(0,I_{p^2})$ is independent of $(X_i)_{i=1}^n$ and $\widehat D$ is defined in \eqref{eqhatD}. Let the bootstrap critical value be given by $c(\alpha)=\inf\{t\in\mb{R}: \mP(W\leq t|\boldsymbol{X})\geq 1-\alpha\}$. Let assumptions (A1) --- (A5) and the assumptions in theorem \ref{thm2.1} hold. 
	Denote $v=c(s\gamma\tau^2T^2(\log p)^{3/2}/\sqrt n+|\widehat\beta-\beta^*|_1)$ for some constant $c$. Assume that $\pi(v)=Cv^{1/3}(1\lor\log(p/v))^{2/3}=o(1)$, then
	we have 
	$$\sup_{\alpha\in(0,1)}\bigg|\mP\Big(\sqrt n|\check\beta-\beta^*|_\infty>c(\alpha)\Big)-\alpha\bigg|=o(1).$$
\end{thm}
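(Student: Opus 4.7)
The plan is to pass from bootstrap validity to a Kolmogorov-distance comparison between the law of $\sqrt n|\check\beta-\beta^*|_\infty$ and the conditional law of $W$ given the data, and then translate CDF proximity into quantile proximity by a standard continuity argument. Let $F(t)=\mP(\sqrt n|\check\beta-\beta^*|_\infty\leq t)$, $G(t)=\mP(|D^{1/2}\eta|_\infty\leq t)$, and $\widehat G(t)=\mP(W\leq t\mid\boldsymbol{X})$. The target is to show $\sup_t|F(t)-\widehat G(t)|=o_\mP(1)$, from which the claim will follow at once because $\widehat G$ is a continuous conditional CDF.

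First I would establish $\sup_t|F(t)-G(t)|=o(1)$ by combining Theorem \ref{thm2.1}, Theorem \ref{thm2.2}, and Nazarov's anti-concentration inequality for the maximum of a centred Gaussian vector in $\mb R^{p^2}$. Theorem \ref{thm2.1} gives $\mP(|\sqrt n(\check\beta-\beta^*)-\sqrt n\Omega\nabla L_n(\beta^*)|_\infty>\zeta_1)<\zeta_2$ with $\zeta_1\sqrt{1\lor\log(p/\zeta_1)}=o(1)$ and $\zeta_2=o(1)$, which lets me swap $\sqrt n(\check\beta-\beta^*)$ for $\sqrt n\Omega\nabla L_n(\beta^*)$ at a CDF cost of order $\zeta_1\sqrt{\log p}+\zeta_2=o(1)$ after anti-concentration. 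Theorem \ref{thm2.2} then matches $\sqrt n\Omega\nabla L_n(\beta^*)$ to a centred Gaussian with covariance $D$, which has the same law as $D^{1/2}\eta$ since $\sum_{i=1}^n z_i/\sqrt n\sim N(0,D)$.

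Second I would upgrade this to $\sup_t|F(t)-\widehat G(t)|=o_\mP(1)$ via a Gaussian comparison between $N(0,D)$ and $N(0,\widehat D)$. Conditional on the data, both $D^{1/2}\eta$ and $\widehat D^{1/2}\eta$ are centred Gaussian vectors in $\mb R^{p^2}$, so the Chernozhukov--Chetverikov--Kato Gaussian-comparison lemma yields
\begin{equation*}
\sup_t\bigl|G(t)-\widehat G(t)\bigr|\leq C\|\widehat D-D\|_{\max}^{1/3}\bigl(1\lor\log(p^2/\|\widehat D-D\|_{\max})\bigr)^{2/3}.
\end{equation*}
By Theorem \ref{cov estimation}, on an event of probability at least $1-12p^{-c}$ the right-hand side is at most a constant multiple of $\pi(v)$, which is $o(1)$ by hypothesis. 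Combined with the first step this yields $\sup_t|F(t)-\widehat G(t)|=o_\mP(1)$.

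Finally, to translate CDF closeness into the claimed quantile statement, pick an event $\mathcal E_n$ with $\mP(\mathcal E_n)\to1$ on which $\sup_t|F(t)-\widehat G(t)|\leq\delta_n=o(1)$. On $\mathcal E_n$, continuity of the Gaussian CDF $\widehat G$ gives $\widehat G(c(\alpha))=1-\alpha$ exactly, so $|F(c(\alpha))-(1-\alpha)|\leq\delta_n$ uniformly in $\alpha$; adding the contribution from $\mathcal E_n^c$ contributes only $o(1)$ and yields the desired uniform bound. The main technical obstacle is the Gaussian-comparison step, which requires the diagonal entries of $D$ (and hence of $\widehat D$ on the same high-probability event) to be bounded above and bounded away from zero; these bounds follow from the eigenvalue condition $0<c\leq\lambda_{\min}(\Sigma_x)\leq\lambda_{\max}(\Sigma_x)\leq C$, the hypothesis $\mu_k>c_1>0$, and the positivity of $\mE[\psi(\varepsilon_{ik})^2]$ under the symmetric-innovation model. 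Carefully book-keeping the three approximation errors --- the one-step remainder $\zeta_1$, the Gaussian approximation of Theorem \ref{thm2.2}, and the covariance estimation rate $v$ --- so that they vanish simultaneously under (A1)--(A5) is the most delicate part of the argument.
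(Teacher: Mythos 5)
Your first two steps --- chaining Theorem~\ref{thm2.1}, Theorem~\ref{thm2.2}, and Nazarov's anti-concentration to get $\sup_t|F(t)-G(t)|=o(1)$, and then applying a Gaussian comparison between $N(0,D)$ and $N(0,\widehat D)$ together with Theorem~\ref{cov estimation} to get $\sup_t|G(t)-\widehat G(t)|\lesssim\pi(v)$ on a high-probability event --- are correct and are exactly the ingredients the paper uses. The problem is the last step.

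There is a genuine gap in the passage from ``$\sup_t|F(t)-\widehat G(t)|\leq\delta_n$ on $\mathcal E_n$'' to the theorem's conclusion. You argue that $\widehat G(c(\alpha))=1-\alpha$ and hence $|F(c(\alpha))-(1-\alpha)|\leq\delta_n$ on $\mathcal E_n$, and then conclude. But $F(c(\alpha))$ is the deterministic CDF $F$ evaluated at the \emph{random} point $c(\alpha)$, while the theorem's target is $\mP\big(\sqrt n|\check\beta-\beta^*|_\infty\leq c(\alpha)\big)$, a joint probability over the data. These two quantities are not equal: $c(\alpha)$ is computed from $\widehat D$, which is a function of the same sample $\boldsymbol X$ that generates the test statistic, so $c(\alpha)$ and $\sqrt n|\check\beta-\beta^*|_\infty$ are dependent random variables. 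In general, $\mP(T_0\leq c)\neq\mE[F(c)]$ when $c$ and $T_0$ are dependent (as an extreme illustration, if $c=T_0$ then $\mP(T_0\leq c)=1$ while $F(c)$ has a nondegenerate distribution). Controlling $|F(c(\alpha))-(1-\alpha)|$ on $\mathcal E_n$ therefore does not control the quantity in the theorem.

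The paper closes this gap by inserting the \emph{deterministic} quantile $c_z(\alpha)$ of $N(0,D)$ as an intermediary. Lemma 3.2 of Chernozhukov--Chetverikov--Kato gives $\mP\big(c(\alpha)\leq c_z(\alpha+\pi(v))\big)\geq1-\mP(\Gamma>v)$ and $\mP\big(c_z(\alpha)\leq c(\alpha+\pi(v))\big)\geq1-\mP(\Gamma>v)$ with $\Gamma=\|\widehat D-D\|_{\max}$. One then sandwiches $\mP\big(\sqrt n|\check\beta-\beta^*|_\infty>c(\alpha)\big)$ by $\mP\big(\sqrt n|\check\beta-\beta^*|_\infty>c_z(\alpha\pm\pi(v))\big)\pm\mP(\Gamma>v)$; since $c_z(\cdot)$ is nonrandom, $\mP\big(\sqrt n|\check\beta-\beta^*|_\infty>c_z(\alpha\pm\pi(v))\big)$ \emph{is} $1-F$ evaluated at a fixed point and your step~1 bound $\sup_t|F(t)-G(t)|=o(1)$ together with $G(c_z(\beta))=1-\beta$ finishes it. Your approach swaps the order of the comparison (first compare $\widehat G$ to $F$, then plug in $c(\alpha)$) and in doing so loses the ability to decouple the random critical value from the random test statistic; the quantile-comparison lemma must be applied \emph{before} any CDF is evaluated at $c(\alpha)$.
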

This result suggests a way to not only find the asymptotic distribution, but also to provide an accurate critical value $c(\alpha)$ using multiplier bootstrap. Under the null hypothesis $H_0$, we have $\sqrt n|\check\beta-\beta^0|_{\infty}=\sqrt n|\check\beta-\beta^*|$. This verifies the validity of having \eqref{eq2.9} as a test statistics for simultaneous inference. 

\section{Estimation}\label{estimation}
Many estimation tasks are needed as preparatory results for proving Theorem \ref{thm2.10}. For instance, Theorem \ref{thm2.10} requires an estimation of the theoretical covariance matrix $D$ of the Gaussian analogue $Z$, as stated in Theorem \ref{cov estimation}. Besides, the convergence of CLIME estimator (section 2.3) depends on the convergence of corresponding covariance matrix. Therefore, these problems requires us to develop a new estimation theory that delivers the convergence even in ultra-high dimensional regime.

The success of high-dimensional estimation relies heavily on the application of probability concentration inequality, among which Bernstein-type inequality is especially important. The celebrated Bernstein's inequality (\cite{bernstein1946theory}) provides an exponential concentration inequality for sums of independent random variables which are uniformly bounded. Later works relaxed the uniform boundedness condition and extended the validity of Bernstein inequality to independent random variables that have finite exponential moment; see for example, \cite{massart2007concentration} and \cite{wainwright2019high}.

Despite the extensive body of work on concentration inequalities for independent random variables, literature remains quiet when it comes to establishing exponential-type tail concentration results for random process. Some related existing work includes Bernstein inequality for sums of strong mixing processes (\cite{merlevede2009bernstein}), Bernstein inequality under functional dependence measures (\cite{zhang2019robust}), etc. In a more recent work, \cite{liu2021robust} established a sharp Bernstein inequality for VAR model using the definition of spectral decay index, which improved the current rate by a factor of $(\log n)^2$. In this paper, we will derive another Bernstein inequality for VAR model under slightly different condition from \cite{liu2021robust}. Before presenting the main results, recall the definition of $\tau$ in definition \ref{defn2.1}.

\begin{lem} \label{lem3.1}
Let $\{X_i\}_{i=0}^n$ be generated by a VAR(1) model. Suppose $G:\mb{R}^p\to\mb{R}$ satisfies that 
\begin{equation}\label{eq3.1}
|G(X)-G(Y)|\leq|X-Y|_\infty,
\end{equation}
and that $|G(x)|\leq B$ for all $x\in\mb{R}$. Assume that $\mE[|\varepsilon_{ij}|^2]\leq \sigma^2$ for all $j=1,\dots,p$. Then there exists some constants $C_1,C_2,C_3,C_4>0$ only depending on $\rho$ and $\sigma$, such that
\begin{align*}\mP\bigg(\Big|\frac1n\sum_{i=1}^n G(X_{i-1})-\mE[&G(X_{i-1})]\Big|\geq x\bigg)\leq2\exp\bigg\{-\frac{nx^2}{C_3n^{-1}\gamma^2\tau^3+C_4\tau Bx}\bigg\}\\
&+2\exp\bigg\{-\frac{nx^2}{(1+C_1B^{-2})\gamma^2\tau^4B^2(\log p)^2(n^{-1}\tau\log p+1)+C_2\tau^2 B(\log p)x}\bigg\}.
\end{align*}
Specifically, under assumption (A2), we see that $\tau(\log p)/n\to0$. So for sufficiently large $B>0$, we have 
\begin{align}\label{eq3.2}\mP\bigg(\Big|\frac1n\sum_{i=1}^n G(X_{i-1})-\mE[G(X_{i-1})]\Big|\geq x\bigg)
\leq 4\exp\bigg\{-\frac{nx^2}{C_1'\gamma^2\tau^4B^2(\log p)^2+C_2'\tau^2 B(\log p)x}\bigg\},
\end{align}
for some positive constants $C_1',C_2'$ depending only on $\rho$ and $\sigma$.
\end{lem}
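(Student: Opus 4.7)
The plan is a truncation-plus-blocking argument, closely following the Bernstein inequality for VAR models in \cite{liu2021robust}: approximate each $X_{i-1}$ by a finite-memory surrogate that depends only on the last $m\tau$ innovations, partition the indices $\{1,\dots,n\}$ into blocks of length $m\tau$ so that after truncation well-separated blocks become independent, and then apply the classical Bernstein inequality for independent bounded random variables. Two different variance estimates for the truncated summand will produce the two exponential terms in the statement.

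First, using stationarity, write $X_{i-1}=\sum_{j\geq 0}A^{j}\varepsilon_{i-1-j}$ and, for an integer $m$ to be chosen, set $\tilde X^{(m)}_{i-1}=\sum_{j=0}^{m\tau-1}A^{j}\varepsilon_{i-1-j}$. The Lipschitz assumption \eqref{eq3.1} gives
$$|G(X_{i-1})-G(\tilde X^{(m)}_{i-1})|\leq\|A^{m\tau}\|_{\infty}\,|X_{i-1-m\tau}|_{\infty},$$
and since $\|A^{\tau}\|_{\infty}<\rho<1$ by Definition \ref{defn2.1}, submultiplicativity yields $\|A^{m\tau}\|_{\infty}\leq\rho^{m}$. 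Combined with the second-moment bound on $|X_{i-1-m\tau}|_{\infty}$ (costing a $\log p$ factor from a union bound and Markov), the choice $m\asymp\log(n\vee p)$ makes the contribution of $n^{-1}\sum_i [G(X_{i-1})-G(\tilde X^{(m)}_{i-1})]$ to the deviation negligible with probability $\geq 1-\exp(-c n x)$.

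Second, with $L=m\tau$, the surrogate $G(\tilde X^{(m)}_{i-1})$ depends on $\varepsilon_{i-L},\dots,\varepsilon_{i-1}$ only. I would partition $\{1,\dots,n\}$ into consecutive blocks of length $L$ and further decompose into $2L$ interleaved sub-sums so that within each sub-sum the surviving terms depend on pairwise disjoint innovations and are hence independent. The variance of each summand is estimated in two complementary ways: crudely by $B^{2}$ (from $|G|\leq B$), and more tightly by $\operatorname{Var}(G(\tilde X^{(m)}_{i-1}))\lesssim\mE|\tilde X^{(m)}_{i-1}|_{\infty}^{2}\lesssim\gamma^{2}\tau$, where $\gamma^{2}$ comes from $\|A^{t}\|\leq\gamma$ for $0\leq t<\tau$, the subsequent $\tau$-blocks contribute $\rho^{2k}$ and telescope, and the coordinatewise innovations have variance at most $\sigma^{2}$. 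Applying the classical Bernstein inequality to each interleaved sub-sum, combining across the $2L$ strands, and using the two variance estimates in parallel gives the two exponential terms in the stated bound: one with variance proxy $n^{-1}\gamma^{2}\tau^{3}$ and Bernstein scale $\tau B$ (from the sharper variance estimate), and one with variance proxy $\gamma^{2}\tau^{4}B^{2}(\log p)^{2}$ and scale $\tau^{2}B\log p$ (from the cruder $B^{2}$ estimate, with the $(\log p)^{2}$ arising from $L=m\tau\asymp\tau\log p$). Adding the truncation error yields the full stated inequality; under the additional hypothesis $\tau(\log p)/n\to 0$, the first exponential is dominated by the second, producing the cleaner form \eqref{eq3.2}.

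The main obstacle will be careful bookkeeping of three competing parameters: the truncation depth $m$ (large enough to wash out the infinite-memory tail, small enough that $L=m\tau$ fits many times into $n$), the variance estimate of $G(\tilde X^{(m)}_{i-1})$ (trading the moment bound $\gamma^{2}\tau$ against the trivial $B^{2}$, which is precisely what produces the two different exponents), and the $\log p$ factors accumulated whenever an $\ell_{\infty}$ maximum over $p$ coordinates is bounded. The delicate point is carrying both variance estimates simultaneously through the Bernstein step, rather than collapsing them into a single cruder inequality, which is what produces the two-term structure and, eventually, the refinement \eqref{eq3.2} under assumption (A2).
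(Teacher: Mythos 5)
Your proposal is a genuinely different route from the paper's. The paper does not truncate $X_{i-1}$ to a finite-memory surrogate and then block; instead it performs a martingale decomposition $\sum_{i=1}^n G(X_i)-\mE G(X_i)=\sum_{j=-\infty}^n L_j$ with $L_j=\sum_i P_j(G(X_i))$, where $P_j$ are the innovation-projection operators, then splits the sum at $j=-s$ with $s\asymp\tau\log p$ and bounds the moment-generating function of each $L_j$ using $|L_j|\leq\sum_i\min\{p\rho^{-1}\gamma\rho^{(i-j)/\tau}\eta_j,\,2B\}$ with $\eta_j$ a function of the single innovation $\varepsilon_j$. The two exponential terms arise from the tail ($j\leq -s$) and the near part ($-s<j\leq n$), not from carrying two variance proxies through a single Bernstein step. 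The crucial advantage of the paper's scheme is that the truncation $\min\{\cdot,2B\}$ is applied to each martingale increment, whose random factor $\eta_j$ has controlled second moment by assumption; one never needs a high-probability bound on $|X_i|_\infty$.

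This is where your proposal has a real gap. You need to control $n^{-1}\sum_i\bigl[G(X_{i-1})-G(\tilde X^{(m)}_{i-1})\bigr]$, which reduces to controlling $\rho^{m}\,n^{-1}\sum_i|X_{i-1-m\tau}|_\infty$, and you assert this is negligible ``with probability $\geq 1-\exp(-cnx)$''. But the lemma only assumes $\mE[\varepsilon_{ij}^2]\leq\sigma^2$, hence $\mE|X_i|_\infty^2\lesssim p$, and a Markov/union-bound argument gives at best a polynomial tail in $p$ and $n$, not an exponential one. Making $\rho^m$ small by taking $m\asymp\log(n\vee p)$ shrinks the constant but does not change the polynomial character of the tail, so your truncation-error term would dominate the two exponential terms you aim to produce and the stated bound would not follow. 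This is precisely the obstruction the paper's innovation-by-innovation decomposition is designed to avoid: it replaces ``control the $\ell_\infty$ tail of $X$'' (which requires more than two moments) with ``control the second moment of a single scalar $\eta_j$'' and uses the $\min\{\cdot,2B\}$ clip per projection. To salvage your route you would either have to strengthen the moment assumption on the innovations (e.g., sub-exponential tails) or restructure the truncation so that the clip acts inside the conditional increments, at which point you are effectively reinventing the paper's $P_j$-decomposition.
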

\begin{rem}
	Note that the Lipschitz condition \eqref{eq3.1} is slightly different from that in \cite{liu2021robust}, where instead, they assumed that 
	\begin{equation}\label{eq3.3}|G(x)-G(y)|\leq g^\top|x-y|,\end{equation}
	for some vector $g\in\mb{R}^p.$ Since condition \eqref{eq3.1} is weaker than \eqref{eq3.3}, the additional $(\log p)$ appears in the denominator of right-hand side in \eqref{eq3.2}. For more detailed comparison of different versions of Bernstein inequalities, we refer readers to \cite{liu2021robust} and the references therein.
\end{rem}
With a minor modification of the proof of Lemma \ref{lem3.1}, we have the following version of Bernstein inequality which includes a bounded function of the latest innovation $\varepsilon_{i}$ as a multiple.

\begin{cor} \label{cor3.2}
Let $\{X_i\}_{i=0}^n$ be generated by a VAR(1) model. Suppose  $|h(x)|\leq 1$ and $G:\mb{R}^p\to\mb{R}$ satisfies that $$|G(X)-G(Y)|\leq|X-Y|_\infty,$$
and that $|G(x)|\leq B$ for all $x\in\mb{R}$. Assume that $\mE[|\varepsilon_{ij}|^2]\leq \sigma^2$ for all $j=1,\dots,p$. Then there exists some constants $C_1,C_2,C_3,C_4>0$ only depending on $\rho$ and $\sigma$, such that
\begin{align*}\mP\bigg(\Big|\frac1n\sum_{i=1}^n h(\varepsilon_i)G(X_{i-1})&-\mE[h(\varepsilon_i)G(X_{i-1})]\Big|\geq x\bigg)\leq2\exp\bigg\{-\frac{nx^2}{C_3n^{-1}\gamma^2\tau^3+C_4\tau Bx}\bigg\}\\
&+2\exp\bigg\{-\frac{nx^2}{(1+C_1B^{-2})\gamma^2\tau^4B^2(\log p)^2(n^{-1}\tau\log p+1)+C_2\tau^2 B(\log p)x}\bigg\}.
\end{align*}
Specifically, under assumption (A2), we see that $\tau(\log p)/n\to0$. So for sufficiently large $B>0$, we have 
\begin{align*}\mP\bigg(\Big|\frac1n\sum_{i=1}^n h(\varepsilon_i)G(X_{i-1})&-\mE[h(\varepsilon_i)G(X_{i-1})]\Big|\geq x\bigg)
\leq 4\exp\bigg\{-\frac{nx^2}{C_1'\gamma^2\tau^4B^2(\log p)^2+C_2'\tau^2 B(\log p)x}\bigg\},
\end{align*}
for some positive constants $C_1',C_2'$ depending only on $\rho$ and $\sigma$.
\end{cor}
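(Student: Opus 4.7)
The plan is to reduce Corollary~\ref{cor3.2} to Lemma~\ref{lem3.1} by exploiting the fact that $\varepsilon_i$ is independent of $X_{i-1}$ in the VAR(1) model. Let $\bar h := \mE[h(\varepsilon_i)]$, which is a constant independent of $i$ since the innovations are i.i.d. The key algebraic step is the decomposition
\begin{equation*}
  h(\varepsilon_i)G(X_{i-1}) - \mE[h(\varepsilon_i)G(X_{i-1})]
  = \bar h\bigl(G(X_{i-1}) - \mE[G(X_{i-1})]\bigr) + \bigl(h(\varepsilon_i) - \bar h\bigr)G(X_{i-1}),
\end{equation*}
where I have used $\mE[h(\varepsilon_i)G(X_{i-1})] = \bar h\,\mE[G(X_{i-1})]$ by independence. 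Averaging over $i$ and applying a union bound at level $x/2$ to each summand reduces the problem to controlling the two pieces separately.

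For the first piece, $|\bar h|\leq 1$ implies that $\bar h\, G$ still satisfies the hypotheses of Lemma~\ref{lem3.1} with the same Lipschitz constant and the same uniform bound $B$. The Lemma~\ref{lem3.1} tail bound thus applies verbatim to $n^{-1}\sum_i \bar h(G(X_{i-1})-\mE[G(X_{i-1})])$.

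For the second piece, I would treat $M_i := (h(\varepsilon_i)-\bar h)\,G(X_{i-1})$ as a martingale difference sequence with respect to $\mF_i := \sigma(\varepsilon_j : j\leq i)$. Since $G(X_{i-1})$ is $\mF_{i-1}$-measurable and $h(\varepsilon_i)-\bar h$ is mean zero and independent of $\mF_{i-1}$, we get $\mE[M_i\mid\mF_{i-1}]=0$; moreover $|M_i|\leq 2B$ and $\mE[M_i^2\mid\mF_{i-1}]\leq B^2$. A standard Freedman/Bernstein inequality for bounded martingale differences then gives
\begin{equation*}
  \mP\!\left(\Bigl|\tfrac1n{\textstyle\sum_{i=1}^n} M_i\Bigr|\geq x\right)
  \leq 2\exp\!\left\{-\frac{nx^2}{2B^2 + \tfrac{4}{3}Bx}\right\}.
\end{equation*}
Because $\gamma\geq 1$, $\tau\geq 1$, and $\log p\geq 1$, the denominator here is dominated, up to universal constants, by each of the denominators appearing in the bound of Lemma~\ref{lem3.1}, so this martingale piece contributes a tail no larger than the one stated in the corollary.

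The only obstacle I anticipate is purely bookkeeping: combining the two bounds via the union bound and rescaling $x\mapsto x/2$ requires adjusting the universal constants $C_1,C_2,C_3,C_4$ so that the sum of the two tails fits the exact form stated in the corollary (and analogously for the simplified version \eqref{eq3.2} under assumption (A2)). No new probabilistic machinery beyond Lemma~\ref{lem3.1} and a martingale Bernstein inequality is required, which justifies the paper's description of the proof as a ``minor modification'' of the proof of Lemma~\ref{lem3.1}.
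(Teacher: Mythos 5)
Your argument is correct, and it takes a genuinely different route from the paper's. The paper's proof is a one-line remark: re-run the martingale decomposition $\sum_j L_j$ from the proof of Lemma~\ref{lem3.1}, now with $L_j = \sum_i P_j(h(\varepsilon_i)G(X_{i-1}))$, and observe that boundedness of $h$ and independence of $\varepsilon_i$ from $\mF_{i-1}$ leave the bounds on $|L_j|$ and its conditional moments essentially unchanged. What you do instead is \emph{black-box} Lemma~\ref{lem3.1} rather than reopening its proof: the decomposition $h(\varepsilon_i)G(X_{i-1}) - \mE[h(\varepsilon_i)G(X_{i-1})] = \bar h\,(G(X_{i-1}) - \mE G(X_{i-1})) + (h(\varepsilon_i)-\bar h)G(X_{i-1})$ cleanly separates the piece already covered by Lemma~\ref{lem3.1} (applied to $\bar h G$, which satisfies the same Lipschitz and boundedness hypotheses since $|\bar h|\leq1$) from a new bounded martingale-difference piece, handled by a standard Freedman/Bernstein inequality. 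This is in fact exactly what happens inside the paper's martingale sum --- the terms $P_j$ for $j<i$ reduce, by independence, to $\bar h\,P_j(G(X_{i-1}))$, and the new $j=i$ term is precisely $(h(\varepsilon_i)-\bar h)G(X_{i-1})$ --- but your version surfaces this structure explicitly and avoids re-verifying the moment estimates of Lemma~\ref{lem3.1}. The trade-offs: your route is more modular and transparent, at the cost of a union bound that rescales $x\mapsto x/2$, adds a third exponential term, and so inflates the universal constants $C_1,\dots,C_4$ and the leading prefactors; since the corollary asserts existence of such constants rather than inheriting specific ones, this is harmless. Your observation that the Freedman denominator $2B^2 + \tfrac43 Bx$ is dominated by the Lemma~\ref{lem3.1} denominators (using $\gamma\geq\|A^0\|=1$, $\tau\geq1$, $\log p\geq1$) is the right reason the extra term can be absorbed.
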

\begin{rem}
	Since the additional term $h(\varepsilon_i)$ is independent of $G(X_{i-1})$, the proof of Lemma \ref{lem3.1} directly applies without any extra technical difficulty.
\end{rem}

Equipped with our new Bernstein inequalities, several estimation results follow immediately. The next theorem regarding the estimation of $\Sigma_x$ is essential when we prove the convergence rate of CLIME estimator in section 2.3. 
\begin{thm}[Estimation of $\Sigma_x$]\label{thm3.3}
	Let $\widehat\Sigma_x=n^{-1}\sum_{i=1}^n X_{i-1}X_{i-1}^\top w(X_{i-1})$ and $\Sigma_x=\mE[X_iX_i^\top w(X_i)]$. Then with probability at least $1-4p^{-c_0}$ for some constant $c_0>0$, it holds that
	$$\|\widehat\Sigma_x-\Sigma_x\|_{\max}\lesssim\gamma\tau^2T^2 n^{-1/2}(\log p)^{3/2}.$$
\end{thm}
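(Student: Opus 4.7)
The plan is to apply the Bernstein-type inequality in Lemma \ref{lem3.1} entrywise to the difference $\widehat\Sigma_x-\Sigma_x$, and then take a union bound over the $p^2$ entries. Fix indices $1\le j,k\le p$ and set
\[
G_{jk}(x)=x_jx_kw(x),\qquad w(x)=\min\{1,T^3/|x|_\infty^3\},
\]
so that the $(j,k)$-entry of $\widehat\Sigma_x-\Sigma_x$ equals $n^{-1}\sum_{i=1}^nG_{jk}(X_{i-1})-\mE[G_{jk}(X_{i-1})]$.

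First I would verify the hypotheses of Lemma \ref{lem3.1} for $G_{jk}$, up to a rescaling by a Lipschitz constant. For the sup bound, if $|x|_\infty\le T$ then $w(x)=1$ and $|G_{jk}(x)|\le|x|_\infty^2\le T^2$; if $|x|_\infty>T$ then $|G_{jk}(x)|\le |x|_\infty^2\cdot T^3/|x|_\infty^3\le T^2$, so one may take $B=T^2$. For the Lipschitz control, the key point is that the weight $w$ was chosen precisely so that the product $x_jx_kw(x)$ has uniformly bounded gradient: a case analysis on whether $|x|_\infty\le T$ or $|x|_\infty>T$, and on whether the active coordinate $\ell\in\arg\max_m|x_m|$ coincides with $j$ or $k$, shows that $|\nabla G_{jk}(x)|_1\le C_0 T$ almost everywhere for an absolute constant $C_0$ (in the region $|x|_\infty>T$, factors of $T^3/|x|_\infty^2\le T$ arise, and in the region $|x|_\infty\le T$ one simply has $|x_j|+|x_k|\le 2T$). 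Integrating along line segments yields $|G_{jk}(x)-G_{jk}(y)|\le C_0T|x-y|_\infty$.

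Next I would apply Lemma \ref{lem3.1} to $G_{jk}/(C_0T)$, which is $1$-Lipschitz in $|\cdot|_\infty$ and bounded by $T/C_0$. The assumption (A2) ensures $\tau(\log p)/n=o(1)$ and $T$ is sufficiently large to enter the simplified regime \eqref{eq3.2}. Undoing the rescaling by letting $x\mapsto x/(C_0T)$ inside the deviation, the resulting bound is
\[
\mP\!\left(\left|n^{-1}\sum_{i=1}^nG_{jk}(X_{i-1})-\mE G_{jk}(X_{i-1})\right|\ge x\right)
\le 4\exp\!\left\{-\frac{nx^2}{C_1'\gamma^2\tau^4T^4(\log p)^2+C_2'\tau^2T^3(\log p)x}\right\}.
\]
Set $x=C\gamma\tau^2T^2(\log p)^{3/2}/\sqrt n$ for a sufficiently large constant $C$. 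The quadratic term in the denominator contributes $nx^2/(C_1'\gamma^2\tau^4T^4(\log p)^2)=C^2(\log p)/C_1'$, while the linear term is dominated by the quadratic one once $x\lesssim\gamma^2\tau^2T(\log p)$, which holds under the scaling built into (A2) and (A5). Hence each entry satisfies a bound of order $p^{-(c_0+2)}$.

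Finally, a union bound over the $p^2$ choices of $(j,k)$ absorbs the extra factor $p^2$ into the exponent and yields $\|\widehat\Sigma_x-\Sigma_x\|_{\max}\lesssim\gamma\tau^2T^2(\log p)^{3/2}n^{-1/2}$ with probability at least $1-4p^{-c_0}$. The main obstacle is the Lipschitz computation: because $w$ is non-smooth at $|x|_\infty=T$ and at the switching points of $\arg\max|x_m|$, the uniform bound $|\nabla G_{jk}|_1\le C_0T$ must be verified carefully on each piece, and this is where the specific cubic form of $w$ becomes essential (any slower decay would leave the gradient unbounded and spoil the entrywise Bernstein rate). The remaining steps are a direct invocation of Lemma \ref{lem3.1} and routine bookkeeping of the constants.
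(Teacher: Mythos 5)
Your proposal is correct and matches the paper's proof essentially verbatim: define $G_{jk}(x)=x_jx_kw(x)$, verify $|G_{jk}|\lesssim T^2$ and that $G_{jk}$ is $O(T)$-Lipschitz in $|\cdot|_\infty$, invoke Lemma \ref{lem3.1} after rescaling, choose $x\asymp\gamma\tau^2T^2(\log p)^{3/2}/\sqrt n$, and take a union bound over the $p^2$ entries. The only stylistic difference is that the paper establishes the Lipschitz bound via the substitution $u=w^{1/3}$ and a telescoping triangle inequality rather than your case-split gradient computation, and after rescaling the linear denominator term should read $C_2'\tau^2T^2(\log p)x$ rather than $T^3$, which is harmless since the quadratic term dominates at the chosen $x$.
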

We see that the convergence rate of CLIME estimator in Theorem \ref{thm4.1} essentially inherits from the convergence rate in Theorem \ref{thm3.3}, with an additional term $\|\Omega_x\|_1$. The following theorem plays an important role in verifying that the $\Delta$ defined in \eqref{eq2.8} is indeed negligible.

\begin{thm}[Estimation of $\Sigma$ by $\nabla^2 L_n(\beta^*)$]\label{thm3.4}
	Assume that $\mE[\varepsilon_{ik}^2]\leq\sigma^2$ for all $1\leq k\leq p$. Then for some constant $c_1>0$, with probability at least $1-4p^{-c_1}$, it holds that
	$$\|\nabla^2 L_n(\beta^*)-\Sigma\|_{\max}\lesssim \gamma\tau^2T^2 n^{-1/2}(\log p)^{3/2}.$$
\end{thm}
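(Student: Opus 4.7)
The plan is to exploit the block-diagonal structure of both $\nabla^2 L_n(\beta^*)$ and $\Sigma$, and reduce the entrywise max-norm estimate to a collection of scalar concentration bounds handled by Corollary~\ref{cor3.2}.

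First I would compute the Hessian of $L_n$ explicitly. Differentiating \eqref{eq2.3} twice in $\beta$ and using $X_{ik}-X_{i-1}^\top \beta_k^*=\varepsilon_{ik}$, one sees that $\nabla^2 L_n(\beta^*)$ is block-diagonal with $k$-th diagonal block
\[
H_k=\frac1n\sum_{i=1}^n \psi'(\varepsilon_{ik})\,X_{i-1}X_{i-1}^\top w(X_{i-1}).
\]
By \eqref{eq2.5}, $\Sigma=\mathrm{diag}(\mu_1\Sigma_x,\ldots,\mu_p\Sigma_x)$, and since $\varepsilon_i$ is independent of $X_{i-1}$, $\mathbb{E}[H_k]=\mu_k\Sigma_x$, which is exactly the $k$-th diagonal block of $\Sigma$. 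Therefore
\[
\|\nabla^2 L_n(\beta^*)-\Sigma\|_{\max}
=\max_{1\leq k\leq p}\,\max_{1\leq j_1,j_2\leq p}\bigl|[H_k]_{j_1 j_2}-\mu_k[\Sigma_x]_{j_1 j_2}\bigr|,
\]
and the theorem reduces to a uniform control of $p^3$ scalar empirical averages.

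Next I would apply Corollary~\ref{cor3.2} with $h(\varepsilon_i)=\psi'(\varepsilon_{ik})/\|\psi'\|_\infty$ (so $|h|\leq 1$ by property (ii) of $\ell$) and $G(x)=x_{j_1}x_{j_2}w(x)$. Because $w(x)\leq T^3/|x|_\infty^3$ when $|x|_\infty>T$ and $w\leq 1$ otherwise, the uniform bound $|G(x)|\leq T^2$ holds, so one may take $B=T^2$. For the Lipschitz condition with respect to $|\cdot|_\infty$, a short case analysis (split on whether $|x|_\infty\leq T$ and on which coordinate $m$ attains the $\ell_\infty$ maximum) shows that only at most three partial derivatives are nonzero (those along $e_{j_1}$, $e_{j_2}$, and $e_m$), each bounded by a constant multiple of $T$, yielding $|\nabla G(x)|_1\lesssim T$. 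Rescaling $G$ by its Lipschitz constant $L=O(T)$ so the rescaled function is $1$-Lipschitz with sup-norm bound $B/L=O(T)$, Corollary~\ref{cor3.2} gives
\[
\mathbb{P}\bigl(\bigl|[H_k]_{j_1 j_2}-\mu_k[\Sigma_x]_{j_1 j_2}\bigr|\geq x\bigr)
\leq 4\exp\!\left\{-\frac{nx^2}{C_1\gamma^2\tau^4 T^4(\log p)^2+C_2\tau^2 T^2(\log p)\,x}\right\}.
\]

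Finally, I would choose $x=c_0\,\gamma\tau^2 T^2(\log p)^{3/2}n^{-1/2}$ with $c_0$ large. Under Assumption (A2) the scaling $\tau(\log p)/n\to 0$ guarantees that the Gaussian term in the denominator dominates the linear term, producing a per-entry bound of order $p^{-c_0^2/C_1}$. A union bound over the $p^3$ triples $(k,j_1,j_2)$ then yields the stated rate on an event of probability at least $1-4p^{-c_1}$, with $c_1$ made arbitrarily large by adjusting $c_0$. The main obstacle is the careful Lipschitz verification for $G(x)=x_{j_1}x_{j_2}w(x)$ on the regime $|x|_\infty>T$, particularly when the $\ell_\infty$-maximizing coordinate differs from both $j_1$ and $j_2$; that is the place where derivatives of the clipping weight enter and must be balanced against the quadratic factor $|x_{j_1}x_{j_2}|$ to absorb the cubic growth of $T^3/|x|_\infty^3$. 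Once this bookkeeping is done, everything else is a direct plug-in into Corollary~\ref{cor3.2} and a union bound.
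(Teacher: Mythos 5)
Your proposal is correct and follows essentially the same route as the paper: the paper's proof of Theorem~\ref{thm3.4} reduces to $\max_{1\leq k\leq p}\|n^{-1}\sum_i \psi'(\varepsilon_{ik})X_{i-1}X_{i-1}^\top w(X_{i-1})-\mu_k\Sigma_x\|_{\max}$, then invokes Corollary~\ref{cor3.2} together with the Lipschitz calculation for $G_{j_1j_2}(x)=x_{j_1}x_{j_2}w(x)$ already carried out in the proof of Theorem~\ref{thm3.3}, followed by a union bound over $k,j_1,j_2$. You have simply spelled out the steps the paper compresses into ``it follows from the same argument as Theorem~\ref{thm3.3},'' including the normalization of $\psi'$ to make $|h|\leq 1$ and the rescaling of $G$ by its $O(T)$ Lipschitz constant, both of which are implicit in the paper's treatment.
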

While the last two theorems make use of Lemma \ref{lem3.1} in this paper, the next estimation for $\mu$ directly applies the concentration inequality in \cite{liu2021robust} thanks to the stronger assumption that $\widehat\mu$ satisfies.
\begin{lem}\label{lem1.1}
	Suppose that $\beta_k^*$ lies in a bounded $\ell_1$ normed ball for all $1\leq k\leq p$ and that $\mE[X_{ij}^2]\leq C$ for some constant $C>0$ and for all $1\leq j\leq p$. Then we have
	$$\mP\bigg(|\widehat\mu-\mu|_{\infty}\geq\gamma\tau^2\sqrt{\frac{\log p}{n}}+|\widehat\beta-\beta^*|_1\bigg)\leq 2p^{-c},$$
	for some positive constant $c$.
\end{lem}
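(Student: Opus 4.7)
The plan is to decompose the error $\widehat\mu_k-\mu_k$ into a sample-mean deviation piece plus a Lipschitz linearization piece, bound each uniformly in $k$, and then take a union bound. Write
$$\widehat\mu_k-\mu_k=\underbrace{\frac{1}{n}\sum_{i=1}^n\bigl[\psi'(\widehat\varepsilon_{ik})-\psi'(\varepsilon_{ik})\bigr]}_{I_k}+\underbrace{\frac{1}{n}\sum_{i=1}^n\bigl[\psi'(\varepsilon_{ik})-\mE\psi'(\varepsilon_{ik})\bigr]}_{II_k},$$
so that $|\widehat\mu-\mu|_\infty\le\max_k|I_k|+\max_k|II_k|$. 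The first summand is driven by plug-in error in $\widehat\beta$, and the second is a genuine sample-mean fluctuation.

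For $II_k$, the summands $\psi'(\varepsilon_{ik})$ are i.i.d.\ in $i$ (the $\varepsilon_i$ being i.i.d.) and uniformly bounded by $C$ thanks to condition (ii) on $\ell$. Applying the Bernstein inequality of \cite{liu2021robust}---this is the "stronger assumption" referred to in the statement of the lemma, since the sharper Lipschitz hypothesis \eqref{eq3.3} is trivially satisfied by a function of the present innovation alone---yields a sub-Gaussian-regime tail bound with variance proxy of order $\gamma^2\tau^4/n$. Choosing $t\asymp\gamma\tau^2\sqrt{\log p/n}$ and union-bounding over $k=1,\dots,p$ gives $\mP(\max_k|II_k|\ge c_1\gamma\tau^2\sqrt{\log p/n})\le p^{-c}$ for some $c>0$.

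For $I_k$, condition (ii) supplies $|\psi''|=|\ell^{(3)}|\le C$, so $\psi'$ is $C$-Lipschitz; combined with the identity $\widehat\varepsilon_{ik}-\varepsilon_{ik}=X_{i-1}^\top(\beta_k^*-\widehat\beta_k)$ this gives $|I_k|\le(C/n)\sum_i|X_{i-1}^\top(\widehat\beta_k-\beta_k^*)|$. Jensen's inequality followed by the bound $v^\top Mv\le|v|_1^2\|M\|_{\max}$ applied with $M=n^{-1}\sum_i X_{i-1}X_{i-1}^\top$ yields
$$|I_k|\le C|\widehat\beta_k-\beta_k^*|_1\sqrt{\bigl\|n^{-1}\sum_{i=1}^n X_{i-1}X_{i-1}^\top\bigr\|_{\max}}\le C|\widehat\beta-\beta^*|_1\cdot O_p(1),$$
the last factor being $O(1)$ with high probability by a modest adaptation of Theorem \ref{thm3.3}. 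Combining with the bound on $II_k$ through the triangle inequality yields the lemma. The main obstacle is precisely this last control: establishing $\|n^{-1}\sum_i X_{i-1}X_{i-1}^\top\|_{\max}=O_p(1)$ under only the second-moment assumption on $X_{ij}$ requires leveraging the VAR structure, the bounded-$\ell_1$ ball containing $\beta^*$, and the truncation at level $T$ built into the weight function $w$, so that the tails of $|X_{i-1}|_\infty$ can be handled in the style of Section \ref{estimation}.
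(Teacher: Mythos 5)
Your decomposition is genuinely different from the paper's, and it does not close. You split
$$\widehat\mu_k-\mu_k=\underbrace{\tfrac1n\sum_i[\psi'(\widehat\varepsilon_{ik})-\psi'(\varepsilon_{ik})]}_{I_k}+\underbrace{\tfrac1n\sum_i[\psi'(\varepsilon_{ik})-\mE\psi'(\varepsilon_{ik})]}_{II_k},$$
whereas the paper splits $\widehat\mu_k-\mu_k$ as the fluctuation of $\tfrac1n\sum_i\psi'(\widehat\varepsilon_{ik})$ around its \emph{expectation} plus the bias $\mE\psi'(\widehat\varepsilon_{ik})-\mE\psi'(\varepsilon_{ik})$. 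The distinction matters: in the paper's decomposition the plug-in error $\widehat\beta-\beta^*$ appears only under an expectation, so the bias term is handled with Lipschitzness of $\psi'$ plus the population second-moment bound on $X_{ij}$, giving $\mE|X_{i-1}^\top(\widehat\beta_k-\beta^*_k)|\lesssim|\widehat\beta-\beta^*|_1$ immediately. The $\gamma\tau^2\sqrt{\log p/n}$ factor in the claimed rate then comes from applying the \cite{liu2021robust} Bernstein inequality to $\psi'(\widehat\varepsilon_{ik})$, which is a bounded Lipschitz function of $(X_{ik},X_{i-1})$ with $\ell_1$-bounded coefficients, hence exhibits genuine VAR dependence.

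In your version the plug-in error sits inside a raw sample average. You reduce it to $|I_k|\lesssim|\widehat\beta_k-\beta^*_k|_1\sqrt{\|n^{-1}\sum_i X_{i-1}X_{i-1}^\top\|_{\max}}$ and correctly identify that you now must show $\|n^{-1}\sum_i X_{i-1}X_{i-1}^\top\|_{\max}=O_p(1)$, but you do not actually close this gap, and the route you sketch does not work: the truncation weight $w$ you invoke is not present in this expression (the estimator $\widehat\mu_k=\tfrac1n\sum_i\psi'(\widehat\varepsilon_{ik})$ is defined without any weighting), so you cannot appeal to Theorem~\ref{thm3.3}, which concerns the \emph{weighted} sample covariance. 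Under only a second-moment assumption on $X_{ij}$ the entries $X_{i-1,j}X_{i-1,k}$ have merely a first moment, and no boundedness, so neither the paper's Lemma~\ref{lem3.1} nor the cited inequality of \cite{liu2021robust} applies; the max-entry of the unweighted sample covariance need not be $O_p(1)$. This is a missing idea, not a routine adaptation. A secondary point: for your $II_k$ the summands $\psi'(\varepsilon_{ik})$ are i.i.d.\ and bounded, so Hoeffding gives the rate $\sqrt{\log p/n}$ directly; the $\gamma\tau^2$ you insert there is spurious (harmless as an upper bound, but it masks where that factor actually originates, namely the temporal dependence through $X_{i-1}$ in the other term).
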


\section{Gaussian Approximation}\label{GA}
Conducting simultaneous inference for high-dimensional data is always considered to be a hard task, since central limit theorem fails when the dimension of random vectors can grow as a function of the number of observation $n$, or even exceeds $n$. As an alternative to central limit theorem, \cite{chernozhukov2013gaussian} proposed Gaussian approximation theorem, which states that under certain conditions, the distribution of the maximum of a sum of independent high-dimensional random vectors can be approximated by that of the maximum of a sum of the Gaussian random vectors with the same covariance matrices as the original vectors. Their Gaussian approximation results allow the ultra-high dimensional cases, where the dimension $p$ grows exponentially in $n$. In the meantime, they also proved that Gaussian multiplier bootstrap method yields a high quality approximation of the distribution of the original maximum and showcased a wide range of application, such as high-dimensional estimation, multiple hypothesis testing, and adaptive specification testing. It is worth noticing that the results from \cite{chernozhukov2013gaussian} are only applicable when the sequence of random vectors is independent. 

\cite{zhang2017gaussian} generalized Gaussian approximation results to general high-dimensional stationary time series, using the framework of functional dependence measure (\cite{wu2005nonlinear}). We specifically mention that a direct application of Gaussian approximation from \cite{zhang2017gaussian} cannot deliver a desired conclusion in ultra-high dimensional regime, due to coarser capture of dependence measure for VAR model. In what follows, we will use refined argument to establish a new Gaussian approximation result for VAR model.

By Theorem \ref{thm2.1}, $\sqrt n|\check\beta-\beta^*|_{\infty}$ can be approximated by $\sqrt n|\Omega\nabla L_n(\beta^*)|_{\infty}$. Hence, we shall build a GA result for $\sqrt n\Omega\nabla L_n(\beta^*)$. Observe that $\sqrt n\Omega\nabla L_n(\beta^*)\in\mb{R}^{p^2}$ can be written as
\begin{equation*}
	\bigg(\frac1{\sqrt n}\sum_{i=1}^n\frac{\Omega_x}{\mu_1}\psi_\alpha(\varepsilon_{i1})X_{i-1}^\top w(X_{i-1}),\dots,\frac1{\sqrt n}\sum_{i=1}^n\frac{\Omega_x}{\mu_p}\psi_\alpha(\varepsilon_{ip})X_{i-1}^\top w(X_{i-1}),\bigg)^\top,
\end{equation*}
so it's sufficient to establish GA result for one sub-vector 
$$\frac1{\sqrt n}\sum_{i=1}^n\frac{\Omega_x}{\mu_k}\psi_\alpha(\varepsilon_{ik})X_{i-1}^\top w(X_{i-1}), \quad k=1,\dots, p.$$

Fix $1\leq k\leq p$ and denote $\Theta_k=\Omega_x\mu_k^{-1}$.
Let $X_{i,m}=\sum_{l=0}^mA^l\varepsilon_{i-l}$ be the $m$-approximation of $X_i$ with $m$ to be determined later. Let $Y_i=\psi_\alpha(\varepsilon_{ik})\Theta_kX_{i-1}w(X_{i-1})$ be the quantity that we will establish Gaussian approximation for and denote $T_Y=\sum_{i=1}^nY_i$. Analogously, let $Y_{i,m}=\psi_\alpha(\varepsilon_{ik})\Theta_kX_{i-1,m}w(X_{i-1,m})$ be the $m$-approximation of $Y_i$ and write $T_{Y,m}=\sum_{i=1}^nY_{i,m}$. For simplicity, assume $n=(m+M)w$, where $M\to\infty,$ $m\to\infty$ , $w\to\infty$ and $m/M\to 0$. Divide the interval $[1,n]$ into alternating large blocks $L_b=[(b-1)(M+m)+1,bM+(b-1)m]$ with $M$ points and small blocks $S_b=[bM+(b-1)m+1,b(M+m)]$ with $m$ points, for $1\leq b\leq w$. Denote
\begin{align*}
	\xi_b&=\sum_{i\in L_b}Y_{i,m}/\sqrt M,\quad T_{Y,S}=\sum_{b=1}^w\sum_{i\in S_b}Y_{i,m},\quad T_{Y,L}=\sum_{b=1}^w\sum_{i\in L_b}Y_{i,m},\\
	Z&=\frac1{\sqrt n}\sum_{i=1}^n U_i,\quad\text{where }U_i\sim N(0,\mu_k^{-2}\mE[\psi^2(\varepsilon_{ik})]\Omega_x\mE[X_iX_i^\top w^2(X_i)]\Omega_x^\top)
\end{align*}
Note that the $Y_{i,m}$ from different large blocks $L_b$ are independent, i.e. $\sum_{i\in L_b}Y_{i,m}$ is independent in $b=1\dots,w$.
The main result of this section is presented as follow.
\begin{thm}\label{thm5.4}
	Suppose $\mE[\varepsilon_{ik}^2]\leq\sigma^2$ for all $1\leq k\leq p$ and the odd function $\psi(\cdot)$ satisfies that $|\psi(\cdot)|\leq C$ and $|\psi'(\cdot)|\leq C$.  Suppose the scaling condition holds that $sT^2(\log(pn))^7/n\leq c_1n^{-c_2}$. Then for any $\eta>0$, the Gaussian Approximation holds that 
	\begin{align}\label{eq5.7}
		\mathcal{H}:&=\sup_{t\in\mb{R}}\bigg|\mP\big(|T_{Y}/\sqrt n|_\infty\leq t\big)-\mP\big(|Z|_\infty\leq t\big)\bigg|\notag\\
		&\lesssim f_1(\eta/2,m)+f_2(\eta/2,m)+\eta\sqrt{\log p}+\eta\sqrt{\log(1/\eta)}+cn^{-c'},
	\end{align}
	for some $c,c'>0$.
\end{thm}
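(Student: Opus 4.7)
The plan is a standard \emph{big-block/small-block} decoupling paired with the high-dimensional CLT of \cite{chernozhukov2013gaussian} (CCK). The key structural observation is that $Y_{i,m}$ depends only on the innovations $\varepsilon_i,\varepsilon_{i-1},\dots,\varepsilon_{i-1-m}$, so if adjacent large blocks $L_b$ of length $M$ are separated by small blocks $S_b$ of length $m$, then $\{\sum_{i\in L_b}Y_{i,m}\}_{b=1}^w$ is a mutually independent collection in $\mb{R}^{p^2}$, which is exactly the setting in which CCK's Gaussian approximation applies.

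First I would write the triangle inequality
$$|T_Y/\sqrt n|_\infty \leq |T_{Y,L}/\sqrt n|_\infty + |T_{Y,S}/\sqrt n|_\infty + |(T_Y-T_{Y,m})/\sqrt n|_\infty$$
and reduce $\mathcal H$ to a Gaussian approximation for $|T_{Y,L}/\sqrt n|_\infty$, paying for the two error events $\{|T_Y-T_{Y,m}|_\infty/\sqrt n > \eta/2\}$ and $\{|T_{Y,S}|_\infty/\sqrt n > \eta/2\}$, whose probabilities are exactly what the stated terms $f_1(\eta/2,m)$ and $f_2(\eta/2,m)$ encode. Since each coordinate of $Y_{i,m}$ is a bounded Lipschitz function of $X_{i-1,m}$ (the weight $w(\cdot)\le 1$ bounds the envelope and $|\psi|\le C$ controls the leading factor), Corollary \ref{cor3.2} supplies the relevant concentration after a union bound over the $p^2$ coordinates; the $m$-approximation contribution collapses because $X_{i-1}-X_{i-1,m}=\sum_{l>m}A^l\varepsilon_{i-1-l}$ decays geometrically through $\|A^l\|_\infty<\rho^{\lfloor l/\tau\rfloor}$.

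Next I would invoke CCK's high-dimensional CLT on the independent rescaled big-block sums $\xi_1,\dots,\xi_w$, obtaining a centered Gaussian vector $Z'$ with $\mathrm{Cov}(Z')=\mathrm{Cov}(T_{Y,L}/\sqrt n)$ and
$$\sup_t\bigl|\mP(|T_{Y,L}/\sqrt n|_\infty\leq t)-\mP(|Z'|_\infty\leq t)\bigr|\lesssim n^{-c'}.$$
The moment/envelope inputs reduce to $\|\xi_b\|_\infty\lesssim T^3\sqrt{s}$ (using $\|\Theta_k\|_{\max}\lesssim\|\Omega_x\|_1\lesssim\sqrt s$), and the scaling assumption (A4), $sT^2(\log(pn))^7/n\leq c_1 n^{-c_2}$, is precisely what forces CCK's remainder below $n^{-c'}$ once $M,m$ are chosen with $M\gg m$ and $w=n/(M+m)\to\infty$. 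Finally I would compare $Z'$ against the target $Z$ via the Gaussian comparison lemma of \cite{chernozhukov2013gaussian}: the covariances $\mathrm{Cov}(Z')$ and $D$ differ by the small-block omission (of order $m/(M+m)$) plus the gap $\mathrm{Cov}(Y_i)-\mathrm{Cov}(Y_{i,m})$ (geometric in $m$), both controllable under (A5); combining that covariance bound with Nazarov's anti-concentration inequality converts it into the $\eta\sqrt{\log p}+\eta\sqrt{\log(1/\eta)}$ contribution via an $\eta$-thickening of the level set $\{|Z|_\infty\leq t\}$.

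The main obstacle is the simultaneous calibration of the three parameters $m$, $M$, and the envelope $B$ for $\xi_b$: $m$ must be large enough to drive both coupling errors and the covariance gap to order $\eta$, $M$ must dominate $m$ so that the small blocks remain negligible, and yet $M^2B^2(\log p)^7$ must still be absorbed by $n$ at a polynomial rate for the CCK remainder to be $n^{-c'}$. Assumption (A4) is tuned precisely so that these constraints are jointly feasible with $w\to\infty$; the delicate part is threading the needle between them and verifying that the Lipschitz/boundedness hypotheses required by Corollary \ref{cor3.2} survive after multiplying $\psi(\varepsilon_{ik})$ by the \emph{a priori} unbounded vector $X_{i-1,m}$, which is exactly why the normalizing weight $w(\cdot)$ was built into the loss in the first place.
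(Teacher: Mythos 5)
Your proposal follows essentially the same big-block/small-block decomposition as the paper: (i) coupling $T_Y$ to $T_{Y,m}$ and then to $T_{Y,L}$, paying with $f_1(\eta/2,m)$ (Lemma \ref{lem4.2}) and $f_2(\eta/2,m)$ (Lemma \ref{lem4.1}); (ii) applying the CCK high-dimensional CLT to the independent block sums $\xi_b$ (the paper's Lemma \ref{lem4.3}); and (iii) converting the $\eta$-scale couplings into a CDF bound via anti-concentration (the paper uses the comparison theorem of Chernozhukov--Chetverikov--Kato 2015). The scaling condition (A4) and envelope $B_n\asymp\sqrt s\,T$ are deployed identically.

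One small but worth-flagging difference: you attribute the $\eta\sqrt{\log p}+\eta\sqrt{\log(1/\eta)}$ term to the \emph{Gaussian comparison} between the CCK-produced Gaussian $Z'$ (with covariance $\mathrm{Cov}(T_{Y,L}/\sqrt n)$) and the target $Z$ (with covariance $D$). That is a misattribution. In the paper's argument this term arises instead from \emph{anti-concentration of $Z$} applied to the $\eta$-thickened event $\{||T_{Y,L}/\sqrt n|_\infty-t|\le\eta\}$, which is needed precisely because the coupling steps cost $\eta/2$ each. The covariance-mismatch issue between $Z'$ and $Z$ is a separate error which the paper silently absorbs into the $n^{-c'}$ remainder of Lemma \ref{lem4.3}; your proposal is in fact \emph{more careful} than the paper in explicitly raising it and noting it should be controlled by the geometric decay of $\mathrm{Cov}(Y_i)-\mathrm{Cov}(Y_{i,m})$ and the $m/(M+m)$ small-block omission. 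So keep the two anti-concentration applications logically separate: one for the coupling $\eta$ (producing the stated $\eta$-terms), and one inside the Gaussian comparison (contributing to the $n^{-c'}$ remainder). With that bookkeeping fix the argument is the same as the paper's, and indeed closes a gap the paper leaves implicit.
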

This theorem gives an upper bound on the supremum of the difference between the distribution of the maximum of sum of $Y_i$ and that of the maximum of sum of Gaussian vectors $U_i$ with the same covariance. Now, we present the outline of the proof of the previous theorem, while we leave the complete proof in the appendix. 

First, we show that the sum of $Y_{i,m}$ in the small blocks are negligible, so $T_{Y,m}\approx T_{Y,L}$. Next, we prove that the sum of $Y_i$ can be approximated by its $m$-approximation, that is, $T_Y\approx T_{Y,m}\approx T_{Y,L}$. Since $T_{Y,L}$ is a sum of independent random vector $\{\sum_{i\in L_b}Y_{i,m}\}_{b=1}^w$, the GA theorem from \cite{chernozhukov2013gaussian} can be applied.

\section{Numerical Experiments}\label{num}
In this section, we evaluate the performance of the proposed bootstrap-assist procedure in simultaneous inference. We consider the model \eqref{var1}, where $\varepsilon_{ij}$'s are i.i.d.~Student's $t$-distributions with $df=5$ or $10$. Let $s =\lfloor \log p \rfloor $. We pick $n=30$ and $p=10$ in the numerical setup. For the true transition matrix $A=(a_{ij})$, we consider the following designs.
\begin{itemize}
	\item [(1)] Banded: $A = (\lambda^{|i-j|}{\bf 1}\{|i-j|\leq s\})$ and $\lambda = 0.5$.
	\item [(2)] Block diagonal: $A = \text{diag}\{A_i\}$, where each $A_i \in \mathbb{R}^{s\times s}$ has $\lambda_i$ on the diagonal and $\lambda_i^2$ on the superdiagonal with $\lambda_i \sim Unif(-0.8, 0.8).$
\end{itemize}
The design in (1) is further scaled by $2\rho(A)$ to ensure that $\rho(A)<1$. Hence sparse symmetric matrices are generated in (1) and sparse asymmetric matrices are constructed in (2).  We draw the qq-plots of the data quantile of $\sqrt n|\check\beta-\beta^*|_{\infty}$ versus the data quantile of $W$ defined in Theorem \ref{thm2.10} from $m=100$ duplicates. The qq-plots are shown in figure \ref{banded} and figure \ref{block} for banded and block diagonal designs respectively.
\begin{figure}[H]
\begin{minipage}[t]{\linewidth}
\centering
\includegraphics[width=12cm]{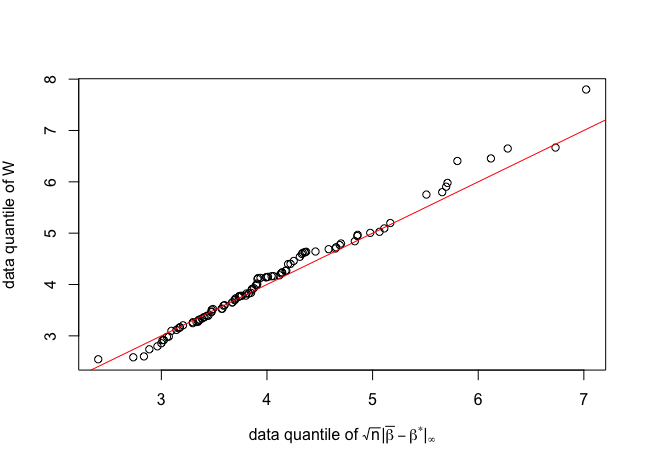}
\caption{The qq-plot of banded design.}
\label{banded}
\end{minipage}%
\\
\begin{minipage}[t]{\linewidth}
\centering
\includegraphics[width=12cm]{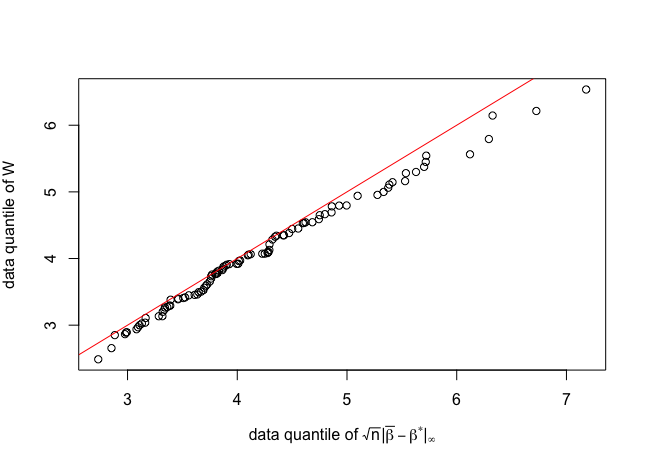}
\caption{The qq-plot of block diagonal design.}
\label{block}
\end{minipage}
\end{figure}



\appendix
\section{Proofs of Results in Section \ref{main}}
Before proceeding with the proofs, we state a helpful lemma that is repeatedly used throughout the paper and present its proof. This simply lemma is an application of the triangle inequality to the product of two matrices.
\begin{lem}\label{lem1.2}
	Let $A, B$ and $\widehat A, \widehat B$ be $p\times p$ symmetric matrices and $\|A-\widehat A\|_{1}=o(1)$. Suppose $\|A\|_1=O(1)$ and $\|B\|_1=O(1)$. Then $\|AB-\widehat A\widehat B\|_{\max}\lesssim \|A-\widehat A\|_{\max}+\|B-\widehat B\|_{\max}$.
\end{lem}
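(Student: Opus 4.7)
The plan is to reduce everything to a single algebraic decomposition and two entrywise bounds on matrix products. Concretely, I would write
\begin{equation*}
AB-\widehat A\widehat B = (A-\widehat A)B+\widehat A(B-\widehat B),
\end{equation*}
so by the triangle inequality it suffices to bound each summand in the entrywise max norm.

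Next I would record the two elementary product inequalities that for any $p\times p$ matrices $M,N$ one has, from $|(MN)_{ij}|=|\sum_k M_{ik}N_{kj}|\leq\sum_k|M_{ik}||N_{kj}|$,
\begin{equation*}
\|MN\|_{\max}\leq \|M\|_{\max}\|N\|_1 \qquad\text{and}\qquad \|MN\|_{\max}\leq \|M\|_{\infty}\|N\|_{\max}.
\end{equation*}
Applying the first inequality to $(A-\widehat A)B$ yields $\|(A-\widehat A)B\|_{\max}\leq \|A-\widehat A\|_{\max}\|B\|_1\lesssim \|A-\widehat A\|_{\max}$ by the hypothesis $\|B\|_1=O(1)$. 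Applying the second inequality to $\widehat A(B-\widehat B)$ yields $\|\widehat A(B-\widehat B)\|_{\max}\leq \|\widehat A\|_{\infty}\|B-\widehat B\|_{\max}$.

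The only remaining step is to show $\|\widehat A\|_{\infty}=O(1)$. Here I would use the symmetry assumption: since $\widehat A$ is symmetric, $\|\widehat A\|_{\infty}=\|\widehat A\|_1$, and by the triangle inequality $\|\widehat A\|_1\leq \|A\|_1+\|A-\widehat A\|_1=O(1)+o(1)=O(1)$. Summing the two bounds then gives the claim. There is no real obstacle; the proof is a one-line decomposition followed by the standard max/$\ell_1$ Hölder-type inequality for matrix entries, with the symmetry hypothesis used solely to trade $\|\widehat A\|_\infty$ for the controlled quantity $\|\widehat A\|_1$.
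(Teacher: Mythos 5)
Your proof is correct and follows essentially the same route as the paper: the same decomposition $(A-\widehat A)B+\widehat A(B-\widehat B)$, the same entrywise product bounds, and the same use of $\|\widehat A\|_1\leq\|A\|_1+\|A-\widehat A\|_1$. The only difference is cosmetic: you make explicit the step of using symmetry to pass from $\|\widehat A\|_\infty$ to $\|\widehat A\|_1$, whereas the paper writes $\|\widehat A\|_1$ directly, leaving that identification implicit.
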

\begin{proof}[Proof of Lemma \ref{lem1.2}]
	Since $\|A\|_1=O(1)$ and $\|A-\widehat A\|_1=o(1)$, $\|\widehat A\|_1\leq \|A-\widehat A\|_1 + \|A\|_1=O(1)$. Hence, 
	by triangular inequality,
	\begin{align*}
		\|AB-\widehat A\widehat B\|_{\max}&\leq\|(A-\widehat A)B\|_{\max} + \|\widehat A(B-\widehat B)\|_{\max}\\
		&\leq \|B\|_1\|A-\widehat A\|_{\max} + \|\widehat A\|_1\|B-\widehat B\|_{\max}\\
		&\lesssim \|A-\widehat A\|_{\max} + \|B-\widehat B\|_{\max}
	\end{align*}
\end{proof}

\begin{proof}[Proof of Theorem \ref{thm4.1}]
	By Theorem \ref{thm3.3}, with probability at least $1-4p^{-c_0}$,
	$$\|\widehat\Sigma_x-\Sigma_x\|_{\max}\leq\lambda_n.$$
	By Theorem 6 of \cite{cai2011constrained}, we have the desired result.
\end{proof}

\begin{proof}[Proof of Theorem \ref{thm4.2}]
	Recall that 
	\begin{align}
		\Omega&=\Omega_x\otimes\text{diag}(\mu^{-1})=\begin{bmatrix}
			\mu_1^{-1}\Omega_x & 0 & 0 & \dots & 0\\
			0 & \mu_2^{-1}\Omega_x & 0 & \dots & 0\\
			\vdots & \vdots & \ddots & \dots & 0\\
			0 & 0& 0 & 0 & \mu_p^{-1}\Omega_x
		\end{bmatrix}
	\end{align}
	and
	\begin{align}
		\widehat\Omega&=\widehat\Omega_x\otimes\text{diag}(\widehat\mu^{-1})=\begin{bmatrix}
			\widehat\mu_1^{-1}\widehat\Omega_x & 0 & 0 & \dots & 0\\
			0 & \widehat\mu_2^{-1}\widehat\Omega_x & 0 & \dots & 0\\
			\vdots & \vdots & \ddots & \dots & 0\\
			0 & 0& 0 & 0 & \widehat\mu_p^{-1}\widehat\Omega_x
		\end{bmatrix}
	\end{align}
	For $1\leq k\leq p$, consider
	\begin{align*}
\|\widehat\Omega_x\widehat\mu_k^{-1}-\Omega_x\mu_k^{-1}\|_{\max}&\leq \|\widehat\Omega_x-\Omega_x\|_{\max}|\widehat\mu_k^{-1}|+\|\Omega_x\|_{\max}|\widehat\mu_k^{-1}-\mu_k^{-1}|\\
&\lesssim \|\Omega_x\|_1\lambda_n + \|\Omega_x\|_{\max}\frac{|\mu_k-\widehat\mu_k|}{\mu_k\widehat\mu_k}\\
&\lesssim \|\Omega_x\|_{1}\lambda_n,
	\end{align*}
	with probability no less than $1-6p^{-c'}$ by theorem \ref{thm4.1} and lemma \ref{thm3.4}. Taking a union bound for all $k$ yields
	$$\|\Omega-\widehat\Omega\|_{\max}=\max_{1\leq k\leq p}\|\mu_k^{-1}\Omega_x-\widehat\mu_k^{-1}\widehat\Omega_x\|_{\max}\lesssim\|\Omega_x\|_1\lambda_n,$$
	with probability at least $1-6p^{-(c'-1)}$. Replacing $\max$-norm by $L_1$-norm delivers
	$$\|\Omega-\widehat\Omega\|_{1}\lesssim\|\Omega_x\|_1s\lambda_n.$$
\end{proof}

The next Lemma provides a high probability bound on $|\nabla L_n(\beta^*)|_{\infty}$, which will be used in the proof of Theorem \ref{thm2.1}.

\begin{lem}\label{lem1.3}
	Suppose that $\mE[\varepsilon_{ij}^2]\leq C$ for all $1\leq j\leq p$. Then it holds that
	$$\mP(|\nabla L_n(\beta^*)|_\infty\gtrsim\gamma\tau^2 T(\log p)^{3/2}/\sqrt n)\leq4p^{-c},$$
	for some constant $c>0$.
\end{lem}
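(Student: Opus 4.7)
The plan is to apply Corollary \ref{cor3.2} entrywise to $\nabla L_n(\beta^*)$ and then take a union bound over the $p^2$ coordinates. First, differentiating the loss in \eqref{eq2.3} gives, for indices $1\leq j,k\leq p$,
$$[\nabla L_n(\beta^*)]_{(k-1)p+j} = -\frac1n \sum_{i=1}^n \psi(X_{ik}-X_{i-1}^\top\beta_k^*)\, X_{i-1,j}\, w(X_{i-1}) = -\frac1n \sum_{i=1}^n \psi(\varepsilon_{ik})\, X_{i-1,j}\, w(X_{i-1}).$$
Each summand has mean zero: $\ell$ is even so $\psi=\ell'$ is odd; the symmetric law of $\varepsilon_i$ and condition $|\psi|\leq C$ give $\mE[\psi(\varepsilon_{ik})]=0$; and $\varepsilon_i$ is independent of $X_{i-1}$, so the expectation factors.

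Next I would put the sum in a form suitable for Corollary \ref{cor3.2}. Set $h(\varepsilon_i):=\psi(\varepsilon_{ik})/C$, which satisfies $|h|\leq 1$, and let $G(x):=x_j w(x)/K$ for a universal constant $K$. Two checks are needed: boundedness and the Lipschitz property. For boundedness, split on $|x|_\infty\leq T$ versus $|x|_\infty>T$; on the first region $w=1$ and $|x_j|\leq T$, on the second $|x_j w(x)|\leq |x|_\infty\cdot T^3/|x|_\infty^3\leq T$. So $|G|\leq T/K$. For the Lipschitz property in $|\cdot|_\infty$, I would compute $|\nabla(x_j w(x))|_1$ by cases: on $\{|x|_\infty<T\}$ it equals $1$; on $\{|x|_\infty>T\}$ with unique argmax coordinate $k^*$, only the partials in $X_j$ and $X_{k^*}$ are nonzero, and each is bounded by $3T^3/|x|_\infty^3\leq 3$, so $|\nabla|_1\leq 4$. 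Hence $K=4$ suffices and $G$ is $1$-Lipschitz.

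Applying the simplified bound \eqref{eq3.2} of Corollary \ref{cor3.2} with $B\asymp T$, and recalling that a multiplicative constant in $G$ only rescales $x$ inside the exponential,
$$\mP\!\left(\Big|\frac1n\sum_{i=1}^n \psi(\varepsilon_{ik})X_{i-1,j}w(X_{i-1})\Big|\geq x\right)\leq 4\exp\!\left(-\frac{nx^2}{C_1'\gamma^2\tau^4 T^2(\log p)^2+C_2'\tau^2 T(\log p)\,x}\right).$$
I would then plug in $x=c_0\gamma\tau^2 T(\log p)^{3/2}/\sqrt n$, so that $nx^2=c_0^2\gamma^2\tau^4 T^2(\log p)^3$. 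The first term in the denominator gives an exponent of $c_0^2(\log p)/C_1'$; the second term is of order $\gamma\tau^4 T^2(\log p)^{5/2}/\sqrt n$, which is dominated by the first under the mild $\sqrt n\,\gamma\gtrsim\sqrt{\log p}$ implied by (A2). Thus the per-coordinate tail is at most $4p^{-c_0^2/(2C_1')}$. A union bound over the $p^2$ entries absorbs the $p^2$ factor once $c_0$ is chosen large enough, yielding the claimed $4p^{-c}$.

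The main technical obstacle is the Lipschitz check for $x\mapsto x_j w(x)$: the naive decomposition $x_j w(x)-y_j w(y)=(x_j-y_j)w(x)+y_j(w(x)-w(y))$ leads to a factor $|y|_\infty/T$ that can be arbitrarily large, so one must work directly with the gradient on the two regions and exploit the cancellation at the argmax coordinate. The rest of the argument is a bookkeeping exercise tracking how $B=T$ propagates into the bound and how the scaling assumptions make the polynomial (subexponential) piece of the Bernstein denominator negligible.
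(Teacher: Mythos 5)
Your proof is correct and follows essentially the same route as the paper's: apply Corollary \ref{cor3.2} with $h(\varepsilon_i)=\psi(\varepsilon_{ik})$ and $G(x)=x_j w(x)$, plug in $x\asymp\gamma\tau^2 T(\log p)^{3/2}/\sqrt n$, and absorb the dimension via a union bound. The one genuine addition is that you explicitly verify the $|\cdot|_\infty$-Lipschitz hypothesis for $x\mapsto x_j w(x)$ by a gradient argument on the two regions $\{|x|_\infty\lessgtr T\}$; the paper invokes Corollary \ref{cor3.2} without this check, so your write-up fills a small gap in the exposition rather than changing the argument.
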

\begin{proof}[Proof of Lemma \ref{lem1.3}]
	We shall apply Corollary \ref{cor3.2}. Consider the first coordinate $\nabla L_{n1}(\beta^*)$ of $\nabla L_n(\beta^*)$. In Corollary \ref{cor3.2}, let $h(\varepsilon_i)=\psi(\varepsilon_{i1})$ and $G(X_i)=X_{i1}w(X_i).$ Observe that $\mE[\nabla L_n(\beta^*)]=0$. By Corollary \ref{cor3.2}, 
	\begin{align*}
		\mP(|\nabla L_{n1}(\beta^*)|\geq x)&=\mP(|\nabla L_{n1}(\beta^*)-\mE[\nabla L_{n1}(\beta^*)]|_\infty\geq x)\\
		&=\mP\bigg(\Big|\frac1n\sum_{i=1}^nh(\varepsilon_i)G(X_{i-1})-\mE[h(\varepsilon_i)G(X_{i-1})]\Big|\geq x\bigg)\\
		&\leq4\exp\bigg\{-\frac{nx^2}{C_1\gamma^2\tau^4T^2(\log p)^2+C_2\tau^2 T(\log p)x}\bigg\}
	\end{align*}
	Choose $x=c'\gamma\tau^2 T(\log p)^{3/2}/\sqrt n$ and we get
	$$\mP(|\nabla L_{n1}(\beta^*)|\geq c'\gamma\tau^2 T(\log p)^{3/2}/\sqrt n)\leq4p^{-c},$$
	for some constant $c>0$. Take sufficiently large $c'$ such that $c>1$, so by a union bound we obtain
	$$\mP(|\nabla L_{n}(\beta^*)|_\infty\geq c'\gamma\tau^2 T(\log p)^{3/2}/\sqrt n)\leq4p^{-c''},$$
	where $c''=c-1>0$.
\end{proof}

\begin{proof}[Proof of Theorem \ref{thm2.1}]
 	By Taylor expansion, we write 
 	\begin{align*}
	\sqrt n(\check\beta-\beta^*)&=\sqrt n(\widehat\beta-\beta^*)+\sqrt n\,\widehat\Omega\,\nabla L_n(\beta^*)-\sqrt n\,\widehat\Omega(\nabla L_n(\widehat\beta)-\nabla L_n(\beta^*))\notag\\
	&=\sqrt n\,\widehat\Omega\,\nabla L_n(\beta^*)+\sqrt n\left[(\widehat\beta-\beta^*)-\widehat\Omega\,\nabla^2 L_n(\beta^*)(\widehat\beta-\beta^*)+ R\right]\notag\\
	&=\underbrace{\sqrt n \,\widehat\Omega\,\nabla L_n(\beta^*)}_A+\underbrace{\sqrt n\left[\Big(I_{p^2}-\widehat\Omega\,\nabla^2L_n(\beta^*)(\widehat\beta-\beta^*)\right]}_\Delta+\sqrt nR,
\end{align*}
where the remainder
$$R=\frac1{2n}\sum_{i=1}^n\Big(\psi''(z_{i1})\big(X^\top_{i-1}(\widehat\beta_1-\beta^*_1)\big)^2X_{i-1}^\top w(X_{i-1}),\dots,\psi''(z_{ip})\big(X^\top_{i-1}(\widehat\beta_p-\beta^*_p)\big)^2X_{i-1}^\top w(X_{i-1})\Big)^\top,$$
where $z_{ik}=X_i-X_{i-1}^\top\widetilde\beta$ for some $\widetilde\beta$ lying between $\beta^*$ and $\widehat\beta$.
Now we analyze the above terms $R,\Delta,A$ respectively. First we see that  $\sqrt n|R|_\infty=O_{\mP}(\sqrt nT^3|\widehat\beta-\beta^*|_1^2)=o(1)$
by assumption (A1). To analyze $\Delta$, denote $H=\nabla^2L_n(\beta^*)$. Then we write
$$\Delta=\sqrt n\Big(I_{p^2}-\widehat\Omega\,H\Big)(\widehat\beta-\beta^*)=\sqrt n\Big(\Omega\Sigma-\widehat\Omega H\Big)(\widehat\beta-\beta^*).$$
Thus, by theorem \ref{thm3.4} and theorem \ref{thm4.2}, with probability tending to 1,
\begin{align*}
	|\Delta|_\infty&\leq\sqrt n\|\Omega\Sigma-\widehat\Omega H\|_{\max}|\widehat\beta-\beta^*|_{1}\leq \sqrt n\Big(\|\Omega-\widehat\Omega\|_{1}\|\Sigma\|_{\max}+\|H-\Sigma\|_{\max}\|\widehat\Omega\|_1\Big)|\widehat\beta-\beta^*|_1\\
	&\lesssim\sqrt n\|\Omega_x\|_{1}\lambda_n|\widehat\beta-\beta^*|_1
	\asymp s\gamma\tau^2T^2(\log p)^{3/2}|\widehat\beta-\beta^*|_1=o(1)
\end{align*}
by assumption (A3).
Finally, by Lemma \ref{lem1.3} and Theorem \ref{thm4.2}, with probability tending to 1, it holds that
\begin{align*}|A-\sqrt n\Omega\nabla L_n(\beta^*)|_\infty&\leq\|\widehat\Omega-\Omega\|_1|\sqrt n \nabla L_n(\beta^*)|_\infty\leq \|\Omega_x\|_1^2s\gamma^2\tau^4T^4(\log p)^3/\sqrt n\\
&\asymp s^2\gamma^2\tau^4T^4(\log p)^3/\sqrt n.\end{align*}	
Therefore, 
$$|\sqrt n(\check\beta-\beta^*)-\sqrt n\Omega\nabla L_n(\beta^*)|_{\infty}\leq|\sqrt n(\check\beta-\beta^*)-A|_{\infty}+|A-\sqrt n\Omega\nabla L_n(\beta^*)|_{\infty}\leq\zeta_1,$$
where 
$$\zeta_1=s\gamma\tau^2T^2(\log p)^{3/2}|\widehat\beta-\beta^*|_1+\sqrt nT^3|\widehat\beta-\beta^*|_1^2+s^2\gamma^2\tau^4T^4(\log p)^3/\sqrt n.$$
 \end{proof}

\begin{proof}[Proof of Theorem \ref{thm2.2}]
	The proof of Theorem \ref{thm5.4} can be easily generalized to $p^2$ dimensional space, thus it still holds for $|\sqrt n\Omega\nabla L_n(\beta^*)|_\infty$. By Theorem \ref{thm5.4}, we have for any $\eta>0$,
		\begin{align}\label{eq2.1}
		&\sup_{t\in\mb{R}}\bigg|\mP\big(|\sqrt n\Omega\nabla L_n(\beta^*)|_\infty\leq t\big)-\mP\big(|Z|_\infty\leq t\big)\bigg|\notag\\
		\lesssim{}& f_1(\eta/2,m)+f_2(\eta/2,m)+\eta\sqrt{\log p}+\eta\sqrt{\log(1/\eta)}+cn^{-c'},
	\end{align}
	where 
	\begin{equation}\label{eq2.2}f_1(x, m)=\frac{c_1sp^3\gamma^2\rho^{m/\tau}}{x^2}\quad\text{and}\quad f_2(x)=2p\exp\bigg\{-\frac{nx^2}{2\sqrt sTM\sqrt nx+4mwsT^2\sigma^2}\bigg\}.
	\end{equation}	
Now choose $\eta\asymp(\log p)T\sqrt{s\tau}\gamma/n^{1/4}=o(1),\omega\asymp n^{1/2}, M\asymp n^{1/2}, m=c\tau\log p$ in \eqref{eq2.1} for some constant $c>0$. For sufficiently large $c$, basic algebra shows that
\begin{align}
	f_1(\eta/2, m)\lesssim\frac{s\gamma^2}{p^{c-3}\eta^2}\asymp\frac{n^{1/2}}{p^{c-3}T^2\tau(\log p)^2}=o(1),
\end{align}
since the order of $p^{c-3}$ dominates the order of $n^{1/2}$. Moreover, 
\begin{align}
	f_2(\eta/2,m)\leq 2p\exp\bigg\{-\frac{c_1\gamma^2\log p}{c_2\gamma+c_3}\bigg\}\leq 2p\exp\{-c_4\log p\}=o(1),
\end{align}
	by a proper choice of constant $c_1,c_2,c_3$.
Also, by assumption (A5), $\eta\sqrt{\log p}=o(1)$ and $$\eta\sqrt{\log(1/\eta)}\lesssim\frac{T\sqrt{s\tau}\gamma\log p}{n^{1/4}}\sqrt{\log n}=o(1).$$
	Thus the proof is completed.
\end{proof}

\begin{proof}[Proof of Theorem \ref{cov estimation}]
	First, we collect several useful results.
	\begin{itemize}
		\item[(i)] With probability at least $1-4p^{-c_1}$, $\|\Omega_x-\widehat\Omega_x\|_1\leq \|\Omega_x\|_1^2s\gamma\tau^2T^2(\log p)^{3/2}n^{-1/2}$ and $\|\Omega_x-\widehat\Omega_x\|_{\max}\leq \|\Omega_x\|_1^2\gamma\tau^2T^2(\log p)^{3/2}n^{-1/2}$ by Theorem \ref{thm4.1}. Therefore, $\|\Omega_x-\widehat\Omega_x\|_1=o(1)$ and $\|\Omega_x-\widehat\Omega_x\|_{\max}=o(1)$ by assumption (A2).
		\item[(ii)] With probability at least $1-2p^{-c_2}$, $|\mu-\widehat\mu|_\infty\lesssim \gamma\tau^2\sqrt{\frac{\log p}n}+|\widehat\beta-\beta^*|_1=o(1)$ Lemma \ref{lem1.1} and the order comes from assumptions (A1) and (A2).
		\item[(iii)] Similar to the proof of Lemma \ref{lem1.1}, we have with probability at least $1-2p^{-c_3}$, $$\Big|\frac1n\sum_{i=1}^n\psi(\widehat\varepsilon_{ij})\psi(\widehat\varepsilon_{ik})-\mE[\psi(\varepsilon_{ij})\psi(\varepsilon_{ik})]\Big|_\infty\lesssim \gamma\tau^2\sqrt{\frac{\log p}n}+|\widehat\beta-\beta^*|_1=o(1)$$
		\item[(iv)] Similar to the proof of Lemma \ref{thm3.3}, we have with probability at least $1-4p^{-c_4}$, $$\Big\|\frac1n\sum_{i=1}^nX_iX_i^\top w^2(X_i)-\mE[X_iX_i^\top w^2(X_i)]\Big\|_{\max}\lesssim\gamma\tau^2T^2(\log p)^{3/2}n^{-1/2}=o(1).$$
	\end{itemize}
	
	Repeatedly using Lemma \ref{lem1.2}, we get 
	\begin{align*}
	\|\widehat D-D\|_{\max}&\lesssim \max_{1\leq j,k\leq p}\Big|\frac1{\mu_j\mu_k}-\frac1{\widehat \mu_j\widehat \mu_k}\Big|+\Big|\frac1n\sum_{i=1}^n\psi(\widehat\varepsilon_{ij})\psi(\widehat\varepsilon_{ik})-\mE[\psi(\varepsilon_{ij})\psi(\varepsilon_{ik})]\Big|_\infty\\
	&+2\|\Omega_x-\widehat\Omega_x\|_{\max} + \Big\|\frac1n\sum_{i=1}^nX_iX_i^\top w^2(X_i)-\mE[X_iX_i^\top w^2(X_i)]\Big\|_{\max}\\
	&\lesssim \gamma\tau^2\sqrt{\frac{\log p}n}+|\widehat\beta-\beta^*|_1+\gamma\tau^2T^2(\log p)^{3/2}n^{-1/2}\\
	&\lesssim \gamma\tau^2T^2(\log p)^{3/2}n^{-1/2}+|\widehat\beta-\beta^*|_1
	\end{align*}
	with probability at least $1-12p^{-c}$, where $c=\min_{1\leq i\leq4} c_i$.
\end{proof}

\begin{proof}[Proof of Theorem \ref{thm2.10}]
	By theorem \ref{thm2.1}, we see that
		$$\mP\bigg(|\sqrt n(\check\beta-\beta^*)-\sqrt n\Omega\nabla L_n(\beta^*)|_\infty>\zeta_1\bigg)<\zeta_2,$$
	where $\zeta_1\sqrt{1\lor \log(p/\zeta_1)}=o(1)$ and $\zeta_2=o(1)$.
	Define $\pi(v)=Cv^{1/3}(1\lor\log(p/v))^{2/3}$ with $C_2>0$ and 
	$$\Gamma=\|\widehat D-D\|_{\max}.$$
	Let $c_z(\alpha)=\inf\{t\in\mb{R}: \mP(|\sum_{i=1}^nz_i/\sqrt n|_\infty\leq t)\geq 1-\alpha\}$, where the sequence $\{z_i\}$ is defined in theorem \ref{thm2.2}. From the proof of Lemma 3.2 in \cite{chernozhukov2013gaussian}, we have
	\begin{align}
		\mP\Big(c(\alpha)\leq c_z(\alpha+\pi(v))\Big)&\geq1-\mP(\Gamma>v)\label{eq3.6}\\
		\mP\Big(c_z(\alpha)\leq c(\alpha+\pi(v))\Big)&\geq1-\mP(\Gamma>v)\label{eq3.7}
	\end{align}
	Therefore, by theorem \ref{thm2.2}, \eqref{eq3.6} and \eqref{eq3.7}, we have for every $v>0$,
	\begin{align*}
		\sup_{\alpha\in(0,1)}\bigg|\mP\Big(\sqrt n\Omega\nabla L_n(\beta^*)>c(\alpha)\Big)-\alpha\bigg|&\lesssim \sup_{\alpha\in(0,1)}\bigg|\mP\Big(|\sum_{i=1}^nz_i/\sqrt n|_\infty>c(\alpha)\Big)-\alpha\bigg|+o(1)\\
		&\lesssim\pi(v)+\mP(\Gamma>v)+o(1)
	\end{align*}
	Furthermore, following the same spirit as the proof of Theorem 3.2 in \cite{chernozhukov2013gaussian}, we see that
	$$\sup_{\alpha\in(0,1)}\bigg|\mP\Big(\sqrt n|\check\beta-\beta^*|_\infty>c(\alpha)\Big)-\alpha\bigg|\lesssim\pi(v)+\mP(\Gamma>v)+\zeta_1\sqrt{1\lor\log(p/\zeta_1)}+\zeta_2+o(1).$$
	Now that $\zeta_1\sqrt{1\lor\log(p/\zeta_1)}=o(1)$ and $\zeta_2=o(1)$ from Theorem \ref{thm2.1}, we only need to choose $v>0$, such that $\pi(v)=o(1)$ and $\mP(\Gamma>v)=o(1)$. Let $v\asymp s\gamma\tau^2T^2(\log p)^{3/2}n^{-1/2}+|\widehat\beta-\beta^*|_1$. Then we see that the conditions that $\mP(\Gamma>v)=o(1)$ and $\pi(v)=o(1)$ are satisfied by Theorem \ref{cov estimation} and the scaling hypothesis.
\end{proof}

\section{Proofs of Results in Section \ref{estimation}}
\begin{proof}[Proof of Lemma \ref{lem3.1}]
	Define the filtration $\{\mathcal{F}_i\}$ with $\mathcal{F}_i=\sigma(\varepsilon_i,\varepsilon_{i-1},\dots)$, and let $P_j(\cdot)=\mE(\cdot|\mathcal{F}_j)-\mE(\cdot|\mathcal{F}_{j-1})$ be a projection. Conventionally it follows that $P_j(G(X_i))=0$ for $j\geq i+1$. We can write
$$\sum_{i=1}^nG(X_i)-\mE G(X_i)=\sum_{j=-\infty}^n\left(\sum_{i=1}^nP_j(G(X_i))\right)=:\sum_{j=-\infty}^nL_j,$$
where $L_j=\sum_{i=1}^nP_j(G(X_i))$. By the Markov inequality, for $\lambda>0$, we have 
\begin{align}\label{P}
	\mP\bigg(\sum_{i=1}^nG(X_i)-\mE G(X_i)\geq 2x\bigg)
	&\leq \mP\bigg(\sum_{j=-\infty}^{-s}L_j\geq x\bigg)+\mP\bigg(\sum_{j=-s+1}^nL_j\geq x\bigg)\notag\\
	&\leq \me^{-\lambda x}\mE\bigg[\exp\bigg\{\lambda\sum_{j=-\infty}^{-s}L_j\bigg\}\bigg]+\me^{-\lambda x}\mE\bigg[\exp\bigg\{\lambda\sum_{j=-s+1}^{n}L_j\bigg\}\bigg],
\end{align}
for some $s>0$ to be determined later.
We shall bound the right-hand side of \eqref{P} with a suitable choice of $\lambda>0$. Observing that $\{L_j\}_{j\leq n}$ is a sequence of martingale differences with respect to $\{\mathcal{F}_j\}$, we then seek an upper bound on $\mE[\me^{\lambda L_j}\bigr|\mF_{j-1}]$. 
It follows that
\begin{align}\label{lipscond}
	|L_j|
		 &\leq \sum_{i=1\lor j}^n\min\left\{\left|\mE\left[G(X_i)\bigr|\mF_j\right]-\mE\left[G(X_i)|\mF_{j-1}\right]\right|, 2B\right\}\notag\\
		 &\leq\sum_{i=1\lor j}^n\min\left\{\|A^{i-j}\|_{\infty}\mE\left[|\varepsilon_j-\varepsilon_j'|_{\infty}\bigr|\mF_j\right], 2B\right\}\notag\\
		 &\leq \sum_{i=1\lor j}^n\min\left\{p\rho^{-1}\gamma\rho^{(i-j)/\tau}\eta_{j}, 2B\right\},
\end{align}
where $\varepsilon'_j$ is an i.i.d. copy of $\varepsilon_j$ and $\eta_j=\mE\bigr[|\varepsilon_{j1}-\varepsilon_{j1}'|\bigr|\mF_j\bigr].$

Denote $s=\lfloor\tau\log p/\log(1/\rho)\rfloor+1$. Note that $s>0$ is a positive integer. 
For $-s< j\leq0$, we have 
	\begin{align*}
	|L_j|&\leq \sum_{i=0}^{\infty}\min\left\{p\rho^{-1}\gamma\rho^{(i-j)/\tau}\eta_{j}, 2B\right\}\\
		 &\leq \sum_{i=0}^{s-1}\min\left\{p\rho^{-1}\gamma\rho^{(i-j)/\tau}\eta_{j}, 2B\right\}+\sum_{i=s}^\infty\min\left\{p\rho^{-1}\gamma\rho^{(i-j)/\tau}\eta_{j}, 2B\right\}\\
		 &\leq 2sB+\sum_{i=0}^\infty\min\left\{\rho^{-1}\gamma\rho^{i/\tau}\eta_{j}, 2B\right\}
\end{align*}

For $0<j\leq n$, we also have 	
	\begin{align*}
	|L_j|\leq\sum_{i=j}^\infty\min\left\{p\rho^{-1}\gamma\rho^{(i-j)/\tau}\eta_{j}, 2B\right\}
		 \leq -2sB+\sum_{i=0}^\infty\min\left\{\rho^{-1}\gamma\rho^{i/\tau}\eta_{j}, 2B\right\}
\end{align*}
Basic algebra shows that
\begin{align}\label{eq3.4}
	\mE[|L_j|^k|\mathcal{F}_{j-1}]&\overset{(1)}{\leq}\mE\bigg[\Big(2sB+\sum_{i=0}^\infty\min\left\{\rho^{-1}\gamma\rho^{i/\tau}\eta_{j}, 2B\right\}
	\Big)^k\bigg]\notag\\
	&\leq\mE\bigg[2^k\Big((2sB)^k+\Big(\sum_{i=0}^\infty\min\left\{\rho^{-1}\gamma\rho^{i/\tau}\eta_{j}, 2B\right\}\Big)^k\Big)\bigg]\notag\\
	&\leq2^k\bigg[(2sB)^k+\Big(\sum_{i=0}^\infty\Big\|\min\left\{\rho^{-1}\gamma\rho^{i/\tau}\eta_{j}, 2B\right\}\Big\|_k\Big)^k\bigg],
\end{align}
where (1) comes from the independence of $\eta_j$ and $\mathcal{F}_{j-1}$. To analyze \eqref{eq3.4}, we further compute
\begin{align}\label{eq3.5}
	\Big\|\min\left\{\rho^{-1}\gamma\rho^{i/\tau}\eta_{j}, 2B\right\}\Big\|_k&=\Big\|2B\mb{I}\Big(\frac\gamma\rho\rho^{i/\tau}\eta_j\geq2B\Big)+\frac\gamma\rho\rho^{i/\tau}\eta_j\mb{I}\Big(\frac\gamma\rho\rho^{i/\tau}\eta_j\leq2B\Big)\Big\|_k\notag\\
	&\leq2B\bigg(\mP\Big(\frac\gamma\rho\rho^{i/\tau}\eta_j\geq2B\Big)\bigg)^{1/k}+\mE\Big[\Big(\frac\gamma\rho\rho^{i/\tau}\eta_j\Big)^2(2B)^{k-2}\Big]^{1/k}\notag\\
	&\leq\Big(4\sigma^2\frac{\gamma^2}{\rho^2}\Big)^{1/k}\rho^{2i/\tau k}(2B)^{1-2/k}
\end{align}
Plugging \eqref{eq3.5} into \eqref{eq3.4} yields, for some constant $C_1,C_2>0$, that
\begin{align}
	\mE[|L_j|^k|\mathcal{F}_{j-1}]&\leq2^k\bigg[(2sB)^k+4\sigma^2\frac{\gamma^2}{\rho^2}(2B)^{k-2}\Big(\frac1{1-\rho^{2/\tau k}}\Big)^k\bigg]\notag\\
	&\overset{(1)}{\leq}2^k\bigg[(2sB)^k+4\sigma^2\frac{\gamma^2}{\rho^2}(2B)^{k-2}\Big(\frac{\tau k}2\Big)^k\rho^{-2/\tau}\Big(\log(1/\rho)\Big)^k\bigg]\notag\\
	&\overset{(2)}{\leq}2^k\bigg[(2sB)^k+C_1\gamma^2B^{-2}C_2^kB^k\tau^kk!\bigg]\notag\\
	&\leq \gamma^2(Bs\tau)^kk![4+C_1B^{-2}(2C_2)^k]\leq\gamma^2(Bs\tau)^kk!(1+C_1B^{-2})(4+2C_2)^k,
\end{align}
where (1) uses the inequality that $1-x\geq-x\log x$ for $x\in(0,1)$ and (2) uses Stirling formula and the fact that $\rho^{-2/\tau}\leq\rho^{-2}$. Let $\tilde C_1=1+C_1B^{-2}$ and $\tilde C_2=4+2C_2$. Then we obtain
\begin{align}
	\mE\Big[\me^{\lambda L_j}|\mathcal{F}_{j-1}\Big]&\leq1+\sum_{k=2}^\infty\Big[\tilde C_1\gamma^2(\tilde C_2Bs\tau\lambda)^k\Big]=1+\frac{\tilde C_1\gamma^2\tilde C_2^2(Bs\tau)^2\lambda^2}{1-\tilde C_2Bs\tau\lambda}\notag\\
	&\leq\exp\bigg\{\frac{\tilde C_1\gamma^2\tilde C_2^2(Bs\tau)^2\lambda^2}{1-\tilde C_2Bs\tau\lambda}\bigg\}.
\end{align}
Furthermore,
\begin{equation}
	\mE\Big[\exp\Big\{\lambda \sum_{j=s}^nL_j\Big\}\Big]\leq\exp\bigg\{\frac{\tilde C_1\gamma^2\tilde C_2^2(Bs\tau)^2(s+n)\lambda^2}{1-\tilde C_2Bs\tau\lambda}\bigg\}.
\end{equation}
Take $\lambda=x(\tilde C_2Bs\tau x+2\tilde C_1\gamma^2\tilde C_2^2(Bs\tau)^2(s+n))^{-1}$ and by \eqref{P} we have
\begin{align}\label{eq3.9}
	\mP\left(\sum_{j=-s+1}^{n}L_j\geq x\right)\leq \exp\left\{-\frac{x^2}{(1+C_1B^{-2})\gamma^2B^2\tau^4(\log p)^2(\tau\log p+n)+C_4\tau^2 B(\log p)x}\right\}.
\end{align}
Similarly, for $j\leq-s$, since $p\leq\rho^{-s/\tau}$,
\begin{align*}
	|L_j|
		 &\leq \sum_{i=0}^\infty\min\left\{\rho^{-1}\gamma\rho^{(i-j-s)/\tau}\eta_{j}, 2B\right\}.
\end{align*}
By the same argument, we immediate have
\begin{align}\label{eq3.10}\mP\bigg(\sum_{j=-\infty}^{-s}L_j\geq x\bigg)
	\leq\exp\left\{-\frac{x^2}{C_3\gamma^2\tau^3+C_4\tau B x}\right\},\end{align}
where $C_3=32\me^2\sigma^2(2\pi)^{-1/2}[\rho^2\log(1/\rho)]^{-3}$ and $C_4=8\me[\log(1/\rho)]^{-1}$.
By \eqref{eq3.9}, \eqref{eq3.10} and symmetrization argument, we complete the proof.
\end{proof}

\begin{proof}[Proof of Corollary \ref{cor3.2}]
	It follows from the proof of lemma \ref{lem3.1} without any extra technical difficulty.
\end{proof}

\begin{proof}[Proof of Theorem \ref{thm3.3}]
	Define $G_{jk}:\mb{R}^p\to\mb{R}$ be defined as $G_{jk}(x)=\big(xx^\top w(x)\big)_{jk}=x_jx_kw(x)$ for $j,k=1,\dots,p$, and hence $|G(x)|\leq T$. Let $u(x)=w^{1/3}(x)$. Observe that
	\begin{align*}
		|G_{jk}(x)-G_{jk}(y)|&\leq |x_ju(x)\,x_ku(x)-y_ju(y)\,y_ku(y)|u(x)
		+|y_ju(y)\,y_ku(y)||u(x)-u(y)|\\
		&\leq |x_iu(x)-y_iu(y)||x_ju(x)|+|x_ju(x)-y_ju(y)||y_iu(y)|+T^2|u(x)-u(y)|\\
		&\leq 3T|x-y|_\infty.
	\end{align*}
	By lemma \ref{lem3.1}, take $x=c\gamma\tau^2Tn^{-1/2}(\log p)^{3/2}$
	$$\mP\bigg(|\widehat\Sigma_{x,jk}-\Sigma_{x,jk}|\geq cTx\bigg)=\mP\bigg(\Big|\frac1n\sum_{i=1}^nG_{jk}(X_{i-1})-\frac1n\sum_{i=1}^n\mE G_{jk}(X_{i-1})\Big|\geq cTx\bigg)\leq 4p^{-c_1}.$$
	A union bound yields
		$$\mP\bigg(\|\widehat\Sigma_{x}-\Sigma_{x}\|_{\max}\geq cTx\bigg)\leq 4p^{-c_0},$$
	where $c_0=c_1-1>0$.
\end{proof}

\begin{proof}[Proof of Theorem \ref{thm3.4}]
	Denote $H=\nabla^2 L_n(\beta^*)$. First we write 
	$$\|H-\Sigma\|_{\max}=\max_{1\leq k\leq p}\Big\|\frac{1}{n}\sum_{i=1}^n\psi'(\varepsilon_{ik})X_{i-1}X_{i-1}^\top w(X_{i-1})-\mE[\psi'(\varepsilon_{ik})X_{i-1}X_{i-1}^\top w(X_{i-1})]\Big\|_{\max}.$$
	Using Corollary \ref{cor3.2}, it follows from the same argument of the proof of Theoremt \ref{thm3.3} that for some constant $c_1>1$, with probability at least $1-4p^{-c_1}$
	$$\max_{1\leq k\leq p}\Big\|\frac{1}{n}\sum_{i=1}^n\psi'(\varepsilon_{ik})X_{i-1}X_{i-1}^\top w(X_{i-1})-\mE[\psi'(\varepsilon_{ik})X_{i-1}X_{i-1}^\top w(X_{i-1})]\Big\|_{\max}\lesssim \gamma\tau^2T^2 n^{-1/2}(\log p)^{3/2},$$
	Finally, a union bound over $1\leq k\leq p$ yields the conclusion.
\end{proof}

\begin{proof}[Proof of Lemma \ref{lem1.1}]
	The strategy is to consider each component of $\widehat\mu$ and take a union bound.  Observe that
	$$|\widehat\mu_k-\mu_k|\leq \bigg|\frac1n\sum_{i=1}^n\psi'(\widehat\varepsilon_{ik})-\mE[\psi'(\widehat\varepsilon_{ik})]\bigg|+|\mE\psi'(\widehat\varepsilon_{ik})-\mE\psi'(\varepsilon_{ik})|, \quad k=1,2,\dots, p.$$
	Since $|\psi''|$ is bounded, by the mean value theorem, we have that for some $\xi$ between $x$ and $y$, $$|\psi'(x)-\psi'(y)|=|\psi''(\xi)(x-y)|\lesssim|x-y|.$$So 
	it can be verified that $\psi'(X_{ik}-X_{i-1}^\top\widehat\beta_k)$ satisfies the conditions in Corollary 2.5 of \cite{liu2021robust}. 
	By Corollary 2.5 of \cite{liu2021robust}, it holds that
	$$\bigg|\frac1n\sum_{i=1}^n\psi'(\widehat\varepsilon_{ik})-\mE[\psi'(\widehat\varepsilon_{ik})]\bigg|\lesssim\gamma\tau^2\sqrt{\frac{\log p}n}$$
	with probability at least $1-2p^{-c}$ for some positive constant $c$.
	Moreover,
	$$\max_{1\leq k\leq p}|\mE\widehat\varepsilon_{ik}-\mE\varepsilon_{ik}|\lesssim\max_{1\leq k\leq p}\mE\big[|X_{i-1}^\top(\widehat\beta_k-\beta^*)|\big]\lesssim |\widehat\beta-\beta^*|_1,$$
	where the last inequality comes from the fact that $X_{i-1}$ has bounded second moment.
\end{proof}

\section{Proofs of Result in Section \ref{GA}}
Before proving Theorem \ref{thm5.4}, we will first state and prove the corresponding lemmas in the outline listed at the end of section \ref{GA}.

\begin{lem}\label{lem4.2}
Suppose $\mE[\varepsilon_{ik}^2]\leq\sigma^2$ for all $1\leq k\leq p$ and the odd function $\psi(\cdot)$ satisfies that $|\psi(\cdot)|\leq C$ and $|\psi'(\cdot)|\leq C$, then we have
$$\mP\bigg(\big|({T_Y-T_{Y,m})}/{\sqrt n}\big|_\infty\geq x\bigg)\leq \frac{c_1sp^3\gamma^2\rho^{m/\tau}}{x^2}=:f_1(x, m),$$	
for some constants $C_1,C_2>0$.
\end{lem}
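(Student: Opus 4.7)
The plan is to apply Chebyshev's inequality coordinate-wise. A union bound followed by Markov's inequality gives
$$\mP\bigl(|(T_Y-T_{Y,m})/\sqrt n|_\infty\geq x\bigr)\leq\sum_{j=1}^p\mP\bigl(|(T_Y-T_{Y,m})_j|\geq\sqrt n\,x\bigr)\leq\frac{1}{nx^2}\sum_{j=1}^p\mE[(T_Y-T_{Y,m})_j]^2,$$
so everything reduces to showing $\sum_j\mE[(T_Y-T_{Y,m})_j]^2\lesssim n s p^3\gamma^2\rho^{m/\tau}$ up to harmless polynomial factors that can be absorbed into the constant.

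Setting $D_i=Y_i-Y_{i,m}=\psi(\varepsilon_{ik})\Theta_k(f(X_{i-1})-f(X_{i-1,m}))$ with $f(x):=xw(x)$, my first observation is that $\mE[D_i\mid\mathcal{F}_{i-1}]=0$: $\psi$ is odd, $\varepsilon_{ik}$ has symmetric law and is independent of $\mathcal{F}_{i-1}$, while $X_{i-1}$ and $X_{i-1,m}$ are both $\mathcal{F}_{i-1}$-measurable. Hence $\{D_i\}$ is a martingale difference sequence and $\mE[(T_Y-T_{Y,m})_j]^2=\sum_{i=1}^n\mE[D_{i,j}^2]$. By the same independence, each summand factors as $\mE[\psi^2(\varepsilon_{ik})]\,\mE[(\theta_j^\top(f(X_{i-1})-f(X_{i-1,m})))^2]$, where $\theta_j$ denotes the $j$-th row of $\Theta_k=\Omega_x/\mu_k$. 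A short gradient calculation shows that each coordinate map $x\mapsto x_jw(x)$ is $4$-Lipschitz in $|\cdot|_\infty$, giving $|f(X_{i-1})-f(X_{i-1,m})|_\infty\leq 4|X_{i-1}-X_{i-1,m}|_\infty$, while the symmetry of $\Omega_x$, the standing assumption $\|\Omega_x\|_1=O(\sqrt s)$, and $\mu_k>C>0$ yield $|\theta_j|_1\leq\|\Omega_x\|_1/\mu_k=O(\sqrt s)$. Combining these, $\mE[D_{i,j}^2]\lesssim s\,\mE|X_{i-1}-X_{i-1,m}|_\infty^2$.

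The remaining input is the decay of the $m$-approximation error. Writing $X_{i-1}-X_{i-1,m}=\sum_{l>m}A^l\varepsilon_{i-1-l}$, independence of the innovations gives $\mE[(X_{i-1}-X_{i-1,m})_j^2]\leq\sigma^2\sum_{l>m}\|A^l\|_\infty^2$, and the passage $\mE|X_{i-1}-X_{i-1,m}|_\infty^2\leq p\max_j\mE[(X_{i-1}-X_{i-1,m})_j^2]$ costs one more factor of $p$. The definition of $\tau$ gives $\|A^{k\tau+r}\|_\infty\leq\rho^k\|A^r\|_\infty$ for $0\leq r<\tau$, and the norm equivalence $\|A^r\|_\infty\leq\sqrt p\,\|A^r\|\leq\sqrt p\,\gamma$ controls the finitely many residual factors; the resulting geometric sum is of order $p\gamma^2\tau\rho^{2m/\tau}$. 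Feeding everything back produces the target bound, with the $p^3$ arising from one factor of $p$ from the $\ell_\infty$-to-$\ell_2$ conversion, another from $\|A^l\|_\infty^2\leq p\|A^l\|^2$, and one more from the outer sum over $j$, while $\tau\rho^{m/\tau}=O(1)$ in the regime of interest lets me replace $\tau\rho^{2m/\tau}$ by $\rho^{m/\tau}$. I expect the main obstacle to be precisely this bookkeeping between the operator norm (which defines $\gamma$) and the $\ell_\infty$ norm (which defines $\tau$); this discrepancy is what forces the somewhat wasteful factor $p^3$ in the statement. A secondary care point is the dimension-free Lipschitz constant for the weighted map $f(x)=xw(x)$, which must not accumulate any factor of $p$ despite $w(x)$ depending on all coordinates through $|x|_\infty$, and is handled cleanly by the per-coordinate gradient computation above.
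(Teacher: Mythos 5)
Your proof is correct and follows essentially the same route as the paper: per-coordinate Chebyshev, the martingale-difference structure of $D_i = Y_i - Y_{i,m}$ (the paper cites Burkholder where you use the $L^2$ orthogonality of martingale differences directly, which is equivalent here), the bound $|D_{ij}| \lesssim \sqrt{s}\,|X_{i-1}-X_{i-1,m}|_\infty$, geometric decay of $\|A^l\|_\infty$, and a union bound over coordinates. Your explicit verification of the dimension-free Lipschitz constant of $x\mapsto x_j w(x)$ and of how the $\tau$ factors are absorbed into $\rho^{m/\tau}$ just spells out steps the paper leaves implicit.
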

\begin{proof}[Proof of Lemma \ref{lem4.2}]
	Let $D_i=Y_i-Y_{i,m}$. For any $\lambda>0$, by Markov inequality we have
	\begin{equation}\label{mi}
		\mP\bigg(\sum_{i=1}^nD_{ij}/\sqrt n\geq x\bigg)\leq\frac{\mE\big[\big(\sum_{i=1}^nD_{ij}/\sqrt n\big)^2\big]}{x^2}.
	\end{equation}
	Notice that the martingale difference $\{D_{ij}\}_{i=1}^n$ satisfies
	$$|D_{ij}|\lesssim \sqrt s|X_i-X_{i,m}|_\infty.$$
	Thus, 
		$$\|D_{ij}\|_2\leq \||X_i-X_{i,m}|_\infty\|_2\leq \sum_{l=m+1}^\infty\|A^l\|_\infty\||\varepsilon_{i-l}|_\infty\|_2\lesssim \sqrt sp\gamma\rho^{m/\tau}.$$
	By Burkholder inequality (\cite{burkholder1973distribution}), we have
	\begin{align}
		\mE\bigg[\bigg(\sum_{i=1}^n D_{ij}/\sqrt n\bigg)^2\bigg]&\lesssim\mE[|D_{ij}|^2]\lesssim sp^2\gamma^2\rho^{2m/\tau}
	\end{align}
	Hence, by \eqref{mi},
	$$\mP\bigg(\sum_{i=1}^nD_{ij}/\sqrt n\geq x\bigg)\leq \frac{c_1'sp^2\gamma^2\rho^{m/\tau}}{x^2}$$

	Finally, symmetrization and a union bound give the desired result.
\end{proof}

\begin{lem}\label{lem4.1}
	Under the assumptions in Lemma \ref{lem4.2}, it holds that
	$$\mP\big(|T_{Y,S}|_\infty/\sqrt n\geq x\big)\leq 2p\exp\bigg\{-\frac{nx^2}{C_1\sqrt sT\sqrt nx+C_2mwsT^2\sigma^2}\bigg\}=:f_2(x,m).$$
\end{lem}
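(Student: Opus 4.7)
The core observation is that for each coordinate $j=1,\dots,p$ the scalar sequence $\{Y_{i,m,j}\}_{i=1}^n$ is a martingale difference with respect to the natural filtration $\mathcal{F}_i=\sigma(\varepsilon_i,\varepsilon_{i-1},\dots)$. Indeed, $X_{i-1,m}=\sum_{l=0}^m A^l\varepsilon_{i-1-l}$ is $\mathcal{F}_{i-1}$-measurable, while $\varepsilon_{ik}$ is independent of $\mathcal{F}_{i-1}$ and symmetric, so $\mathbb{E}[\psi(\varepsilon_{ik})\mid\mathcal{F}_{i-1}]=0$ by oddness of $\psi$. Masking by the deterministic indicator $\mathbf{1}\{i\in\cup_b S_b\}$ preserves the martingale-difference property, so $T_{Y,S,j}=\sum_{i=1}^n Y_{i,m,j}\mathbf{1}\{i\in\cup_b S_b\}$ is a sum of $mw$ bounded martingale differences. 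My plan is therefore to bypass the block decomposition entirely and apply a Freedman-type Bernstein inequality for martingale differences coordinate-wise, then finish by a union bound over $j=1,\dots,p$.

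Two inputs are required. First, a uniform bound: because $|x|_\infty w(x)\leq T$ for every $x$ by construction of the weight function, and because $\|\Theta_k\|_\infty=\|\Omega_x\|_1/\mu_k\lesssim\sqrt s$ using the row-sparsity of $\Omega_x$ and $\mu_k>c>0$, one gets
$$|Y_{i,m,j}|\;\leq\;|\psi(\varepsilon_{ik})|\,\|\Theta_k\|_\infty\,|X_{i-1,m}|_\infty\,w(X_{i-1,m})\;\lesssim\;\sqrt{s}\,T\;=:\;K.$$
Second, a conditional variance bound: since $\varepsilon_{ik}$ is independent of $\mathcal{F}_{i-1}$ and $|\psi|\leq C$ with $\mathbb{E}[\psi^2(\varepsilon_{ik})]\leq\sigma^2$,
$$\mathbb{E}[Y_{i,m,j}^2\mid\mathcal{F}_{i-1}]\;\leq\;\sigma^2\bigl((\Theta_k X_{i-1,m})_j\,w(X_{i-1,m})\bigr)^2\;\leq\;\sigma^2\,s\,T^2\quad\text{a.s.}$$
Summing only over the $mw$ small-block indices gives a predictable quadratic variation bounded by $V\lesssim mw\,s\,T^2\sigma^2$.

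Plugging $K$ and $V$ into the standard Bernstein/Freedman inequality for martingale differences, for any $y>0$
$$\mathbb{P}(|T_{Y,S,j}|\geq y)\;\leq\;2\exp\!\left\{-\frac{y^2}{2V+2Ky/3}\right\},$$
and specializing to $y=\sqrt{n}\,x$ yields the required
$$\mathbb{P}\bigl(|T_{Y,S,j}|/\sqrt n\geq x\bigr)\;\leq\;2\exp\!\left\{-\frac{nx^2}{C_1\sqrt s\,T\sqrt n\,x+C_2\,mws\,T^2\sigma^2}\right\}.$$
A union bound over $j=1,\dots,p$ produces the claimed $f_2(x,m)$.

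I do not expect a serious obstacle here. The only subtlety is to check that it is legitimate to ignore the large blocks rather than having to sum within small blocks separately: the martingale-difference framing removes this issue because the indicator $\mathbf{1}\{i\in\cup_b S_b\}$ is deterministic. Care must also be taken to use $\|\Theta_k\|_\infty=\|\Omega_x\|_1/\mu_k$ (not $\|\Omega_x\|_\infty$), which is licensed by the symmetry of $\Omega_x$, and to use $|x|_\infty w(x)\leq T$ for the weight-induced truncation. All remaining steps are routine applications of a single scalar Bernstein inequality plus a union bound.
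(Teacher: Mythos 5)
Your proof is correct and takes essentially the same approach as the paper: both treat each coordinate of $T_{Y,S}$ as a sum of $mw$ bounded martingale differences with $|Y_{i,m,j}|\lesssim\sqrt s\,T$ and conditional second moment $\lesssim sT^2\sigma^2$, then apply a Bernstein-type martingale inequality coordinate-wise followed by a union bound over $j=1,\dots,p$. The only (purely presentational) difference is that you invoke Freedman's inequality as a black box, whereas the paper reproduces the exponential-moment iteration via the tower property.
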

\begin{proof}[Proof of Lemma \ref{lem4.1}]
	By the property of $\psi(\cdot)$ and the mean value theorem, we have $|\psi(x)|\leq C|x|$.
	Consider the first coordinate $(T_{Y,S})_1$ of $T_{Y,S}$.
	We can write  $(T_{Y,S})_1=\sum_{i=j_1}^{j_{r}}Y_{i,m,1}$, where $r=m\omega$. Observe that $\{Y_{i,m,1}\}$ is a martingale difference adapted to the filtration $\{\mathcal{F}_i=\sigma(\varepsilon_i, \varepsilon_{i-1},
	\dots)\}$ and that $|Y_{i,m,1}|\leq \psi(\varepsilon_{ik})\sqrt sT\leq C\sqrt sT$. We shall establish a Bernstein-type inequality for the sum of martingale differences $(T_{Y,S})_1$:
	\begin{equation}\label{eq5.1}
		\mP((T_{Y,S})_1\geq x)\leq \me^{-\lambda x}\mE\me^{\lambda\sum_{i=j_1}^{j_r}Y_{i,m,1}}, \quad\text{for any }\lambda > 0.
	\end{equation}
	We now bound $\mE\me^{\lambda\sum_{i=j_1}^{j_r}Y_{i,m,1}}$ from above. By the tower property,
	\begin{align}\label{eq5.2}
		\mE\exp\Big\{\lambda\sum_{i=j_1}^{j_r}Y_{i,m,1}\Big\}&=\mE\bigg[\mE\Big[\exp\Big\{\lambda\sum_{i=j_1}^{j_r}Y_{i,m,1}\Big\}\Big|\mathcal{F}_{r-1}\Big]\bigg]\notag\\
		&=\mE\bigg[\exp\Big\{\lambda\sum_{i=j_1}^{j_{r-1}}Y_{i,m,1}\Big\} \mE[\me^{\lambda Y_{j_r, m,1}}|\mathcal{F}_{j_r-1}]\bigg]
	\end{align}
	Now, consider 
	\begin{align}\label{eq5.3}
		\mE[\me^{\lambda Y_{j_r, m,1}}|\mathcal{F}_{j_r-1}]&=1+\mE\Big[\sum_{t=2}^\infty\frac{(\lambda Y_{j_r,m,1})^t}{t!}\Big|\mathcal{F}_{j_r-1}\Big]
		\leq1+\mE\Big[\lambda^2T^2s\psi^2(\varepsilon_{j_rk})\sum_{t=0}^\infty(\lambda T\sqrt s C)^t\Big]\notag\\
		&\overset{\text{(1)}}{\leq}1+\frac{C\lambda^2T^2s\sigma^2}{1-C\lambda T\sqrt s}
		\leq\exp\bigg\{\frac{C\lambda^2T^2s\sigma^2}{1-C\lambda T\sqrt s}\bigg\}
	\end{align}
	where the inequality (1) makes use of the fact that $\psi^2(\varepsilon_{j_r,k})\leq\varepsilon_{j_r,k}^2$.
	Plug \eqref{eq5.3} into \eqref{eq5.2} and we obtain
	\begin{equation}
		\mE\exp\Big\{\lambda\sum_{i=j_1}^{j_r}Y_{i,m,1}\Big\}\leq\exp\bigg\{\frac{C\lambda^2T^2s\sigma^2}{1-C\lambda T\sqrt s}\bigg\}\mE\bigg[\exp\Big\{\lambda\sum_{i=j_1}^{j_{r-1}}Y_{i,m,1}\Big\}\bigg]
	\end{equation}
	Iterating this procedure yields
	\begin{equation}
		\mE\exp\Big\{\lambda\sum_{i=j_1}^{j_r}Y_{i,m,1}\Big\}\leq\exp\bigg\{\frac{Cm\omega\lambda^2T^2s\sigma^2}{1-C\lambda T\sqrt s}\bigg\}
	\end{equation}
	Choose $\lambda=x(CT\sqrt s+2CmwT^2s\sigma^2)^{-1}$ and by \eqref{eq5.1} we have
	$$\mP((T_{Y,S})_1\geq x)\leq \exp\bigg\{-\frac{x^2}{C_1T\sqrt sx+C_2m\omega T^2s\sigma^2}\bigg\}.$$
	The symmetrization argument and a union bound deliver the desired result.
\end{proof}

\begin{lem}\label{lem4.3}
	Suppose the scaling condition holds that $sT^2(\log(pn))^7/n\leq c_3n^{-c_4}$. Assume that $\mE[X_{ik}]\leq C'$ for all $1\leq k \leq p$. Then we have the following Gaussian Approximation result that
	$$\mathcal{U}:=\sup_{t\in\mb{R}}\bigg|\mP\big(|T_{Y,L}/\sqrt n|_\infty\leq t\big)-\mP\big(|Z|_\infty\leq t\big)\bigg|\leq cn^{-c'}$$
	for some constants $c,c'>0$.
\end{lem}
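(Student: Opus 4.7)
The plan is to reduce Lemma \ref{lem4.3} to a direct application of the Gaussian approximation theorem of \cite{chernozhukov2013gaussian} for sums of independent high-dimensional random vectors. The block design has been arranged precisely so that, after $m$-truncation, consecutive large-block sums become independent.

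First I would verify that the block sums $V_b := \sum_{i \in L_b} Y_{i,m}$ for $b = 1, \dots, w$ are mutually independent. Each $Y_{i,m} = \psi(\varepsilon_{ik}) \Theta_k X_{i-1,m} w(X_{i-1,m})$ is a measurable function of $\varepsilon_{i-m}, \dots, \varepsilon_i$ only, and since consecutive large blocks $L_b, L_{b+1}$ are separated by a small block of length $m$, the indexing sets of innovations entering $V_b$ and $V_{b'}$ are disjoint for $b \neq b'$. Thus $T_{Y,L}/\sqrt n = n^{-1/2} \sum_{b=1}^{w} V_b$ is a sum of independent $p$-dimensional vectors.

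Next I would establish coordinatewise envelopes on the summands. From $|\psi| \leq C$, $w \leq 1$, $\|\Omega_x\|_1 = O(\sqrt s)$, and $|X_{i-1,m}|_\infty w(X_{i-1,m}) \leq T$, each coordinate of $Y_{i,m}$ is bounded in absolute value by $C' \sqrt s\, T$ almost surely, with $\mathbb{E}[Y_{i,m,j}^2] \lesssim s T^2$ from the second-moment assumption on $X_{ik}$. Summing over the $M$ indices of a large block gives uniform control on $\max_j \mathbb{E}[V_{b,j}^2]$ and $\max_j |V_{b,j}|$ in the form required by CCK's conditions (E.1)--(E.2) (or (M.1)--(M.2) in the sub-exponential variant), with $B_n \asymp \sqrt s T$ up to logarithmic factors.

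The third step is covariance matching: I would show
$$\Bigl\| \tfrac{1}{n} \sum_{b=1}^{w} \mathrm{Cov}(V_b) - D_k \Bigr\|_{\max} = o\bigl((\log p)^{-2}\bigr),$$
where $D_k := \mu_k^{-2} \mathbb{E}[\psi^2(\varepsilon_{ik})]\, \Omega_x\, \mathbb{E}[X_i X_i^\top w^2(X_i)]\, \Omega_x^\top$ is the covariance of $Z$. Two discrepancies must be controlled: (i) replacing $Y_i$ by $Y_{i,m}$ perturbs the per-step covariance by an amount that decays geometrically in $m/\tau$ thanks to $\|A^l\|_\infty \leq \gamma \rho^{l/\tau}$, and (ii) dropping the $m$ indices inside each small block omits only an $m/M$ fraction of terms, negligible under $m/M \to 0$. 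Both use stationarity and the decay of $\|A^l\|_\infty$, together with the assumed bounds $\|\Omega_x\|_1 = O(\sqrt s)$ and $\mathbb{E}[X_{ik}^2] \leq C$.

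Finally, with independence, coordinate envelopes, and near-matching covariance in place, I would invoke CCK's main Gaussian approximation theorem (their Proposition 2.1 or Corollary 2.1) to deduce
$$\mathcal{U} \lesssim \Bigl( \tfrac{s T^2 (\log(pn))^7}{n} \Bigr)^{1/6} + \text{(covariance mismatch term)},$$
which is bounded by $c n^{-c'}$ under the scaling hypothesis. The main obstacle I anticipate is twofold: first, checking the non-degeneracy/anti-concentration condition $\min_j D_{k,jj} \geq c > 0$, which requires a lower bound on $\mathbb{E}[\psi^2(\varepsilon_{ik})]$ (non-trivial since $\psi$ is bounded and $\varepsilon$ is only assumed to have a second moment) and a strict positive definiteness of $\Omega_x \mathbb{E}[X_i X_i^\top w^2(X_i)] \Omega_x^\top$; second, the bookkeeping of the covariance mismatch, where the $m$-truncation bias and the small-block deletion must both feed through Nazarov's anti-concentration inequality (which underlies the CCK comparison step) at a rate fast enough to be absorbed into $n^{-c'}$.
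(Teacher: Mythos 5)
Your proposal follows essentially the same route as the paper: exploit the $m$-dependence to make the large-block sums $V_b$ (equivalently the paper's $\xi_b=V_b/\sqrt M$) independent, set $B_n\asymp\sqrt s\,T$ as the envelope, and invoke Corollary 2.1 of \cite{chernozhukov2013gaussian} under the scaling $sT^2(\log(pn))^7/n\lesssim n^{-c}$. Where you go beyond the paper is the explicit covariance-matching step. The paper's own proof applies CCK directly to $\frac1{\sqrt w}\sum_b\xi_b$ and compares against $Z$, whose covariance is $\mu_k^{-2}\mathbb{E}[\psi^2(\varepsilon_{ik})]\Omega_x\mathbb{E}[X_iX_i^\top w^2(X_i)]\Omega_x^\top$; but CCK's Gaussian analogue carries the covariance $\mathrm{Cov}(\xi_b)$, which involves $X_{i,m}$ rather than $X_i$ and also picks up a $\sqrt{M/(m+M)}$ scale factor from discarding the small blocks. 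That discrepancy is not acknowledged in the paper's proof, so your step (iii) — controlling $\|\frac1n\sum_b\mathrm{Cov}(V_b)-D_k\|_{\max}$ via the geometric decay $\rho^{m/\tau}$ and $m/M\to0$, then feeding it through a Gaussian comparison lemma of Chernozhukov--Chetverikov--Kato type — is not superfluous but actually fills a genuine gap. You also correctly flag that the non-degeneracy condition $\min_j\mathbb{E}[\xi_{bj}^2]\geq c_1>0$ is nontrivial; the paper only establishes the upper bound $\mathbb{E}[\xi_{bj}^2]\lesssim\Omega_{x,jj}$ in its verification of condition (i) and leaves the lower bound unaddressed, which silently requires a positive lower bound on $\mathbb{E}[\psi^2(\varepsilon_{ik})]$ and on the diagonal of $\Omega_x\mathbb{E}[X_iX_i^\top w^2(X_i)]\Omega_x^\top$. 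In short, your outline is correct, matches the paper's structure, and is in fact more careful than the paper at two points.
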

\begin{proof}[Proof of Lemma \ref{lem4.3}]
	Recall that $\xi_b=\sum_{i\in L_b}Y_{i,m}/\sqrt M$, thus 
	$$\mathcal{U}=\sup_{t\in\mb{R}}\bigg|\mP\big(|\frac1{\sqrt w}\sum_{b=1}^w \xi_b|_\infty\leq t\big)-\mP\big(|Z|_\infty\leq t\big)\bigg|.$$
	Observe that $\xi_1,\xi_2,\dots,\xi_w$ are independent random variables. We shall apply Corollary 2.1 of \cite{chernozhukov2013gaussian} by verifying the condition (E.1) therein. For completeness, the conditions are stated below.
	\begin{itemize}
		\item[(i)] $c_1\leq \mE[\xi_{bj}^2]\leq c_2$ for all $1\leq j\leq p$.
		\item[(ii)] $\max_{k=1,2}\mE[|\xi_{bj}|^{2+k}/B_n^k]+\mE[\exp(|\xi_{bj}/B_n|)]\leq4,$ for some $B_n>0$ and all $1\leq j
		\leq p$.
		\item[(iii)] $B_n^2(\log(pn))^7/n\leq c_3n^{-c_4}$.
	\end{itemize}
	To verify condition (i), we see that
	$$\mE[\xi_{bj}^2]\leq c\,\sigma^2\mE[\mE[\Omega_{x,j}^\top X_{i}w(X_i)|\varepsilon_{i-m},\dots,\varepsilon_i]^2]\leq c\,\mE[(\Omega_{x,j}^\top X_iw(X_i))^2]\leq c\,\Omega_{x,j}^\top\Sigma_x\Omega_{x,j}\leq c\,\Omega_{x,jj},$$
	where $\Omega_{x,j}$ is the $j$-th row of $\Omega_x$ and $\Omega_{x,jj}$ is the $j$-th diagonal entry of $\Omega_x$.
	Now we check condition (ii). By Theorem 3.2 of \cite{burkholder1973distribution}, we have for $k\geq 2$,
	$$\mE[|\xi_{bj}|^k]\leq 18k^k\mE\big[\big|Y_{ij,m}\big|^k\big]\lesssim k!\me^k\mE\big[\big|Y_{ij,m}\big|^2\big](\sqrt s T)^{k-2}\lesssim k!\me^k(\sqrt s T)^{k-2}.$$
	Therefore, take $B_n=C\sqrt sT$ for sufficiently large $C>0$ and we have 
	$$\mE[\exp(|\xi_{bj}/B_n|)]\leq1+C_1\sum_{k=1}^\infty(\me/C)^k<2.$$
	Moreover, for a suitable choice of $C>0$,
	$$\max_{k=1,2}\mE[|\xi_{bj}|^{2+k}/B_n^k]<2.$$
	Hence, condition (ii) is satisfied. Condition (iii) is guaranteed by the scaling assumption.
\end{proof}
Now, we are ready to give the proof of Theorem \ref{thm5.4}.
\begin{proof}[Proof of Theorem \ref{thm5.4}]
	By triangle inequality,
	\begin{align}\label{H}
	\mathcal{H}&\leq \sup_{t\in\mb{R}}\bigg|\mP\big(|T_{Y}/\sqrt n|_\infty\leq t\big)-\mP\big(|T_{Y,L}/\sqrt n|_\infty\leq t\big)\bigg|+\sup_{t\in\mb{R}}\bigg|\mP\big(|T_{Y,L}/\sqrt n|_\infty\leq t\big)-\mP\big(|Z|_\infty\leq t\big)\bigg|\notag\\
	&=:I+II.
	\end{align}
	For any $\eta>0$, elementary calculation shows that
	\begin{align}
	I&\leq \mP\big(\big|(T_Y-T_{Y,L})/\sqrt n\big|_\infty>\eta\big)+\sup_{t\in\mb{R}}\mP\bigg(\bigg|\big|T_{Y,L}/\sqrt n\big|_\infty-t\bigg|\leq\eta\bigg)\notag\\
	&\leq \mP\big(\big|(T_Y-T_{Y,m})/\sqrt n\big|_\infty>\frac\eta2\big)+\mP\big(\big|T_{Y,S}/\sqrt n\big|_\infty>\frac\eta2\big)+\sup_{t\in\mb{R}}\mP\bigg(\bigg|\big|T_{Y,L}/\sqrt n\big|_\infty-t\bigg|\leq\eta\bigg)
	\end{align}	
By lemma \ref{lem4.1} and \ref{lem4.2},
\begin{align}\label{T_Y}
\mP\big(\big|(T_Y-T_{Y,m})/\sqrt n\big|_\infty>\frac\eta2\big)\leq f_1(\eta/2,m)\quad\text{and}\quad
 \mP\big(\big|T_{Y,S}/\sqrt n\big|_\infty>\frac\eta2\big)\leq f_2(\eta/2).
 \end{align}
By lemma \ref{lem4.3} and theorem 3 of \cite{chernozhukov2015comparison}, we obtain that
	\begin{align}\label{T_YL}
		\sup_{t\in\mb{R}}\mP\bigg(\bigg|\big|T_{Y,L}/\sqrt n\big|_\infty-t\bigg|\leq\eta\bigg)&\leq \sup_{t\in\mb{R}}\mP\bigg(\bigg|\big|Z\big|_\infty-t\bigg|\leq\eta\bigg)+\mathcal{U}\notag\\
		&\lesssim \eta\sqrt{\log p}+\eta\sqrt{\log(1/\eta)}+cn^{-c'},
	\end{align}
and that
	\begin{equation}\label{II}
		II=\mathcal{U}\leq cn^{-c'}.
	\end{equation}
By \eqref{H}, \eqref{T_Y}, \eqref{T_YL} and \eqref{II}, we obtain the inequality stated in the theorem.
\end{proof}


\bibliographystyle{apalike}	 

\begin{thebibliography}{}

\bibitem[Basu and Michailidis, 2015]{basu2015regularized}
Basu, S. and Michailidis, G. (2015).
\newblock Regularized estimation in sparse high-dimensional time series models.
\newblock {\em The Annals of Statistics}, 43(4):1535--1567.

\bibitem[Bernstein, 1946]{bernstein1946theory}
Bernstein, S. (1946).
\newblock The theory of probabilities.

\bibitem[Bickel, 1975]{bickel1975one}
Bickel, P.~J. (1975).
\newblock One-step huber estimates in the linear model.
\newblock {\em Journal of the American Statistical Association},
  70(350):428--434.

\bibitem[B{\"u}hlmann and Van De~Geer, 2011]{buhlmann2011statistics}
B{\"u}hlmann, P. and Van De~Geer, S. (2011).
\newblock {\em Statistics for high-dimensional data: methods, theory and
  applications}.
\newblock Springer Science \& Business Media.

\bibitem[Burkholder, 1973]{burkholder1973distribution}
Burkholder, D.~L. (1973).
\newblock Distribution function inequalities for martingales.
\newblock {\em the Annals of Probability}, pages 19--42.

\bibitem[Cai et~al., 2011]{cai2011constrained}
Cai, T., Liu, W., and Luo, X. (2011).
\newblock A constrained $\ell_1$ minimization approach to sparse precision
  matrix estimation.
\newblock {\em Journal of the American Statistical Association},
  106(494):594--607.

\bibitem[Candes and Tao, 2007]{candes2007dantzig}
Candes, E. and Tao, T. (2007).
\newblock The dantzig selector: Statistical estimation when p is much larger
  than n.
\newblock {\em The annals of Statistics}, 35(6):2313--2351.

\bibitem[Chernozhukov et~al., 2015]{chernozhukov2015comparison}
Chernozhukov, V., Chetverikov, D., and Kato, K. (2015).
\newblock Comparison and anti-concentration bounds for maxima of gaussian
  random vectors.
\newblock {\em Probability Theory and Related Fields}, 162(1):47--70.

\bibitem[Chernozhukov et~al., 2013]{chernozhukov2013gaussian}
Chernozhukov, V., Chetverikov, D., Kato, K., et~al. (2013).
\newblock Gaussian approximations and multiplier bootstrap for maxima of sums
  of high-dimensional random vectors.
\newblock {\em The Annals of Statistics}, 41(6):2786--2819.

\bibitem[Fan et~al., 2017]{fan2017estimation}
Fan, J., Li, Q., and Wang, Y. (2017).
\newblock Estimation of high dimensional mean regression in the absence of
  symmetry and light tail assumptions.
\newblock {\em Journal of the Royal Statistical Society. Series B, Statistical
  methodology}, 79(1):247.

\bibitem[Friedman et~al., 2008]{friedman2008sparse}
Friedman, J., Hastie, T., and Tibshirani, R. (2008).
\newblock Sparse inverse covariance estimation with the graphical lasso.
\newblock {\em Biostatistics}, 9(3):432--441.

\bibitem[Han et~al., 2015]{han2015direct}
Han, F., Lu, H., and Liu, H. (2015).
\newblock A direct estimation of high dimensional stationary vector
  autoregressions.
\newblock {\em Journal of Machine Learning Research}.

\bibitem[Hsu et~al., 2008]{hsu2008subset}
Hsu, N.-J., Hung, H.-L., and Chang, Y.-M. (2008).
\newblock Subset selection for vector autoregressive processes using lasso.
\newblock {\em Computational Statistics \& Data Analysis}, 52(7):3645--3657.

\bibitem[Javanmard and Montanari, 2014]{javanmard2014confidence}
Javanmard, A. and Montanari, A. (2014).
\newblock Confidence intervals and hypothesis testing for high-dimensional
  regression.
\newblock {\em The Journal of Machine Learning Research}, 15(1):2869--2909.

\bibitem[Juselius, 2006]{juselius2006cointegrated}
Juselius, K. (2006).
\newblock {\em The cointegrated VAR model: methodology and applications}.
\newblock Oxford university press.

\bibitem[Li and Zhu, 2008]{li20081}
Li, Y. and Zhu, J. (2008).
\newblock L 1-norm quantile regression.
\newblock {\em Journal of Computational and Graphical Statistics},
  17(1):163--185.

\bibitem[Liu and Zhang, 2021]{liu2021robust}
Liu, L. and Zhang, D. (2021).
\newblock Robust estimation of high-dimensional vector autoregressive models.
\newblock {\em arXiv preprint arXiv:2109.10354}.

\bibitem[Loh, 2017]{loh2017statistical}
Loh, P.-L. (2017).
\newblock Statistical consistency and asymptotic normality for high-dimensional
  robust $ m $-estimators.
\newblock {\em The Annals of Statistics}, 45(2):866--896.

\bibitem[Loh, 2018]{loh2018scale}
Loh, P.-L. (2018).
\newblock Scale calibration for high-dimensional robust regression.
\newblock {\em arXiv preprint arXiv:1811.02096}.

\bibitem[Loh and Wainwright, 2012]{loh2012high}
Loh, P.-L. and Wainwright, M.~J. (2012).
\newblock High-dimensional regression with noisy and missing data: Provable
  guarantees with nonconvexity.
\newblock {\em The Annals of Statistics}, pages 1637--1664.

\bibitem[Loh and Wainwright, 2017]{loh2017support}
Loh, P.-L. and Wainwright, M.~J. (2017).
\newblock Support recovery without incoherence: A case for nonconvex
  regularization.
\newblock {\em The Annals of Statistics}, 45(6):2455--2482.

\bibitem[Massart, 2007]{massart2007concentration}
Massart, P. (2007).
\newblock {\em Concentration inequalities and model selection}, volume~6.
\newblock Springer.

\bibitem[Meinshausen and B{\"u}hlmann, 2006]{meinshausen2006high}
Meinshausen, N. and B{\"u}hlmann, P. (2006).
\newblock High-dimensional graphs and variable selection with the lasso.
\newblock {\em The annals of statistics}, 34(3):1436--1462.

\bibitem[Merlev{\`e}de et~al., 2009]{merlevede2009bernstein}
Merlev{\`e}de, F., Peligrad, M., Rio, E., et~al. (2009).
\newblock Bernstein inequality and moderate deviations under strong mixing
  conditions.
\newblock In {\em High dimensional probability V: the Luminy volume}, pages
  273--292. Institute of Mathematical Statistics.

\bibitem[Nardi and Rinaldo, 2011]{nardi2011autoregressive}
Nardi, Y. and Rinaldo, A. (2011).
\newblock Autoregressive process modeling via the lasso procedure.
\newblock {\em Journal of Multivariate Analysis}, 102(3):528--549.

\bibitem[Negahban and Wainwright, 2011]{negahban2011estimation}
Negahban, S. and Wainwright, M.~J. (2011).
\newblock Estimation of (near) low-rank matrices with noise and
  high-dimensional scaling.
\newblock {\em The Annals of Statistics}, pages 1069--1097.

\bibitem[Negahban et~al., 2012]{negahban2012unified}
Negahban, S.~N., Ravikumar, P., Wainwright, M.~J., and Yu, B. (2012).
\newblock A unified framework for high-dimensional analysis of $m$-estimators
  with decomposable regularizers.
\newblock {\em Statistical science}, 27(4):538--557.

\bibitem[Pan et~al., 2021]{pan2021iteratively}
Pan, X., Sun, Q., and Zhou, W.-X. (2021).
\newblock Iteratively reweighted $\ell$1-penalized robust regression.
\newblock {\em Electronic Journal of Statistics}, 15(1):3287--3348.

\bibitem[Shan, 2005]{shan2005does}
Shan, J. (2005).
\newblock Does financial development 'lead' economic growth? a vector
  auto-regression appraisal.
\newblock {\em Applied Economics}, 37(12):1353--1367.

\bibitem[Tibshirani, 1996]{tibshirani1996regression}
Tibshirani, R. (1996).
\newblock Regression shrinkage and selection via the lasso.
\newblock {\em Journal of the Royal Statistical Society: Series B
  (Methodological)}, 58(1):267--288.

\bibitem[Van~de Geer et~al., 2014]{van2014asymptotically}
Van~de Geer, S., B{\"u}hlmann, P., Ritov, Y., and Dezeure, R. (2014).
\newblock On asymptotically optimal confidence regions and tests for
  high-dimensional models.
\newblock {\em The Annals of Statistics}, 42(3):1166--1202.

\bibitem[Wainwright, 2019]{wainwright2019high}
Wainwright, M.~J. (2019).
\newblock {\em High-dimensional statistics: A non-asymptotic viewpoint},
  volume~48.
\newblock Cambridge University Press.

\bibitem[Wu, 2005]{wu2005nonlinear}
Wu, W.~B. (2005).
\newblock Nonlinear system theory: Another look at dependence.
\newblock {\em Proceedings of the National Academy of Sciences},
  102(40):14150--14154.

\bibitem[Wu and Liu, 2009]{wu2009variable}
Wu, Y. and Liu, Y. (2009).
\newblock Variable selection in quantile regression.
\newblock {\em Statistica Sinica}, pages 801--817.

\bibitem[Wu and Zhou, 2010]{wu2010var}
Wu, Y. and Zhou, X. (2010).
\newblock Var models: Estimation, inferences, and applications.
\newblock In {\em Handbook of Quantitative Finance and Risk Management}, pages
  1391--1398. Springer.

\bibitem[Yuan and Lin, 2007]{yuan2007model}
Yuan, M. and Lin, Y. (2007).
\newblock Model selection and estimation in the gaussian graphical model.
\newblock {\em Biometrika}, 94(1):19--35.

\bibitem[Zhang and Zhang, 2014]{zhang2014confidence}
Zhang, C.-H. and Zhang, S.~S. (2014).
\newblock Confidence intervals for low dimensional parameters in high
  dimensional linear models.
\newblock {\em Journal of the Royal Statistical Society: Series B (Statistical
  Methodology)}, 76(1):217--242.

\bibitem[Zhang, 2019]{zhang2019robust}
Zhang, D. (2019).
\newblock Robust estimation of the mean and covariance matrix for high
  dimensional time series.
\newblock {\em Statistica Sinica}, to appear.

\bibitem[Zhang and Wu, 2017]{zhang2017gaussian}
Zhang, D. and Wu, W.~B. (2017).
\newblock Gaussian approximation for high dimensional time series.
\newblock {\em The Annals of Statistics}, 45(5):1895--1919.

\bibitem[Zhang and Cheng, 2017]{zhang2017simultaneous}
Zhang, X. and Cheng, G. (2017).
\newblock Simultaneous inference for high-dimensional linear models.
\newblock {\em Journal of the American Statistical Association},
  112(518):757--768.

\bibitem[Zou, 2006]{zou2006adaptive}
Zou, H. (2006).
\newblock The adaptive lasso and its oracle properties.
\newblock {\em Journal of the American statistical association},
  101(476):1418--1429.

\end{thebibliography}
\begin{bibliography}{inference}
\end{bibliography}

\end{document}